\documentclass{imsart}
\RequirePackage[OT1]{fontenc}
\RequirePackage{amsthm,amsmath,natbib}
\RequirePackage[colorlinks,citecolor=blue,urlcolor=blue]{hyperref}

\usepackage{geometry}
\RequirePackage{amssymb,mathtools }
\usepackage{booktabs}
\usepackage[table]{xcolor}
\usepackage{caption}
\usepackage{algorithm,algpseudocode,float}
\usepackage{eurosym}
\usepackage{tikz}
\usetikzlibrary{fit,positioning,arrows,calc}
\tikzset{
  main/.style={circle, minimum size = 5mm, thick, draw =black!80, node distance = 10mm},
  connect/.style={-latex, thick},
  box/.style={rectangle, draw=black!100},
  edge/.style={->,> = latex'}
}
\bibliographystyle{abbrvnat}
\setcitestyle{authoryear,open={(},close={)}}


\startlocaldefs
\numberwithin{equation}{section}
\theoremstyle{plain}

\theoremstyle{remark}
\newtheorem{rem}{Remark}[section]
\newtheorem{lemma}{Lemma}[section]
\newtheorem{proposition}{Proposition}
 
\endlocaldefs

\begin{document}

\begin{frontmatter}
\title{Kalman filter demystified: from intuition to probabilistic graphical model to real case in financial markets}
\runtitle{Kalman filter demystified}

\begin{aug}
\author{\fnms{Eric} \snm{Benhamou}\thanksref{a,b,e1}}

\address[a]{A.I. SQUARE CONNECT, \\ 35 Boulevard d'Inkermann 92200 Neuilly sur Seine, France}
\address[b]{Affiliated researcher to LAMSADE (UMR CNRS 7243)\\ and QMI (Quantitative Management Initiative) chair,\\ Université Paris Dauphine, \\ Place du Maréchal de Lattre de Tassigny,75016 Paris, France}
\thankstext{e1}{eric.benhamou@aisquareconnect.com, eric.benhamou@dauphine.eu}

\runauthor{E. Benhamou}
\affiliation{A.I. SQUARE CONNECT, LAMSADE and QMI, Paris Dauphine University}
\end{aug}

\begin{abstract}
In this paper, we revisit the Kalman filter theory. After giving the intuition on a simplified financial markets example, we revisit the maths underlying it. We then show that Kalman filter can be presented in a very different fashion using graphical models. This enables us to establish the connection between Kalman filter and Hidden Markov Models. We then look at their application in financial markets and provide various intuitions in terms of their applicability for complex systems such as financial markets. Although this paper has been written more like a self contained work connecting Kalman filter to Hidden Markov Models and hence revisiting well known and establish results, it contains new results and brings additional contributions to the field. First, leveraging on the link between Kalman filter and HMM, it gives new algorithms for inference for extended Kalman filters. Second, it presents an alternative to the traditional estimation of parameters using EM algorithm thanks to the usage of CMA-ES optimization. Third, it examines the application of Kalman filter and its Hidden Markov models version to financial markets, providing various dynamics assumptions and tests. We conclude by connecting Kalman filter approach to trend following technical analysis system and showing their superior performances for trend following detection.
\end{abstract}

\begin{keyword}
\kwd{kalman filter}
\kwd{hidden markov models}
\kwd{graphical model}
\kwd{CMA ES}
\kwd{trend detection}
\kwd{systematic trading}
\end{keyword}

\end{frontmatter}


\section{Introduction}
One of the most challenging question in finance is to be able from past observation to make some meaningful forecast. Obviously, as no one is God, no one would ultimately be able to predict with one hundred percent accuracy where for instance the price of a stock will end. However, between a pure blind guess and a perfect accuracy forecast, there is room for improvement and reasoning. If in addition, we are able somehow to model the dynamics of the stock price and factor in some noise due to unpredictable human behavior, we can leverage this model information to make an informed estimate, in a sense an educated guess that cannot miss the ballpark figure. It will not be 100 percent accurate but it is much better than a pure random appraisal. Indeed, this scientific question of using a model and filtering noise has been extensively examined in various fields: control theory leading to Kalman filter, Markov processes leading to hidden Markov models and lately machine learning using Bayesian probabilistic graphical models. In this work, we revisit these three fields to give a didactic presentation of this three approaches and emphasizing the profound connection between Kalman filter and Hidden Markov Models (HMM) thanks to Bayesian Probabilistic Graphical models. We show in particular that the derivation of Kalman filter equations is much easier once we use the more general framework of Bayesian Probabilistic Graphical models. In particular, we show that Kalman filter equations are just a rewriting of the sum product algorithm (also referred to as the Viterbi algorithm for HMM). We then provide various dynamics assumptions for financial markets and comment their overall performance. We compare these trading system with simpler moving average trend detection trading systems and show that they provide better performances.

\section{Intuition}\label{TheIntuition}
In a nutshell, a Kalman filter is a method for predicting the future state of a system based on previous ones. \\
It was discovered in the early 1960's when Kalman introduced the method as a different approach to statistical prediction and filtering (see \cite{Kalman_1960} and \cite{Kalman_1961}). The idea is to estimate the state of a noisy system. Historically, it was build to monitor the position and the speed of an orbiting vehicle. However, this can be applied to non physical system like economic system. In this particular settings, the state will be described by estimated equations rather than exact physical laws. We will emphasize this point in the section \ref{InPractice} dealing with practical application in finance.

To build an intuition, let us walk through a simple example - if you are given the data with green dots (that represent, by the way, the predicted Kalman filtered predicted price of the S\&P500 index on October 17, 2018), it seems reasonable to predict that the orange dot should follow, by simply extrapolating the trend from previous samples and inferring some periodicity of the signal. However, how assured would you be anticipating the dark red point on the right (that is just 10 minutes later)? Moreover, how certain would you be about predicting the orange point, if you were given the black series (that represent the real prices) instead of the green one?\\

From this simple example, we can learn three important rules:
\begin{itemize}
\item Predicting far ahead in the future is less reliable than near ahead.
\item The reliability of your data (the noise) influences the reliability of your forecast. 
\item It's not good enough to give a prediction - you also want to provide a confidence interval.
\end{itemize}

\begin{center}
\includegraphics[width=10cm]{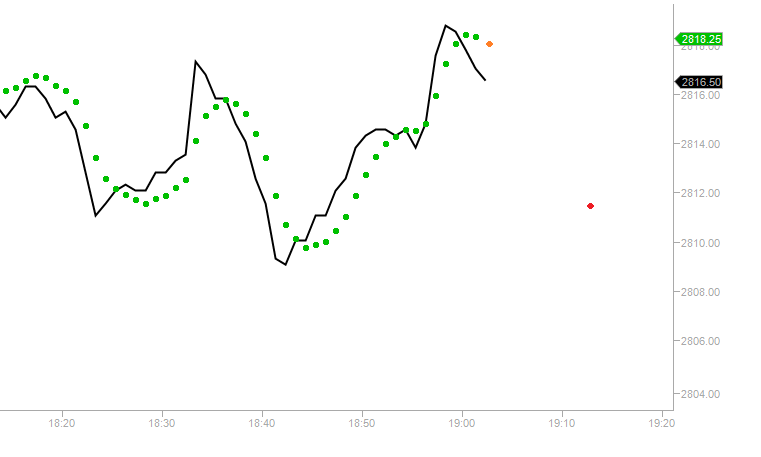}
\captionof{figure}{A simple example from financial markets to build an intuition of Kalman filter}
\label{fig1}
\end{center}

From this simple presentation, we can conceptualize the approach and explain what a Kalman filter is.
First, we need a \textbf{state}. A state is a representation of all the parameters needed to describe the current system and perform its prediction. In our previous example, we can define the state by two numbers: the price of the S\&P 500 at time $t$, denoted by $p(t)$ and the slope of the price at time $t$, $s(t)$. In general, the state is a \textbf{vector}, commonly denoted $\bf{x}$. Of course, you can include many more parameters if you wish to accommodate more complex systems.

Second, we need a \textbf{model}. The model describes how the system behaves. It provides the underlying equations that rules our system. It may be an ideal representation or a simplified version of our system. It may be justified by some physical laws (in the case for instance of a Kalman filter for a GPS system), or by some empirical analysis (in the case of finance). In the standard Kalman filter, the model is always a linear function of the state. In extended Kalman filter, it is a non linear function of the state. In our previous example, our model could be:
\begin{itemize}
\item the price at time $t$, $p(t)$, is obtained as the previous price $p(t-1)$ plus the slope at time $t-1$: $s(t-1)$
\begin{equation}
p(t) = p(t-1) + s(t-1)
\end{equation}

\item the slope is evolving over time with a periodic sinusoidal function $\psi(t) = a \sin(b t + c )$
\begin{equation}
s(t) = s(t-1) + \psi(t) 
\end{equation}
\end{itemize}

Usually, one expresses this model in matrix form to make it simple. We have
\begin{eqnarray}
 \underbrace{\left(\begin{array}{c} p(t) \\ s(t)\\ \end{array} \right) }_{{\bf{x}}_{t}} & = &
 \underbrace{\left(\begin{array}{cc} 1  & 1\\ 0 &  1 \\ \end{array} \right)}_{\bf{F}} \cdot
 \underbrace{\left(\begin{array}{c} p(t-1)\\ s(t-1)\\ \end{array} \right)}_{{\bf{x}}_{t-1}} + \underbrace{
\left(\begin{array}{cc} 0 & 0 \\ 0 & 1 \end{array} \right) \cdot \left(\begin{array}{c} 0 \\ \psi(t) \end{array} \right)}_{\bf{B}_t \cdot {\bf{u}}_{t} } \label{state_equation_1}
\end{eqnarray}

In the above equation, following standard practice, we have denoted by ${{\bf{x}}_{t}}= \left( p(t), s(t) \right)^T$, the \textbf{state vector} that combines the price $p(t)$ and the slope $s(t)$. We have denoted by ${\bf{u}}_{t} =\left( 0, \psi(t) \right)^T$, the \textbf{control vector}. The idea underlying the control vector is that it can control somehow the dynamic of the state vector. The previous equation (\ref{state_equation_1}) is called the \textbf{state equation} and writes 
\begin{equation}
{\bf{x}}_{t} = {\bf{F}}  \cdot {\bf{x}}_{t-1} + {\bf{B}}_t \cdot {\bf{u}}_{t}
\end{equation}
Of course, our model is too simple, else we wouldn't need a Kalman Filter! To make it more realistic, we add an additional term - the \textbf{noise process}, $\bf{w_t}$ that represents anything else that we do not measure in our model. Although we don't know the actual value of the noise, we assume we can estimate how "noisy" the noise is and can make assumptions about the noise distribution. In standard Kalman modeling, we keep things simple and assumed this noise to be normally distributed. To measure how "noisy" the noise is, we refer to the variance of the normal distribution underpinning the noise $\bf{w_{t-1}}$. The \textbf{state equation} is modified into:

\begin{equation}\label{state_equation}
{\bf{x}}_{t} = {\bf{F}} {\bf{x}}_{t-1} + {\bf{B}}_t \cdot {\bf{u}}_{t} + {\bf{w}}_{t-1}
\end{equation}

Third, we need to use \textbf{measurement} to improve our model. When new data arrived, we would like to change the value of our model parameters to reflect our improved understanding of the state dynamic. We would proceed in two steps: make a prediction and then a correction based on what has happened in time $t$. There is  a subtleties here. What we measure does not have to be exactly the states. It has to be related but not the same. For instance, in a Kalman filter system in a GPS, the state could be the GPS's acceleration, speed and position, while what we measure may be the car's position and the wheel velocity. In our financial example, we may only be able to observe the price but not the slope. Our measure would therefore be limited to the price in this particular setting. Hence, the measurement would be represented as follows:

\begin{equation}
\text{measurement} = 
\left(\begin{array}{cc} 0 & 1 \end{array} \right) \cdot
\left(\begin{array}{c} p(t) \\ s(t) \end{array} \right) 
\end{equation}

In general, the measurement should be a vector, denoted by $\bf{z}$, as we may have more than one number for our measurement. Measurements should also be noisy to reflect that we do not measure perfectly. Hence, the equation specify measurement would write:
\begin{equation}
{\bf{z}}_t = {\bf{H}} {\bf{x}}_t +{\bf{v}}_t
\end{equation}
Where ${\bf{v}}_t$ is the \textbf{measurement noise}, and $H$ is a matrix with rows equal to the number of measurement variables and columns equal to the number of state variables.

Fourth, we need to work on \textbf{prediction} as we have set up the scene in terms of modeling. This is the key part in Kalman filter.
To build intuition, let us assume, for the time being, noise is equal to zero. 
This implies our model is perfect. How would you predict the state at time $t$, ${\bf{x}}_{t}$ knowing the previous state (at time $t-1$)? 
This is dammed simple. 
Just use the state equation and compute our prediction of the state, denoted by ${\bf{\hat x}}_{t }$ as:
\begin{equation}
{\bf{\hat x}}_{t}= {\bf{F}} {\bf{x}}_{t-1} + {\bf{B}}_t  {\bf{u}}_{t-1}
\end{equation}

Hold on! In the mean time, we do some measurements reflected by our measurement equation:
\begin{equation}
{\bf{\hat z}}_{t} = {\bf{H}} {\bf{\hat x}}_{t}
\end{equation}

These are measurements so what we measure might be slightly different. Intuitively, the measurement error computed as ${\bf{z}}_{t} - {\bf{\hat z}}_{t}$  would be slightly different from 0.
We would call this difference ${\bf{y}} = {\bf{z}}_{t} - {\bf{\hat z}}_{t}$, the \textbf{innovation}. It represents the bias of our measurement estimation. Obviously, if everything were perfect, the innovation should be zero, as we do not have any bias! To incorporate the innovation in our model, we would add it to our state. But we would not add blindly in the state equation. We should multiply this innovation by a matrix factor that reflects somehow the correlation between our measurement bias and our state bias. Hence, we would compute a \textbf{correction} to our state equation as follows:
\begin{equation}
{\bf{\hat x}}_{t} = {\bf{F}} {\bf{x}}_{t-1}  + {\bf{B}}_{t}  {\bf{u}}_{t} + \mathbf{K}_t \bf{y}
\end{equation}

The matrix $\mathbf{K}_t$ is called the \textbf{Kalman gain}. We would see in section \ref{TheMaths} how to determine it but for the sake of intuition of Kalman filter, we skip this details. 
What really matters is to build an intuition. We can easily understand the following rules of thumb:

\begin{enumerate}
\item The more noisy our measurement is, the less precise it is. Hence, the innovation that represents the bias in our measurement may not be real innovation but rather an artifact of the measurement noise. Noise or uncertainty is directly captured by variance. Thus the larger the measurement variance noise, the lower the Kalman gain should be.
\item The more noisy our process state is, the more important the innovation should be taken into account. Hence, the larger the process state variance, the larger the Kalman gain should be.
\end{enumerate}

Summarizing these two intuitions, we would expect the Kalman gain to be:
\begin{equation}
\mathbf{K}_t \sim \frac{\text{Process Noise}}{\text{Measurement Noise}} 
\end{equation}

Fifth, noise itself should be modeled as we have no clue about real noise. Noise in our model is represented by variance, or more precisely by the covariance of our state.
Traditionally, we denote by $\mathbf{P}_t$ the covariance matrix of the state:
\begin{equation}
\mathbf{P}_t = \operatorname{Cov}({\bf{\hat x}}_t)
\end{equation}

Using the state equation, and using the fact that for any matrix $\mathbf{A}$, $\operatorname{Cov}( \mathbf{A} .\mathbf{X}) = \mathbf{A} \operatorname{Cov}(\mathbf{X}) \mathbf{A}^\top$, we can derive $\mathbf{P}_t$ from its previous state:

\begin{equation}\mathbf{P}_t = \operatorname{Cov}({\bf{\hat x}}_{t}) = \\
 \operatorname{Cov}(\mathbf{F} {\bf{x}}_{t-1}) = \\
\mathbf{F} \operatorname{Cov}({\bf{x}}_{t-1}) \mathbf{F}^\top = \\
\mathbf{F} \mathbf{P}_{t-1} \mathbf{F}^\top \label{simple_covariance_eq}
\end{equation} 

At this stage, we can make the model even more realistic. Previous equation (\ref{simple_covariance_eq}) assumes that our process model is perfect. But keep in mind that we temporarily assumed no noise. However, real world is more complex and we should now use our real state equation (\ref{state_equation}). 
In particular, if we assume that the state noise $\bf{w}_t$ is independent from the state $\bf{x}_t$ and if the state noise $\bf{w}_t$ is assumed to be distributed as a normal distribution with covariance matrix $\mathbf{Q}_t$, the prediction for our state covariance matrix becomes:

\begin{equation}
\mathbf{P}_t = \mathbf{F} \mathbf{P}_{t-1} \mathrm{F}^\top + \mathbf{Q}_{t-1}
\end{equation}

Likewise, we could compute the covariance matrix for the measurement and model the evolution of noise that we will denote by $\mathbf{S}_t$. 
The last source of noise in our system is the measurement. Following the same logic we obtain a covariance matrix for ${\bf{\hat z}}_{t}$ and denoting by ${\bf v}_t$ the normally distributed independent noise for our measurements, with covariance matrix given by $\mathbf{R}_t$, we get 

\begin{equation}
\mathbf{S}_{t} = \operatorname{Cov}({\bf{\hat z}}_{t}) = \operatorname{Cov}(\mathbf{H} {\bf{\hat x}}_{t-1} + {\bf v}_{t-1}) = \mathbf{H} \mathbf{P}_{t-1} \mathbf{H}^\top + \mathbf{R}_{t-1}
\end{equation}

Let us come back to our Kalman gain computation. Intuitively, we found that it should be larger for larger process noise and lower for large measurement noise, leading to an equation of the type 
$\mathbf{K}_t \sim \frac{\text{Process Noise}}{\text{Measurement Noise}}$. As process noise is measured $\mathbf{P}_t$ and measurement noise by $\mathbf{S}_t$, we should get something like
$ \mathbf{K}_t = \mathbf{P}_t \mathbf{S}_{t}^{-1}$.  We shall see in the next section that the real equation for the Kalman gain is closely related to our intuition and given by
\begin{equation}
\mathbf{K}_t = \mathbf{P}_t \mathbf{H}^{\top} \mathbf{S}_{t}^{-1}
\end{equation}

\section{The maths}\label{TheMaths}
In this section, we will prove rigorously all the equations provided in section \ref{TheIntuition}. 
To make notations more robust and make the difference for a time dependent vector or matrix $\mathbf{A}$ between its value at time $t$, knowing information up to time $t-1$, 
and its value at time $t$, knowing information up to time $t$, we will denote these two different value $\mathbf{A}_{t | t-1}$ and $\mathbf{A}_{t | t}$. 

\subsection{Kalman filter modeling assumption}\label{assumptions}
The Kalman filter model assumes the true state at time $t$ is obtained from the state at the previous time $t-1$ by the following state equation

\begin{equation} \label{state_equation_second_time}
\mathbf{x}_t = \mathbf{F}_t \mathbf{x}_{t-1} + \mathbf{B}_t \mathbf{u}_t + \mathbf{w}_t 
\end{equation}

The noise process $\mathbf{w}_t$ is assumed to follow a multi dimensional normal distribution with zero mean and covariance matrix given by $\mathbf{Q}_t$: $\mathbf{w}_t \sim \mathcal{N}\left(0, \mathbf{Q}_t\right)$.

At time $t$, we make a measurement (or an observation) $\mathbf{z}_t$ of the true state $\mathbf{x}_t$ according to our measurement equation:

\begin{equation}\label{measurement_equation}
\mathbf{z}_t = \mathbf{H}_t \mathbf{x}_t + \mathbf{v}_t
\end{equation}

Like for the state equation, we assume that the observation noise $\mathbf{v}_t$ follows a multi dimensional normal distribution with zero mean and covariance matrix given by $\mathbf{R}_t$: $\mathbf{v}_t \sim \mathcal{N}\left(0, \mathbf{R}_t\right)$. In addition, the initial state, and noise vectors at each step ${\mathbf{x}_0, \mathbf{w}_1, \ldots, \mathbf{w}_t, \mathbf{v}_1, \ldots, \mathbf{v}_t}$ are assumed to be all mutually independent.

\subsection{Properties}
It is immediate to derive the prediction phase as the estimated value of the state $\hat{\mathbf{x}}_{t\mid t-1}$ is simply the expectation of the state equation (\ref{state_equation}). Similarly, it is trivial to derive the estimated value for the error covariance. This provides the prediction phase that is summarized below

\begin{align}
\hat{\mathbf{x}}_{t\mid t-1} &= \mathbf{F}_t\hat{\mathbf{x}}_{t-1\mid t-1} + \mathbf{B}_t \mathbf{u}_t  		& \scriptstyle{\text{(Predicted state estimate)}}   \label{predict_eq1}\\
\mathbf{P}_{t\mid t-1} &= \mathbf{F}_t \mathbf{P}_{t-1\mid t-1} \mathbf{F}_t^\mathrm{T} + \mathbf{Q}_t 	& \scriptstyle{\text{(Predicted error covariance)}} \label{predict_eq2}
\end{align}

The correction phase that consists in incorporating the measurements to correct our prediction is slightly more tricky. It consists in the following equations:
\begin{align}
\tilde{\mathbf{y}}_t	&=\mathbf{z}_t - \mathbf{H}_t\hat{\mathbf{x}}_{t\mid t-1}  						& \scriptstyle{\text{(Measurement pre-fit residua)}} \label{correct_eq1} \\
\mathbf{S}_t 			&=\mathbf{R}_t + \mathbf{H}_t \mathbf{P}_{t\mid t-1} \mathbf{H}_t^\mathrm{T}		& \scriptstyle{\text{(Innovation covariance)}} \label{correct_eq2} \\
\mathbf{K}_t 			&=\mathbf{P}_{t\mid t-1}\mathbf{H}_t^\mathrm{T} \mathbf{S}_t^{-1}				& \scriptstyle{\text{(Optimal Kalman gain)}} \label{correct_eq3} \\
\hat{\mathbf{x}}_{t\mid t} &=\hat{\mathbf{x}}_{t\mid t-1} + \mathbf{K}_t\tilde{\mathbf{y}}_t				& \scriptstyle{\text{(Updated state estimate)}} \label{correct_eq4} \\
\mathbf{P}_{t|t} 		&=(\mathbf{I} - \mathbf{K}_t \mathbf{H}_t )\mathbf{P}_{t|t-1}						& \scriptstyle{\text{(Updated estimate covariance)}} \label{reduced_form} \\
\tilde{\mathbf{y}}_{t\mid t} &=\mathbf{z}_t - \mathbf{H}_t\hat{\mathbf{x}}_{t\mid t}						& \scriptstyle{\text{(Measurement post-fit residual)}} \label{correct_eq6} 
\end{align}

\begin{proposition}
Under the assumptions stated in subsection \ref{assumptions}, the Kalman gain that minimizes the expected squared error defined as the square of the Euclidean (or $L_2$) norm of the error vector, representing the error between the true state and our estimated state : $\mathbf{x}_t - \hat{\mathbf{x}}_{t\mid t}$ is given by 
\begin{equation}
\mathbf{K}_t 	=\mathbf{P}_{t\mid t-1}\mathbf{H}_t^\mathrm{T} \mathbf{S}_t^{-1}
\end{equation}
It is referred to as the optimal Kalman gain. For any Kalman gain (and not necessarily the optimal one), the estimate covariance updates as follows:
\begin{equation}\label{Joseph_form_0}
\mathbf{P}_{t\mid t} = (\mathbf{I} - \mathbf{K}_t \mathbf{H}_t) \mathbf{P}_{t\mid t-1} (\mathbf{I} - \mathbf{K}_t \mathbf{H}_t)^\mathrm{T} + \mathbf{K}_t \mathbf{R}_t \mathbf{K}_t^\mathrm{T} 
\end{equation}

For the optimal Kalman filter, this reduces to the usual Kalman filter Updated estimate covariance as follows:
\begin{equation}\label{reduced_form_0}
\mathbf{P}_{t|t} =(\mathbf{I} - \mathbf{K}_t \mathbf{H}_t )\mathbf{P}_{t|t-1}		
\end{equation}
\end{proposition}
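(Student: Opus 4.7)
The plan is to derive everything from a single error recursion. First I would introduce the posterior estimation error $\mathbf{e}_{t\mid t} := \mathbf{x}_t - \hat{\mathbf{x}}_{t\mid t}$ and the prior error $\mathbf{e}_{t\mid t-1} := \mathbf{x}_t - \hat{\mathbf{x}}_{t\mid t-1}$, and substitute the measurement equation (\ref{measurement_equation}) into the update (\ref{correct_eq4}) to obtain the identity
\begin{equation*}
\mathbf{e}_{t\mid t} = (\mathbf{I} - \mathbf{K}_t \mathbf{H}_t)\,\mathbf{e}_{t\mid t-1} - \mathbf{K}_t \mathbf{v}_t,
\end{equation*}
valid for \emph{any} choice of gain $\mathbf{K}_t$. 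Because $\mathbf{v}_t$ is zero-mean and, by the mutual-independence assumption of subsection \ref{assumptions}, independent of $\mathbf{e}_{t\mid t-1}$ (the latter being a measurable function of $\mathbf{x}_0$, $\mathbf{w}_{1:t}$ and $\mathbf{v}_{1:t-1}$), taking the covariance of both sides and discarding the cross terms yields the Joseph form (\ref{Joseph_form_0}) immediately.

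Second, to identify the optimal Kalman gain I would minimize the expected squared error, which equals $\operatorname{tr}(\mathbf{P}_{t\mid t})$. Viewing the Joseph form as a matrix-valued quadratic in $\mathbf{K}_t$, I would differentiate the trace using the standard identities $\partial_{\mathbf{K}}\operatorname{tr}(\mathbf{K}\mathbf{A}\mathbf{K}^{\mathrm{T}}) = \mathbf{K}(\mathbf{A}+\mathbf{A}^{\mathrm{T}})$ and $\partial_{\mathbf{K}}\operatorname{tr}(\mathbf{K}\mathbf{B}) = \mathbf{B}^{\mathrm{T}}$. Because $\mathbf{P}_{t\mid t-1}$ and $\mathbf{R}_t$ are symmetric, the first-order condition collapses to
\begin{equation*}
\mathbf{K}_t\bigl(\mathbf{H}_t\mathbf{P}_{t\mid t-1}\mathbf{H}_t^{\mathrm{T}} + \mathbf{R}_t\bigr) = \mathbf{P}_{t\mid t-1}\mathbf{H}_t^{\mathrm{T}},
\end{equation*}
that is, $\mathbf{K}_t \mathbf{S}_t = \mathbf{P}_{t\mid t-1}\mathbf{H}_t^{\mathrm{T}}$, from which $\mathbf{K}_t = \mathbf{P}_{t\mid t-1}\mathbf{H}_t^{\mathrm{T}}\mathbf{S}_t^{-1}$ drops out. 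Positive-definiteness of $\mathbf{S}_t$ makes the quadratic strictly convex, so this critical point is the unique global minimizer.

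Finally, I would obtain the reduced covariance update (\ref{reduced_form_0}) by substituting the optimal gain back into the Joseph form. Expanding the product gives
\begin{equation*}
\mathbf{P}_{t\mid t} = \mathbf{P}_{t\mid t-1} - \mathbf{K}_t\mathbf{H}_t\mathbf{P}_{t\mid t-1} - \mathbf{P}_{t\mid t-1}\mathbf{H}_t^{\mathrm{T}}\mathbf{K}_t^{\mathrm{T}} + \mathbf{K}_t\mathbf{S}_t\mathbf{K}_t^{\mathrm{T}},
\end{equation*}
and the defining relation $\mathbf{K}_t\mathbf{S}_t = \mathbf{P}_{t\mid t-1}\mathbf{H}_t^{\mathrm{T}}$ rewrites $\mathbf{K}_t\mathbf{S}_t\mathbf{K}_t^{\mathrm{T}} = \mathbf{P}_{t\mid t-1}\mathbf{H}_t^{\mathrm{T}}\mathbf{K}_t^{\mathrm{T}}$, cancelling the third term and leaving $(\mathbf{I} - \mathbf{K}_t\mathbf{H}_t)\mathbf{P}_{t\mid t-1}$, as claimed. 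The only real subtlety, and the reason the proof is organised this way, is the insistence on the Joseph form \emph{throughout} the optimisation step: starting from the reduced form would tacitly presuppose that $\mathbf{K}_t$ is already optimal, whereas the Joseph identity is genuinely valid for arbitrary gains and so makes the variational argument well posed; everything else is routine matrix calculus.
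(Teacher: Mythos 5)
Your proposal is correct and follows essentially the same three-step route as the paper's own proof in its appendix: derive the Joseph form from the error recursion $\mathbf{e}_{t\mid t} = (\mathbf{I}-\mathbf{K}_t\mathbf{H}_t)\mathbf{e}_{t\mid t-1} - \mathbf{K}_t\mathbf{v}_t$ using independence of $\mathbf{v}_t$, minimize $\operatorname{tr}(\mathbf{P}_{t\mid t})$ over $\mathbf{K}_t$ to obtain the optimal gain, and substitute back via $\mathbf{K}_t\mathbf{S}_t\mathbf{K}_t^{\mathrm{T}} = \mathbf{P}_{t\mid t-1}\mathbf{H}_t^{\mathrm{T}}\mathbf{K}_t^{\mathrm{T}}$ to reach the reduced form. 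Your added remarks on strict convexity of the trace objective and on why the Joseph form must be used throughout the variational step are small but welcome refinements over the paper's version.
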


\begin{proof}
The derivation of all these formulae consists in three steps. First, we derive the posteriori estimate of the covariance matrix. Then we compute the Kalman gain as the minimum square error estimator. Third, we show that in the particular case of the optimal gain the equation for the updated estimate, covariance reduces to the formula provided in equation (\ref{reduced_form}). All these steps are provided in appendix \ref{Kalman_filter_proof} to make the reading of this article smooth.
\end{proof}

\section{Kalman as Graphical model}\label{GraphicalKalman}
Historically, Hidden Markov Model (HMM) and Kalman filter were developed in distinct and unconnected research communities. Hence, their close relationship has not always been widely emphasized and appreciated. Part of the explanation lies also to the fact that the general framework for unifying these two approaches, namely graphical models came much later than HMM and Kalman filter. Without Bayesian graphical framework, the two algorithms underlying the inference calculation look rather different and unrelated. However, their difference is simply a consequence of the differences between discrete and continuous hidden variables and more specifically between multinomial and normal distribution. These details, important as they may be in practice, should not obscure us from the fundamental similarity between these two models. As we shall see in this section, the inference procedure for the state space model (SSM) shall prove us shortly that HMM and Kalman filter's model are cousin and share the same underlying graphical model structure, namely a hiddent state space variable and an observable variable. The interest of using Bayesian Probabilistic Graphical model is multiple. First, it emphasizes the general graphical model architecture underpinning both HMM and Kalman filter. Second, it provides modern computational tools used commonly in machine learning to do the inference calculation. It shows how to generalize Kalman filter in the case of non Gaussian assumptions. 
It is interesting to realize that graphical models have been the marriage between probability theory and graph theory. 

They provide a natural tool for dealing with two problems that occur throughout applied mathematics and engineering – uncertainty and complexity – and in particular they are playing an increasingly important role in the design and analysis of machine learning algorithms.\\\\

From a literature point of view, Hidden Markov models were discussed as early as \cite{Rabiner_1986}, and expanded on in \cite{Rabiner_1989}. 
The first temporal extension of probabilistic graphical models is due to \cite{Dean_1989}, who also coined the term dynamic Bayesian network. Much work has been done on defining various representations that are based on hidden Markov models or on dynamic Bayesian networks; 
these include generalizations of the basic framework, or special cases that allow more tractable inference. 
Examples include mixed memory Markov models (see \cite{Saul_1999}); variable-duration HMMs (\cite{Rabiner_1989}) and their extension segment models (\cite{Ostendorf_1996}); factorial HMMs (\cite{Ghahramani_Jordan_1994}); and hierarchical HMMs (\cite{Fine_1998} and \cite{Bui_2001}).
\cite{Smyth_1997} is a review paper that was influential in providing a clear exposition of the connections between HMMs and DBNs. 
\cite{Murphy_2001} show how hierarchical HMMs can be reduced to DBNs, 
a connection that provided a much faster inference algorithm than previously proposed for this representation. 
\cite{Murphy_2002} (and lately the book \cite{Murphy_2013}) provides an excellent tutorial on the topics 
of dynamic Bayesian networks and related representations as well as the non published book of \cite{Jordan_2016}

\subsection{State space model}
The state space model as emphasized for instance in \cite{Murphy_2013} is described as follows:
\begin{itemize}
\item there is a continuous chain of states denoted by $(\mathbf{x}_t)_{t=1, \ldots, n}$ that are non observable and influenced by past states only through the last realization. In other words, $\mathbf{x}_t$ is a Markov process, meaning 
$\mathbb{P}(\mathbf{x}_t \mid \mathbf{x}_1, \mathbf{x}_2, \ldots, \mathbf{x}_{t-1}) = \mathbb{P}(\mathbf{x}_t \mid \mathbf{x}_{t-1})$
Using graphical model, this can also be stated as given a state at one point in time, the states in the future are conditionally independent of those in the past.
\item for each state, we can observe a space variable denoted by $\mathbf{z}_t$ that depends on the non observable space $\mathbf{x}_t$
\end{itemize}

Compared to the original presentation of the Kalman filter model, this is quite different. We now assume that there is an hidden variable (our state) and we can only measure a space variable. Since at each step, the space variable only depends on the non observable, there is only one arrow or edge between two latent variables horizontal nodes. This model is represented as a graphical model in figure \ref{SSM}.\\

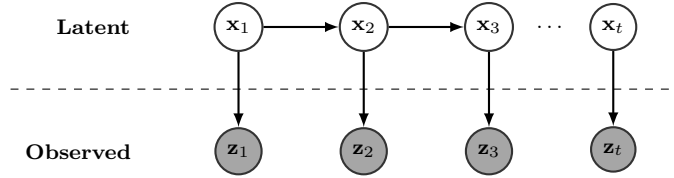
\begin{figure}[h]
\centering

\begin{tikzpicture}
 \node[box,draw=white!100] (Latent) {\textbf{Latent}};
 \node[main] (L1) [right=of Latent] {$\mathbf{x}_1$};
 \node[main] (L2) [right=of L1] {$\mathbf{x}_2$};
 \node[main] (L3) [right=of L2] {$\mathbf{x}_3$};
 \node[main] (Lt) [right=of L3] {$\mathbf{x}_t$};
 \node[main,fill=black!35] (O1) [below=of L1] {$\mathbf{z}_1$};
 \node[main,fill=black!35] (O2) [below=of L2] {$\mathbf{z}_2$};
 \node[main,fill=black!35] (O3) [below=of L3] {$\mathbf{z}_3$};
 \node[main,fill=black!35] (Ot) [below=of Lt] {$\mathbf{z}_t$};
 \node[box,draw=white!100,left=of O1] (Observed) {\textbf{Observed}};

 \path (L1) edge [connect] (L2)
        (L2) edge [connect] (L3)
        (L3) -- node[auto=false]{\ldots} (Lt);

 \path (L1) edge [connect] (O1)
	(L2) edge [connect] (O2)
	(L3) edge [connect] (O3)
	(Lt) edge [connect] (Ot);

 \draw [dashed, shorten >=-1cm, shorten <=-1cm]
      ($(Latent)!0.5!(Observed)$) coordinate (a) -- ($(Lt)!(a)!(Ot)$);
\end{tikzpicture}
\caption{State Space model as a Bayesian Probabilistic Graphical model. Each vertical slice represents a time step. Nodes in white represent unobservable or latent variables  called the states and denoted by $\mathbf{x}_t$ while nodes in gray observable ones and are called the spaces and denoted by $\mathbf{z}_t$. Each arrow indicates that there is a relationship between the arrow originating node and the arrow targeting node. Dots indicate that there is many time steps. The central dot line is to emphasize the fundamental difference between latent and observed variables} \label{SSM}
\end{figure}

Obviously, the graphical model provided in figure \ref{SSM} can host both HMM and Kalman filter model. To make our model tractable, we will need to make additional assumptions. We will emphasize the ones that are common to HMM and Kalman filter models and the ones that differ. We impose the following conditions:

\begin{itemize}
\item The relationship between the state and the space is linear (this is our measurement equation in the Kalman filter and this is common to Kalman filter and HMM models): 
\begin{equation}\label{measurement_equation2}
\mathbf{z}_t = \mathbf{H}_t \mathbf{x}_t + \mathbf{v}_t
\end{equation}
where the noise term $\mathbf{v}_t$ is assumed to follow a multi dimensional normal distribution with zero mean and covariance matrix given by $\mathbf{R}_t$.

\item We will make the simplest choice of dependency between the state at time $t-1$ and $t$ and assume that this is linear (this is our state equation and this is common to Kalman filter and HMM models)
\begin{equation} \label{state_equation2}
\mathbf{x}_t = \mathbf{F}_t \mathbf{x}_{t-1} + \mathbf{B}_t \mathbf{u}_t + \mathbf{w}_t 
\end{equation}
where the noise term $\mathbf{w}_t$ is assumed to follow a multi dimensional normal distribution with zero mean and covariance matrix given by $\mathbf{Q}_t$ and where $\mathbf{B}_t \mathbf{u}_t$ is an additional trend term (that represents our control in Kalman filter). This control term is not common in HMM model but can be added without any difficulty. This results in slightly extended formula that we will signal. These formula are slight improvement of the common one found in the literature. Although, from a theoretical point of view, this control term may seem a futility, it is very important in practice and makes a big difference in numerical applications.

\item We will assume as in the Kalman filter section that the initial state, and noise vectors at each step ${\mathbf{x}_0, \mathbf{w}_1, \ldots, \mathbf{w}_t, \mathbf{v}_1, \ldots, \mathbf{v}_t}$ are all mutually independent (this is common to HMM and Kalman filter models).

\item Last but not least, we assume that the distribution of the state variable $\mathbf{x}_t$ follows a multi dimensional normal distribution. This is \textbf{Kalman filter specific}. For HMM, the state is assumed to follow a multinomial distribution.
\end{itemize}

The above assumptions restrict our initial state space model to a narrower class called the Linear-Gaussian SSM (LG-SSM). This model has been extensively studied and more can be found in \cite{Durbin_2012} for instance.

Before embarking into the inference problem for the SSM, it is interesting to examine the unconditional distribution of the states $\mathbf{x}_{t}$. Using equation (\ref{state_equation2})

The unconditional mean of $\mathbf{x}_{t}$ is computed recursively 
\begin{equation} \label{unconditional_mean}
\prod_{k=2}^t \mathbf{F}_{k} \mathbf{x}_{1} + \sum_{k=2}^t  \prod_{l=k+1}^t \mathbf{F}_{l}  \mathbf{B}_k 
\end{equation}
with the implicit assumption that empty product equals 1: 

$$\prod_{l=t+1}^t \mathbf{F}_{l} = 1.$$

In the specific case of a null control term ($\mathbf{B}_t  = 0$), the latter equation simplifies into
\begin{equation} \label{unconditional_mean_2}
\prod_{k=2}^t \mathbf{F}_{k} \mathbf{x}_{1} 
\end{equation}

The unconditional co-variance is computed as follows $ \mathbf{P}_t = \mathbb{E}[ \mathbf{x}_t \mathbf{x}_t^\mathrm{T}]$. 
Using our assumptions on independence as well as the state equation (\ref{state_equation2}), we can compute it easily as:

\begin{eqnarray} 
\mathbf{P}_{t}  & =& \mathbf{F}_{t} \mathbf{P}_{t-1} \mathbf{F}_{t}^\mathrm{T}+ \mathbf{Q}_t
\end{eqnarray}

This last equation remains unchanged in case of a non zero control term as the control term is deterministic. This last equation provides a dynamic equation for the unconditional variance and is referred to as the \textit{Lyapunov equation}. It is also easy to checked that the unconditional covariance between neighboring states $ \mathbf{x}_{t}$ and $ \mathbf{x}_{t+1}$ is given by $\mathbf{F}_{t+1} \mathbf{P}_{t} \mathbf{F}_{t+1}^\mathrm{T}$.

\subsection{Inference}
The inference problem consists in calculating the posterior probability of the states given an output sequence. This calculation can be done both forward and backward. By forward, we mean that the inference information (called the evidence) at time $t$ consists of the partial sequence of outputs up to time $t$. The backward problem is similar except that the evidence consists of the partial sequence of outputs after time $t$. Using standard graphical model terminology, we distinguish between filtering and smoothing problem.

In filtering, the problem is to calculate an estimate of the state $\mathbf{x}_{t}$ based on a partial output sequence $\mathbf{z}_{0}, \ldots, \mathbf{z}_{t}$. That is, we want to calculate $\mathbb{P}(\mathbf{x}_{t} \mid \mathbf{z}_{0}, \ldots, \mathbf{z}_{t})$. This is often referred to as the alpha recursion in HMM models (see for instance \cite{Rabiner_1986} and \cite{Rabiner_1989}).

Using standard Kalman filter notations, we shall denote by
\begin{eqnarray}
\hat{\mathbf{x}}_{t\mid t} 	& \triangleq & \mathbb{E}[\mathbf{x}_{t} \mid \mathbf{z}_{0}, \ldots, \mathbf{z}_{t}] \\
\mathbf{P}_{t\mid t} 		& \triangleq & \mathbb{E}[(\mathbf{x}_{t}-\hat{\mathbf{x}}_{t\mid t} ) (\mathbf{x}_{t}-\hat{\mathbf{x}}_{t\mid t} )^\mathrm{T} \mid \mathbf{z}_{0}, \ldots, \mathbf{z}_{t}]
\end{eqnarray}

Under this settings, it is fairly easy to derive the following property that provides the conditional posterior distribution in the forward recursion. And to recover traditional results of Kalman filter, we shall decompose our time propagation into two steps:
\begin{itemize}
\item time update: 		\qquad	\qquad \qquad	$ \mathbb{P}[\mathbf{x}_{t} \mid \mathbf{z}_{0}, \ldots, \mathbf{z}_{t}] \,\,\,\,\,\, \rightarrow \mathbb{P}[\mathbf{x}_{t+1} \mid \mathbf{z}_{0}, \ldots, \mathbf{z}_{t}] $
\item measurement update:		\quad \,\,\,\,\,	$ \mathbb{P}[\mathbf{x}_{t+1} \mid \mathbf{z}_{0}, \ldots, \mathbf{z}_{t}] \rightarrow \mathbb{P}[\mathbf{x}_{t+1} \mid \mathbf{z}_{0}, \ldots, \mathbf{z}_{t+1}] $
\end{itemize}

This can be represented nicely in terms of graphical models by the figure \ref{fig_two_steps} below.

\begin{figure}[h]
\centering

\begin{tikzpicture}
 \node[main, label= {$\mathbf{x}_{t}$}] (L1) []  {};
 \node[main, label={$\mathbf{x}_{t+1}$}] (L2) [right=of L1] {};
 \node[main,fill=black!35,  label=below:{$\mathbf{z}_{t}$}] (O1) [below=of L1] {}; 
 \node[main,fill=white!100,  label=below:{$\mathbf{z}_{t+1}$}] (O2) [below=of L2] {};

 \path (L1) edge [connect] (L2)
        (L2) edge [connect] (O2);

 \path (L1) edge [connect] (O1);
\end{tikzpicture}
\qquad \qquad \qquad \qquad 
\begin{tikzpicture}
 \node[main, label= {$\mathbf{x}_{t}$}] (L1) []  {};
 \node[main, label={$\mathbf{x}_{t+1}$}] (L2) [right=of L1] {};
 \node[main,fill=black!35,  label=below:{$\mathbf{z}_{t}$}] (O1) [below=of L1] {}; 
 \node[main,fill=black!35,  label=below:{$\mathbf{z}_{t+1}$}] (O2) [below=of L2] {};

 \path (L1) edge [connect] (L2)
        (L2) edge [connect] (O2);

 \path (L1) edge [connect] (O1);
\end{tikzpicture}\\
\textbf{(a)} \qquad \qquad \qquad \qquad \qquad \qquad \qquad \textbf{(b)}
\caption{(a) A portion of the State Space Model before a measurement and (b) after a measurement update. White nodes are non observable variables while gray nodes are observed nodes.} \label{fig_two_steps}
\end{figure}
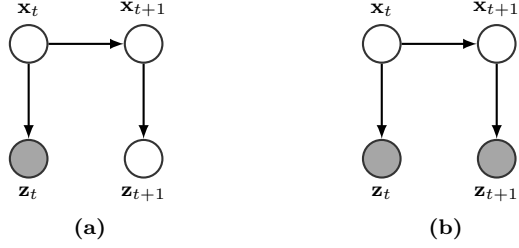

\begin{proposition}
Conditioned on past outputs $\mathbf{z}_{0}, \ldots, \mathbf{z}_{t}$, the variables $\mathbf{x}_{t+1}$ and $\mathbf{z}_{t+1}$ have a joint Gaussian distribution with mean and covariance matrix given by:
\begin{equation}
\left[ 
\begin{array}{c}
\hat{\mathbf{x}}_{t+1\mid t} \\
\mathbf{H}_{t+1} \hat{\mathbf{x}}_{t+1\mid t} 
\end{array}
\right]
\quad
\text{and}
\quad
\left[ 
\begin{array}{c c}
\mathbf{P}_{t+1\mid t}  & \mathbf{P}_{t+1\mid t} \mathbf{H}_{t+1}^\mathrm{T} \\
\mathbf{H}_{t+1} \mathbf{P}_{t+1\mid t} & \mathbf{H}_{t+1} \mathbf{P}_{t+1\mid t} \mathbf{H}_{t+1}^\mathrm{T} +\mathbf{R}_{t+1}
\end{array}
\right]
\end{equation}
\end{proposition}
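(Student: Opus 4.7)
The plan is to argue Gaussianity first and then compute the two moments, treating the statement as the standard prediction step of the Kalman recursion embedded in the linear-Gaussian SSM framework. I would proceed by induction on $t$, using as inductive hypothesis that $\mathbf{x}_{t}\mid\mathbf{z}_{0},\ldots,\mathbf{z}_{t}$ is Gaussian with mean $\hat{\mathbf{x}}_{t\mid t}$ and covariance $\mathbf{P}_{t\mid t}$ (the base case being the Gaussian prior on $\mathbf{x}_{0}$). Since the state equation \eqref{state_equation2} writes $\mathbf{x}_{t+1}=\mathbf{F}_{t+1}\mathbf{x}_{t}+\mathbf{B}_{t+1}\mathbf{u}_{t+1}+\mathbf{w}_{t+1}$ with $\mathbf{w}_{t+1}$ independent of $\mathbf{x}_{t}$ and of $\mathbf{z}_{0},\ldots,\mathbf{z}_{t}$, the conditional law of $\mathbf{x}_{t+1}$ is Gaussian as an affine transformation of a Gaussian plus an independent Gaussian. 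Similarly, $\mathbf{z}_{t+1}=\mathbf{H}_{t+1}\mathbf{x}_{t+1}+\mathbf{v}_{t+1}$ is affine in $\mathbf{x}_{t+1}$ with an independent Gaussian $\mathbf{v}_{t+1}$, so the pair $(\mathbf{x}_{t+1},\mathbf{z}_{t+1})$ is jointly Gaussian conditionally on the past outputs.

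Next I would compute the conditional means by taking conditional expectations through the two linear equations, using that $\mathbb{E}[\mathbf{w}_{t+1}\mid\mathbf{z}_{0:t}]=0$ and $\mathbb{E}[\mathbf{v}_{t+1}\mid\mathbf{z}_{0:t}]=0$ by mutual independence. This immediately yields $\mathbb{E}[\mathbf{x}_{t+1}\mid\mathbf{z}_{0:t}]=\mathbf{F}_{t+1}\hat{\mathbf{x}}_{t\mid t}+\mathbf{B}_{t+1}\mathbf{u}_{t+1}=\hat{\mathbf{x}}_{t+1\mid t}$ (as already established in \eqref{predict_eq1}) and $\mathbb{E}[\mathbf{z}_{t+1}\mid\mathbf{z}_{0:t}]=\mathbf{H}_{t+1}\hat{\mathbf{x}}_{t+1\mid t}$, matching the claimed mean vector.

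For the covariance, I would compute each of the four blocks separately. The top-left block is the definition of the predicted covariance, $\operatorname{Var}(\mathbf{x}_{t+1}\mid\mathbf{z}_{0:t})=\mathbf{P}_{t+1\mid t}$, which coincides with \eqref{predict_eq2} by the same reasoning as in Section \ref{TheMaths}. The off-diagonal block follows from $\mathbf{z}_{t+1}-\mathbb{E}[\mathbf{z}_{t+1}\mid\mathbf{z}_{0:t}]=\mathbf{H}_{t+1}(\mathbf{x}_{t+1}-\hat{\mathbf{x}}_{t+1\mid t})+\mathbf{v}_{t+1}$, so that $\operatorname{Cov}(\mathbf{x}_{t+1},\mathbf{z}_{t+1}\mid\mathbf{z}_{0:t})=\mathbf{P}_{t+1\mid t}\mathbf{H}_{t+1}^{\mathrm{T}}$, using independence of $\mathbf{v}_{t+1}$ from $\mathbf{x}_{t+1}$ and the past. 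Finally, the bottom-right block expands as $\mathbf{H}_{t+1}\mathbf{P}_{t+1\mid t}\mathbf{H}_{t+1}^{\mathrm{T}}+\mathbf{R}_{t+1}$, again by the independence of $\mathbf{v}_{t+1}$.

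The only subtlety I anticipate is the rigorous propagation of the Gaussian property through conditioning, rather than the arithmetic of the moments: one must be careful that conditioning on $\mathbf{z}_{0:t}$ preserves Gaussianity of $\mathbf{x}_{t}$ (this is the whole point of the inductive hypothesis) and that $\mathbf{w}_{t+1}$ and $\mathbf{v}_{t+1}$ remain Gaussian and independent of $\mathbf{x}_{t}$ after conditioning on the past observations, which is guaranteed by the mutual independence assumption stated in Section \ref{assumptions}. Once this is in place, all four moment computations are routine and essentially repeat the derivations already carried out in the prediction phase of Section \ref{TheMaths}.
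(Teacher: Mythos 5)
Your proof is correct and follows exactly the route the paper intends: the paper's own ``proof'' is the one-line observation that the model is a linear-Gaussian SSM, and your argument simply spells out what that triviality claim hides --- the inductive propagation of Gaussianity through the filtered posterior, and the four routine moment computations using the linearity of \eqref{state_equation2} and \eqref{measurement_equation2} together with the mutual independence of $\mathbf{w}_{t+1}$, $\mathbf{v}_{t+1}$ and the past. The only remark worth adding is that closing your induction (i.e.\ showing $\mathbf{x}_{t+1}\mid\mathbf{z}_{0:t+1}$ is again Gaussian) relies on the conditional-Gaussian lemma used in the subsequent measurement-update proposition, so the two propositions are formally proved together.
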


\begin{proof}
This is trivial as the considered state space model is the Linear Gauss State Space Model (LGSSM).
\end{proof}

As simple as it may seem, the previous proposition makes our graphical model a full powerhouse as it provides the building block to start the inference. Indeed, knowing the conditional posterior distribution at step (a) of figure \ref{fig_two_steps} gives us the first bullet to conclude for the step (b). Moreover, using results from simpler graphical models like the factor analysis graphical model, we can immediately conclude that the second step is given as follows

\begin{proposition}\label{Graphical_model_prop1}
Conditioned on past outputs $\mathbf{z}_{0}, \ldots, \mathbf{z}_{t+1}$, we have the following relationship
\begin{eqnarray}
\mkern-20mu \hat{\mathbf{x}}_{t+1\mid t+1}&=& \hat{\mathbf{x}}_{t+1\mid t} + \mathbf{P}_{t+1\mid t} \mathbf{H}_{t+1}^\mathrm{T} \left( \mathbf{H}_{t+1} \mathbf{P}_{t+1\mid t} \mathbf{H}_{t+1}^\mathrm{T} + \mathbf{R}_{t+1} \right)^{-1} \nonumber \\
& & \qquad \qquad \qquad \qquad \qquad \qquad  \qquad \qquad  ( \mathbf{z}_{t+1}- \mathbf{H}_{t+1} \hat{\mathbf{x}}_{t+1\mid t} ) \label{prop_Kalman_graph_eq1} \\
\mkern-20mu  \mathbf{P}_{t+1\mid t+1}&=& \mathbf{P}_{t+1\mid t} - \mathbf{P}_{t+1\mid t} \mathbf{H}_{t+1}^\mathrm{T} \left( \mathbf{H}_{t+1} \mathbf{P}_{t+1\mid t} \mathbf{H}_{t+1}^\mathrm{T} + \mathbf{R}_{t+1} \right)^{-1} \mathbf{H}_{t+1} \mathbf{P}_{t+1\mid t}  \quad \quad  \quad \label{prop_Kalman_graph_eq2}
\end{eqnarray}
\end{proposition}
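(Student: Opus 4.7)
The plan is to derive Proposition \ref{Graphical_model_prop1} as a direct application of the standard Gaussian conditioning lemma to the joint distribution supplied by the previous proposition. The key observation is that ``conditioning on $\mathbf{z}_{0}, \ldots, \mathbf{z}_{t+1}$'' can be split, via the chain rule for conditional densities, into first conditioning on $\mathbf{z}_{0}, \ldots, \mathbf{z}_{t}$ (already done in the previous proposition) and then further conditioning on the additional observation $\mathbf{z}_{t+1}$. Because the LG-SSM preserves joint Gaussianity through every linear transformation and every conditioning step, the only tool actually needed is the formula for the conditional distribution of one Gaussian block given another.

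Concretely, I would first recall the Schur-complement conditioning formula: if $\bigl(\begin{smallmatrix}X\\Y\end{smallmatrix}\bigr) \sim \mathcal{N}\!\left(\bigl(\begin{smallmatrix}\mu_X\\\mu_Y\end{smallmatrix}\bigr),\bigl(\begin{smallmatrix}\Sigma_{XX} & \Sigma_{XY}\\ \Sigma_{YX} & \Sigma_{YY}\end{smallmatrix}\bigr)\right)$, then $X\mid Y=y$ is Gaussian with mean $\mu_X + \Sigma_{XY}\Sigma_{YY}^{-1}(y-\mu_Y)$ and covariance $\Sigma_{XX} - \Sigma_{XY}\Sigma_{YY}^{-1}\Sigma_{YX}$. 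Then I would instantiate this with $X=\mathbf{x}_{t+1}$ and $Y=\mathbf{z}_{t+1}$, reading off the block moments directly from the previous proposition: $\mu_X = \hat{\mathbf{x}}_{t+1\mid t}$, $\mu_Y = \mathbf{H}_{t+1}\hat{\mathbf{x}}_{t+1\mid t}$, $\Sigma_{XX}=\mathbf{P}_{t+1\mid t}$, $\Sigma_{XY}=\mathbf{P}_{t+1\mid t}\mathbf{H}_{t+1}^{\mathrm{T}}$, $\Sigma_{YY}=\mathbf{H}_{t+1}\mathbf{P}_{t+1\mid t}\mathbf{H}_{t+1}^{\mathrm{T}}+\mathbf{R}_{t+1}$. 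Substituting yields exactly \eqref{prop_Kalman_graph_eq1} for the conditional mean $\hat{\mathbf{x}}_{t+1\mid t+1}=\mathbb{E}[\mathbf{x}_{t+1}\mid\mathbf{z}_0,\ldots,\mathbf{z}_{t+1}]$ and \eqref{prop_Kalman_graph_eq2} for the conditional covariance $\mathbf{P}_{t+1\mid t+1}$.

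The only subtlety, and really the only thing worth writing out with care, is justifying that conditioning on the new observation $\mathbf{z}_{t+1}$ can be performed ``on top of'' the already-conditioned distribution without losing anything. This follows because (i) the full family of variables $(\mathbf{x}_{t+1},\mathbf{z}_{t+1},\mathbf{z}_0,\ldots,\mathbf{z}_{t})$ is jointly Gaussian under the LG-SSM assumptions, and (ii) for jointly Gaussian vectors the conditional distribution is itself Gaussian and conditioning is associative, so $p(\mathbf{x}_{t+1}\mid \mathbf{z}_{0:t+1}) = p(\mathbf{x}_{t+1}\mid \mathbf{z}_{t+1}, \mathbf{z}_{0:t})$ can be computed from the joint of $(\mathbf{x}_{t+1},\mathbf{z}_{t+1})$ conditional on $\mathbf{z}_{0:t}$. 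I do not expect any real obstacle here: the argument is essentially bookkeeping, with the Gaussian conditioning identity doing all the work, and the whole proof can be compressed into a couple of lines once the two ingredients (joint Gaussianity and the Schur-complement formula) are in place.
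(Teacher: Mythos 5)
Your proposal is correct and follows essentially the same route as the paper: the paper's self-contained proof (appendix on the factor analysis model) likewise applies the partitioned-Gaussian / Schur-complement conditioning lemma to the joint Gaussian of $(\mathbf{x}_{t+1},\mathbf{z}_{t+1})$ given $\mathbf{z}_{0},\ldots,\mathbf{z}_{t}$ supplied by the preceding proposition, with exactly the block moments you identify. The only cosmetic difference is that the paper packages the conditioning step as a statement about the generic factor-analysis model $Z=\mu+\Lambda X+W$ and then instantiates $\Sigma=\mathbf{P}_{t+1\mid t}$, $\Lambda=\mathbf{H}_{t+1}$, $\Psi=\mathbf{R}_{t+1}$, whereas you apply the lemma directly.
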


\begin{proof}
This is a direct consequence of the spatial model and can be found for instance in \cite{Murphy_2013} or \cite{Jordan_2016}. We provide a self contained proof in appendix \ref{factor_analysis_proof}
\end{proof}

Summarizing all these results leads to the seminal recursions of the Kalman filter as provided in the section \ref{TheMaths} and given by the following proposition. 
\begin{proposition}
Kalman filter consists in the following recursive equations:
\begin{align}
\hat{\mathbf{x}}_{t\mid t-1} &= \mathbf{F}_t\hat{\mathbf{x}}_{t-1\mid t-1} + \mathbf{B}_t \mathbf{u}_t  		&\label{predict_eq1_2}\\
\mathbf{P}_{t\mid t-1} &= \mathbf{F}_t \mathbf{P}_{t-1\mid t-1} \mathbf{F}_t^\mathrm{T} + \mathbf{Q}_t 	&\label{predict_eq2_2} \\
\hat{\mathbf{x}}_{t+1\mid t+1} &= \hat{\mathbf{x}}_{t+1\mid t} + \mathbf{P}_{t+1\mid t} \mathbf{H}_{t+1}^\mathrm{T} \left( \mathbf{H}_{t+1} \mathbf{P}_{t+1\mid t} \mathbf{H}_{t+1}^\mathrm{T} + \mathbf{R}_{t+1} \right)^{-1} \nonumber \\
&  \qquad \qquad \qquad \qquad \qquad \qquad  \qquad \qquad  ( \mathbf{z}_{t+1}- \mathbf{H}_{t+1} \hat{\mathbf{x}}_{t+1\mid t} ) \label{prop_Kalman_graph_eq1_2} \\
\mathbf{P}_{t+1\mid t+1}&= \mathbf{P}_{t+1\mid t} - \mathbf{P}_{t+1\mid t} \mathbf{H}_{t+1}^\mathrm{T} \left( \mathbf{H}_{t+1} \mathbf{P}_{t+1\mid t} \mathbf{H}_{t+1}^\mathrm{T} + \mathbf{R}_{t+1} \right)^{-1} \mathbf{H}_{t+1} \mathbf{P}_{t+1\mid t}    &\label{prop_Kalman_graph_eq2_2}
\end{align}
\end{proposition}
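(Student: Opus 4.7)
The plan is to assemble the four Kalman recursions by combining a direct time-update argument with the measurement-update step already established in Proposition~\ref{Graphical_model_prop1}. Concretely, equations (\ref{prop_Kalman_graph_eq1_2}) and (\ref{prop_Kalman_graph_eq2_2}) are already the content of Proposition~\ref{Graphical_model_prop1}, so nothing new needs to be proved there. The only work is to derive the two predict equations (\ref{predict_eq1_2}) and (\ref{predict_eq2_2}) from the state equation (\ref{state_equation2}), after which the four equations chain together into the claimed recursion.

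First, I would derive (\ref{predict_eq1_2}) by taking the conditional expectation of the state equation (\ref{state_equation2}) given $\mathbf{z}_0, \ldots, \mathbf{z}_{t-1}$. The control term $\mathbf{B}_t \mathbf{u}_t$ is deterministic and pulls out. Because the noise vectors and the initial state are mutually independent (as assumed in Section~\ref{assumptions}), and because $\mathbf{z}_0, \ldots, \mathbf{z}_{t-1}$ are measurable functions of $\mathbf{x}_0$ and the strictly earlier noises $\mathbf{w}_1, \ldots, \mathbf{w}_{t-1}, \mathbf{v}_1, \ldots, \mathbf{v}_{t-1}$, the noise $\mathbf{w}_t$ is independent of the conditioning $\sigma$-algebra, so $\mathbb{E}[\mathbf{w}_t \mid \mathbf{z}_0, \ldots, \mathbf{z}_{t-1}] = 0$. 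This yields $\hat{\mathbf{x}}_{t\mid t-1} = \mathbf{F}_t\hat{\mathbf{x}}_{t-1\mid t-1} + \mathbf{B}_t \mathbf{u}_t$.

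Second, I would derive (\ref{predict_eq2_2}) by writing the prediction error as $\mathbf{x}_t - \hat{\mathbf{x}}_{t\mid t-1} = \mathbf{F}_t(\mathbf{x}_{t-1} - \hat{\mathbf{x}}_{t-1\mid t-1}) + \mathbf{w}_t$ and computing its conditional second moment. Exploiting once again the independence of $\mathbf{w}_t$ from both $\mathbf{x}_{t-1}$ and $\mathbf{z}_0, \ldots, \mathbf{z}_{t-1}$, the cross terms vanish and one is left with $\mathbf{P}_{t\mid t-1} = \mathbf{F}_t \mathbf{P}_{t-1\mid t-1} \mathbf{F}_t^\mathrm{T} + \mathbf{Q}_t$. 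Together with Proposition~\ref{Graphical_model_prop1}, which supplies the measurement update at the next time step, this completes the recursion: starting from $(\hat{\mathbf{x}}_{t-1\mid t-1}, \mathbf{P}_{t-1\mid t-1})$ one applies the two predict equations to get $(\hat{\mathbf{x}}_{t\mid t-1}, \mathbf{P}_{t\mid t-1})$, then Proposition~\ref{Graphical_model_prop1} (with index shift) delivers $(\hat{\mathbf{x}}_{t\mid t}, \mathbf{P}_{t\mid t})$.

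The main obstacle, modest as it is, is the bookkeeping required to justify the independence of $\mathbf{w}_t$ from the conditioning variables. This is purely an application of the mutual-independence hypothesis on $\{\mathbf{x}_0, \mathbf{w}_1, \ldots, \mathbf{w}_t, \mathbf{v}_1, \ldots, \mathbf{v}_t\}$ combined with the measurability of $\mathbf{z}_0, \ldots, \mathbf{z}_{t-1}$ with respect to $\sigma(\mathbf{x}_0, \mathbf{w}_1, \ldots, \mathbf{w}_{t-1}, \mathbf{v}_1, \ldots, \mathbf{v}_{t-1})$, which follows by recursion on the state and measurement equations. Once this is in place, the proof is little more than a concatenation of Proposition~\ref{Graphical_model_prop1} with the two-line derivation of the predict step.
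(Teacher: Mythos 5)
Your proposal is correct and follows essentially the same route as the paper: the measurement-update equations are taken directly from Proposition~\ref{Graphical_model_prop1}, and the two predict equations are obtained by taking the conditional expectation of the state equation and the conditional second moment of the prediction error, exactly as the paper does (the paper merely calls these steps ``immediate'' and ``trivial'' where you spell out the independence bookkeeping). No substantive difference.
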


\begin{proof}
This is a trivial consequence of proposition \ref{Graphical_model_prop1}. Equation (\ref{predict_eq1_2}) (resp. (\ref{predict_eq2_2})) is the same as the one provided in (\ref{predict_eq1}) (resp. (\ref{predict_eq2})) but using graphical models induction.
\end{proof}

\begin{rem}
If we introduce the following intermediate variables already provided in section \ref{TheMaths}:
\begin{align}
\tilde{\mathbf{y}}_{t+1}	&=\mathbf{z}_{t+1} - \mathbf{H}_{t+1} \hat{\mathbf{x}}_{t+1 \mid t}  							&\label{correct_eq1_2} \\
\mathbf{S}_{t+1} 			&=\mathbf{R}_{t+1} + \mathbf{H}_{t+1} \mathbf{P}_{t+1\mid t} \mathbf{H}_{t+1}^\mathrm{T}	&\label{correct_eq2_2} \\
\mathbf{K}_{t+1}			&=\mathbf{P}_{t+1\mid t}\mathbf{H}_{t+1}^\mathrm{T} \mathbf{S}_{t+1}^{-1}					&\label{correct_eq3_2} &&
\end{align}
The equations (\ref{prop_Kalman_graph_eq1_2}) and (\ref{prop_Kalman_graph_eq2_2}) transform into equations
\begin{align}
\hspace{-1cm} \hat{\mathbf{x}}_{t+1\mid t+1} &=\hat{\mathbf{x}}_{t+1\mid t} + \mathbf{K}_{t+1}\tilde{\mathbf{y}}_{t+1}    		\hspace{1cm}		& \\
\hspace{-1cm} \mathbf{P}_{t+1\mid t+1}&= \left( \mathbf{I} - \mathbf{K}_{t+1} \mathbf{H}_{t+1} \right) \mathbf{P}_{t+1\mid t}    	\hspace{1cm}		&&
\end{align}
which are equations (\ref{correct_eq4}) and (\ref{reduced_form}). This proves that the derivation using graphical models and control theory are mathematically equivalent!
\end{rem}
\vspace{0.3cm}

\begin{rem}
There is nothing new to fancy at this stage except that we have shown with graphical models the Kalman filter recursion. And we can check this is way faster, easier and more intuitive. It is worth noticing that we have also multiple ways to write the gain matrix. Using the Sherman–Morrison–Woodbury formula, we can also derive various forms for the gain matrix as follows:
\small
\begin{align}
\mkern-18mu \mathbf{K}_{t+1} &=\mathbf{P}_{t+1\mid t} \mathbf{H}_{t+1}^\mathrm{T} \left( \mathbf{H}_{t+1} \mathbf{P}_{t+1\mid t} \mathbf{H}_{t+1}^\mathrm{T} + \mathbf{R}_{t+1} \right)^{-1} \\
&= \left( \mathbf{P}_{t+1\mid t}^{-1} + \mathbf{H}_{t+1}^\mathrm{T} \mathbf{R}_{t+1} \mathbf{H}_{t+1} \right)^{-1} \mathbf{H}_{t+1}^\mathrm{T} \mathbf{R}_{t+1}^{-1} \\
&= \left( \mathbf{P}_{t+1\mid t} - \mathbf{P}_{t+1\mid t} \mathbf{H}_{t+1}^\mathrm{T} ( \mathbf{H}_{t+1} \mathbf{P}_{t+1\mid t} \mathbf{H}_{t+1}^\mathrm{T} + \mathbf{R}_{t+1})^{-1} \mathbf{H}_{t+1} \mathbf{P}_{t+1\mid t}\right) \nonumber \\
&  \qquad \qquad \qquad \qquad \qquad \qquad  \qquad \qquad  \qquad\qquad\qquad \qquad\qquad \mathbf{H}_{t+1}^\mathrm{T} \mathbf{R}_{t+1}^{-1}   \\
&=  \mathbf{P}_{t+1\mid t+1} \mathbf{H}_{t+1}^\mathrm{T} \mathbf{R}_{t+1}^{-1} \label{last_equation_gain}
\end{align}
\normalsize
These forms may be useful whenever the reduced form (which is the last equation) is numerically unstable. Equation (\ref{last_equation_gain}) is useful as it relates $\mathbf{K}_{t+1} $ to $ \mathbf{P}_{t+1\mid t+1}$.  It is interesting to notice the two form of the Kalman gain

 $$
\mathbf{K}_{t+1}	=\mathbf{P}_{t+1\mid t}\mathbf{H}_{t+1}^\mathrm{T} \mathbf{S}_{t+1}^{-1}	=  \mathbf{P}_{t+1\mid t+1} \mathbf{H}_{t+1}^\mathrm{T} \mathbf{R}_{t+1}^{-1} 
$$
\end{rem}
\vspace{0.3cm}

\begin{rem}
The Kalman filter appeals some remarks. We can first note that Kalman filtering equations can be interpreted differently. Combining equations (\ref{correct_eq4}) and (\ref{predict_eq1}), we retrieve an error correcting algorithm as follows:
$$
\hat{\mathbf{x}}_{t\mid t} = \mathbf{F}_{t} \hat{\mathbf{x}}_{t-1\mid t-1} + \mathbf{B}_t  \mathbf{u}_t + \mathbf{K}_{t} \left( z_{t}- \mathbf{H}_{t} ( \mathbf{F}_{t} \hat{\mathbf{x}}_{t-1\mid t-1} + \mathbf{B}_t \mathbf{u}_t  ) \right)
$$

or equivalently, regrouping the term $ \hat{\mathbf{x}}_{t-1\mid t-1}$
$$
\hat{\mathbf{x}}_{t\mid t} = \left( \mathbf{F}_{t} -  \mathbf{K}_{t} \mathbf{H}_{t} \mathbf{F}_{t} \right)   \hat{\mathbf{x}}_{t-1\mid t-1} +  \left(  \mathbf{B}_t  \mathbf{u}_t  + \mathbf{K}_{t} \left( z_{t} - \mathbf{H}_{t} \mathbf{B}_t \mathbf{u}_t  \right)\right)
$$

This shows us two things:
\begin{itemize}
\item Kalman filter can be seen as the discrete version of an Ornstein Uhlenbeck process
\item Kalman filter can be seen as Auto Regressive process in discrete times
\end{itemize}

Since, recursive equation do appear similarly in Recursive Least Square (RLS) estimates, we also see here a connection with RLS. It is striking that these connections are not often made, mostly because Kalman filter was originally a control problem and not a statistician one.
\end{rem}

As nice as the Kalman filter equation may look like, they have one major problem. It is the numerical stability of the filter.  If the process noise covariance $ \mathbf{Q}_t$ is small, round-off error would lead to obtain numerically a negative number for small positive eigenvalues of this matrix. As the scheme will propagate round off errors, the state covariance matrix $\mathbf{P}_t$ will progressively become only positive semi-definite (and hence indefinite) while it is theoretically a true positive definite matrix fully invertible.

In order to keep the positive definite property, we can slightly modify the recursive equation to find recursive equations that preserve the positive definite feature of the two covariance matrices. Since any positive definite matrix $\mathbf{S}^d$ can be represented and also reconstructed by its upper triangular square root matrix $\mathbf{R}$ with $\mathbf{S}^d = \mathbf{R}^t \mathbf{R}^\mathrm{T}$, with the strong property that representing in this form will guarantee that the resulting matrix will never get a negative eigen value, it is worth using a square root scheme with square root representation. Alternatively, we can also represent our covariance matrix using the so called unit diagonal (U-D) decomposition form, with 
$\mathbf{S}^d = \mathbf{U} \mathbf{D} \mathbf{U}^\mathrm{T}$ where $\mathbf{U}$' is a unit triangular matrix (with unit diagonal), and $\mathbf{D}$' is a diagonal matrix. This form avoids in particular many of the square root operations required by the matrix square root representation. Moreover, when comparing the two approaches, the U-D form has the elegant property to need same amount of storage, and somewhat less computation. This explains while the U-D factorization is often preferred \cite{Thornton_1979}. A slight variation is the 
\textit{LDL} decomposition of the innovation covariance matrix. The LDL decomposition relies on decomposing the covariance matrix $\mathbf{S}^d$ with two matrices:
a lower unit triangular (unitriangular) matrix $\mathbf{L}$, and $\mathbf{D}$ a diagonal matrix. The algorithm starts with the LU decomposition as implemented in the Linear Algebra PACKage (LAPACK) and further it factors into the LDL form. Any singular covariance matrix is pivoted so that the first diagonal partition is always non-singular and well-conditioned (for further details see \cite{Bar-Shalom_2002}).

\subsection{Connection to information filter}
It is worth showing the close connection with particle and information filter. This is a direct consequence of the fact that a multivariate Gaussian belongs to the exponential family and as such admits canonical parameters. Hence, we can rewrite the filter in terms of the later instead of the initial usage of moment parameters. This is an interesting rewriting of the Kalman filter as it makes it numerically more stable. The canonical parameters of a multi variate Gaussian distribution, denoted by $\mathbf{\Lambda}$ and $\eta$, are obtained from the moment parameters $\Sigma$ and $\mu$ as follows: $\mathbf{\Lambda} = \Sigma^{-1}$ and $\eta=\Sigma^{-1} \mu$. We are interested in deriving the canonical parameters of $\mathbf{x}_{t}$ first,  at the prediction phase, conditioned on $\mathbf{z}_{1}, \ldots, \mathbf{z}_{t-1}$. 

In the Kalman filter settings, the covariance $\Sigma$ is denoted by $\mathbf{P}$ while the moment is given by $\mathbf{x}$ with the relationship $\eta = \mathbf{P}^{-1} \mathbf{x}$. Hence, we will write the precision matrix as $\mathbf{\Lambda}$ with the relationship with the covariance matrix given by $\mathbf{\Lambda}=\mathbf{P}^{-1}$. We shall write our new variables to be consistent with previous development as $\mathbf{\Lambda}_{t \mid t-1}$ and $\eta_{t \mid t-1}$. At the correction or measurement phase, we are interested in the same parameters but now conditioned on  $\mathbf{z}_{1}, \ldots, \mathbf{z}_{t}$. We shall write them  $\mathbf{\Lambda}_{t \mid t}$ and $\eta_{t \mid t}$. We can easily derive the following recursive scheme that is given by the proposition below:

\begin{proposition}\label{information_filter}
The filter equations are given by the following recursive scheme:
\begin{eqnarray}
\hat{\eta}_{t \mid t-1} &=& \mathbf{Q}_{t}^{-1} \mathbf{F}_{t} (\mathbf{\Lambda}_{t-1 \mid t-1} +\mathbf{F}_{t}^\mathrm{T}  \mathbf{Q}_{t}^{-1} \mathbf{F}_{t})^{-1}\hat{\eta}_{t-1 \mid t-1} \nonumber \\
& &\!\!\!\!\! + (\mathbf{Q}_{t}^{-1} \! \!- \! \mathbf{Q}_{t}^{-1} \mathbf{F}_{t} ( \mathbf{\Lambda}_{t-1 \mid t-1} \!+ \! \mathbf{F}_{t}^\mathrm{T} \mathbf{Q}_{t}^{-1} \mathbf{F}_{t})^{-1} \mathbf{F}_{t}^\mathrm{T} \mathbf{Q}_{t}^{-1} ) \mathbf{B}_t  \mathbf{u}_t \\
\hat{\eta}_{t \mid t}&=&\hat{\eta}_{t \mid t-1}+\mathbf{H}_{t}^\mathrm{T}\mathbf{R}_{t}^{-1} \mathbf{z}_{t}\\
\mathbf{\Lambda}_{t \mid t-1} &=&  \mathbf{Q}_t^{-1}-\mathbf{Q}_t^{-1} \mathbf{F}_{t} (\mathbf{\Lambda}_{t-1 \mid t-1} + \mathbf{F}_{t}^\mathrm{T} \mathbf{Q}_t^{-1} \mathbf{F}_{t})^{-1}  \mathbf{F}_{t}^\mathrm{T} \mathbf{Q}_t^{-1} \\
\mathbf{\Lambda}_{t \mid t} &=&  \mathbf{\Lambda}_{t \mid t-1} +\mathbf{H}_{t}^\mathrm{T} \mathbf{R}_{t}^{-1}\mathbf{H}_{t} 
\end{eqnarray}
\end{proposition}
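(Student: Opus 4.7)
The plan is to obtain the information-filter recursion by starting from the moment-parameter Kalman recursion of the previous proposition and rewriting every step in terms of the canonical parameters $\mathbf{\Lambda}=\mathbf{P}^{-1}$ and $\eta=\mathbf{P}^{-1}\mathbf{x}$, using the Sherman--Morrison--Woodbury identity as the central algebraic tool. I would treat the measurement update first, because it is the cleanest, and then attack the slightly more laborious prediction update.

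For the measurement update, I start from the reduced form $\mathbf{P}_{t\mid t}=(\mathbf{I}-\mathbf{K}_t\mathbf{H}_t)\mathbf{P}_{t\mid t-1}$ established in the previous proposition. Applying Woodbury to the equivalent expression $\mathbf{P}_{t\mid t}=\mathbf{P}_{t\mid t-1}-\mathbf{P}_{t\mid t-1}\mathbf{H}_t^\mathrm{T}(\mathbf{H}_t\mathbf{P}_{t\mid t-1}\mathbf{H}_t^\mathrm{T}+\mathbf{R}_t)^{-1}\mathbf{H}_t\mathbf{P}_{t\mid t-1}$ gives immediately $\mathbf{\Lambda}_{t\mid t}=\mathbf{\Lambda}_{t\mid t-1}+\mathbf{H}_t^\mathrm{T}\mathbf{R}_t^{-1}\mathbf{H}_t$, which is the fourth stated equation. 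For the canonical mean, I use the alternative form $\mathbf{K}_t=\mathbf{P}_{t\mid t}\mathbf{H}_t^\mathrm{T}\mathbf{R}_t^{-1}$ (equation (\ref{last_equation_gain}) in the previous remark) to write $\hat{\mathbf{x}}_{t\mid t}=\hat{\mathbf{x}}_{t\mid t-1}+\mathbf{P}_{t\mid t}\mathbf{H}_t^\mathrm{T}\mathbf{R}_t^{-1}(\mathbf{z}_t-\mathbf{H}_t\hat{\mathbf{x}}_{t\mid t-1})$; multiplying on the left by $\mathbf{\Lambda}_{t\mid t}$ and using the identity $\mathbf{\Lambda}_{t\mid t}-\mathbf{H}_t^\mathrm{T}\mathbf{R}_t^{-1}\mathbf{H}_t=\mathbf{\Lambda}_{t\mid t-1}$ just established, the cross term collapses and I obtain $\hat{\eta}_{t\mid t}=\hat{\eta}_{t\mid t-1}+\mathbf{H}_t^\mathrm{T}\mathbf{R}_t^{-1}\mathbf{z}_t$.

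For the prediction step I start from $\mathbf{P}_{t\mid t-1}=\mathbf{F}_t\mathbf{P}_{t-1\mid t-1}\mathbf{F}_t^\mathrm{T}+\mathbf{Q}_t$ and apply Woodbury directly to invert: $\mathbf{\Lambda}_{t\mid t-1}=\mathbf{Q}_t^{-1}-\mathbf{Q}_t^{-1}\mathbf{F}_t(\mathbf{\Lambda}_{t-1\mid t-1}+\mathbf{F}_t^\mathrm{T}\mathbf{Q}_t^{-1}\mathbf{F}_t)^{-1}\mathbf{F}_t^\mathrm{T}\mathbf{Q}_t^{-1}$, which is the third equation. For $\hat{\eta}_{t\mid t-1}=\mathbf{\Lambda}_{t\mid t-1}\hat{\mathbf{x}}_{t\mid t-1}=\mathbf{\Lambda}_{t\mid t-1}(\mathbf{F}_t\hat{\mathbf{x}}_{t-1\mid t-1}+\mathbf{B}_t\mathbf{u}_t)$, substituting $\hat{\mathbf{x}}_{t-1\mid t-1}=\mathbf{\Lambda}_{t-1\mid t-1}^{-1}\hat{\eta}_{t-1\mid t-1}$ reduces the task to simplifying $\mathbf{\Lambda}_{t\mid t-1}\mathbf{F}_t\mathbf{\Lambda}_{t-1\mid t-1}^{-1}$. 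The expected obstacle is exactly this simplification: one has to recognize the push-through identity $(\mathbf{A}+\mathbf{B}^\mathrm{T}\mathbf{C}^{-1}\mathbf{B})^{-1}\mathbf{B}^\mathrm{T}\mathbf{C}^{-1}=\mathbf{A}^{-1}\mathbf{B}^\mathrm{T}(\mathbf{B}\mathbf{A}^{-1}\mathbf{B}^\mathrm{T}+\mathbf{C})^{-1}$ with $\mathbf{A}=\mathbf{\Lambda}_{t-1\mid t-1}$, $\mathbf{B}=\mathbf{F}_t$, $\mathbf{C}=\mathbf{Q}_t$, which (after transposition and using symmetry of the covariance matrices) yields $\mathbf{\Lambda}_{t\mid t-1}\mathbf{F}_t\mathbf{\Lambda}_{t-1\mid t-1}^{-1}=\mathbf{Q}_t^{-1}\mathbf{F}_t(\mathbf{\Lambda}_{t-1\mid t-1}+\mathbf{F}_t^\mathrm{T}\mathbf{Q}_t^{-1}\mathbf{F}_t)^{-1}$. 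This is exactly the coefficient of $\hat{\eta}_{t-1\mid t-1}$ in the first stated equation, and the coefficient of $\mathbf{B}_t\mathbf{u}_t$ is just $\mathbf{\Lambda}_{t\mid t-1}$ itself, expanded via its Woodbury expression.

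The main difficulty in the whole proof is purely algebraic: identifying the correct Woodbury / push-through identity that converts the transpose pattern $\mathbf{F}_t\mathbf{\Lambda}_{t-1\mid t-1}^{-1}$ on the right into the compact form with $(\mathbf{\Lambda}_{t-1\mid t-1}+\mathbf{F}_t^\mathrm{T}\mathbf{Q}_t^{-1}\mathbf{F}_t)^{-1}$ on the inside. Once that identity is in hand, every equation in the proposition drops out by direct substitution, and no probabilistic arguments beyond those already used in the previous proposition are needed; the information filter is simply a change-of-parameters rewriting of the Kalman recursion.
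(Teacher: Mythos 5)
Your proposal is correct and follows essentially the same route as the paper's proof in Appendix \ref{filter_equation_proof}: invert the moment-parameter recursions via the Woodbury identity to get both precision updates, use the alternative gain form $\mathbf{K}_t=\mathbf{P}_{t\mid t}\mathbf{H}_t^\mathrm{T}\mathbf{R}_t^{-1}$ together with $\mathbf{\Lambda}_{t\mid t}-\mathbf{H}_t^\mathrm{T}\mathbf{R}_t^{-1}\mathbf{H}_t=\mathbf{\Lambda}_{t\mid t-1}$ for the measurement update of $\eta$, and the push-through identity $(\mathbf{P}^{-1}+\mathbf{B}^\mathrm{T}\mathbf{R}^{-1}\mathbf{B})^{-1}\mathbf{B}^\mathrm{T}\mathbf{R}^{-1}=\mathbf{P}\mathbf{B}^\mathrm{T}(\mathbf{B}\mathbf{P}\mathbf{B}^\mathrm{T}+\mathbf{R})^{-1}$ for the prediction update of $\eta$. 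The key algebraic identity you single out as the main difficulty is precisely the one the paper invokes.
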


\begin{proof}
The derivation is easy and given in section \ref{filter_equation_proof}
\end{proof}

\begin{rem}
The recursive equation for the information filter are very general. They include the control term $ \mathbf{B}_t  \mathbf{u}_t $ that is often neglected in literature introduced as early as 1979 in \cite{Anderson_1979}. It is worth noticing that we can simplify computation by pre-computing a term $\mathbf{M}_t$ as follows:
\begin{eqnarray}
\mathbf{M}_t & = & \mathbf{Q}_{t}^{-1} \mathbf{F}_{t} (\mathbf{\Lambda}_{t-1 \mid t-1} +\mathbf{F}_{t}^\mathrm{T}  \mathbf{Q}_{t}^{-1} \mathbf{F}_{t})^{-1} \\
\mathbf{\Lambda}_{t \mid t-1} &=&  \mathbf{Q}_t^{-1} - \mathbf{M}_t \mathbf{F}_{t}^\mathrm{T} \mathbf{Q}_t^{-1} \\
\hat{\eta}_{t \mid t-1} &=& \mathbf{M}_t  \hat{\eta}_{t-1 \mid t-1} + \mathbf{\Lambda}_{t \mid t-1} \mathbf{B}_t  \mathbf{u}_t \\
\hat{\eta}_{t \mid t}&=&\hat{\eta}_{t \mid t-1}+\mathbf{H}_{t}^\mathrm{T}\mathbf{R}_{t}^{-1} \mathbf{z}_{t}\\
\mathbf{\Lambda}_{t \mid t} &=&  \mathbf{\Lambda}_{t \mid t-1} +\mathbf{H}_{t}^\mathrm{T} \mathbf{R}_{t}^{-1}\mathbf{H}_{t} 
\end{eqnarray}

These equations are more efficient than the ones provided in proposition \ref{information_filter}. As for the initialization of this recursion, we define the initial value as follows $\hat{\eta}_{1 \mid 0}= \hat{\eta}_{1}$ and $\mathbf{\Lambda}_{1 \mid 0} =\mathbf{\Lambda}_{1}$.
It is interesting to note that the Kalman filter and the information filter are mathematically equivalent. They both share the same assumptions. However, they do not use the same parameters. Kalman filter (KF) uses moment parameters while particle or information filter (IF) relies on canonical parameters, which makes the later numerically more stable in case of poor conditioning of the covariance matrix. This is easy to understand as a small eigen value of the covariance translates into a large eigen value of the precision matrix as the precision matrix is the inverse of the covariance matrix. Reciprocally, a poor conditioning of the information filter should convert to a more stable scheme for the Kalman filter as the he condition number of a matrix is the reciprocal of the condition number of its inverse. Likewise, for initial condition as they are inverse, a small initial state in KF should translate to a large state in IF.
\end{rem}

\subsection{Smoothing}
Another task we can do on dynamic Bayesian network is smoothing. It consists in obtaining estimates of the state at time $t$ based on information from $t$ on-wards (we are using future information in a sense). Like for HMM, the computation of this state estimate requires combining forward and backward recursion, either by starting by a backward-filtered estimates and then a forward-filtered estimates (an  'alpha-beta algorithm"), or by doing an algorithm  that iterates directly on the filtered-and-smoothed estimates (an  "alpha-gamma  algorithm").  Both  kinds of algorithm are available in the literature on state-space models (see for instance \cite{Koller_2009} and \cite{Cappe_2010} for more details), but the latter approach appears to dominate (as opposed to the HMM literature, where the former approach dominates) . The "alpha-gamma" approach is referred to as the "Rauch-Tung-Striebel (RTS) smoothing algorithm" (although it was developed using very different tool, namely control theory as early as 1965: see \cite{Rauch_1965}) while the other approach is just the "alpha-beta" recursion.

\subsubsection{Rauch-Tung-Striebel (RTS) smoother}
The Rauch-Tung-Striebel (RTS) smoother developed in \cite{Rauch_1965} relies precisely on the idea of of doing first a backward estimate and then a forward filter. Smoothing should not be confused with smoothing in times series analysis that is more or less a convolution. Smoothing for a Bayesian network means infering the distribution of a node conditioned on future information. It is the reciprocal of filtering that has also two meanings. Filtering for time series means doing a convolution to filter some noise. But filtering for Bayesian network means infering the distribution of a node conditioned on past information. \\

We can work out the RTS smoother and find the recursive equations provided by the following proposition

\begin{proposition}
The RTS smoothing algorithm works as follows:
\begin{flalign}
 \hat{\mathbf{x}}_{t\mid T}  &=  \hat{\mathbf{x}}_{t\mid t}  +  \mathbf{L}_{t}(\mathbf{x}_{t+1 \mid T} - \hat{\mathbf{x}}_{t+1\mid t} )  & \\
 \hat{\mathbf{P}}_{t\mid T}  &=  \mathbf{P}_{t \mid t} + \mathbf{L}_{t} (\mathbf{P}_{t+1 \mid T}-\mathbf{P}_{t+1 \mid t}) \mathbf{L}_{t}^\mathrm{T}  &\\
\text{where }  \mathbf{L}_{t}  &= \mathbf{P}_{t \mid t} \mathbf{F}_{t+1} ^\mathrm{T} \mathbf{P}_{t+1 \mid t}^{-1}  &
\end{flalign}

\noindent with an initial condition given by
\begin{eqnarray}
\hat{\mathbf{x}}_{T \mid T} & = & \hat{\mathbf{x}}_{T} \qquad \qquad\qquad\\
\hat{\mathbf{P}}_{T \mid T} & = & \hat{\mathbf{P}}_{T} \qquad\qquad\qquad\qquad\qquad\qquad
\end{eqnarray}
\end{proposition}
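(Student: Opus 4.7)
The plan is to derive the RTS recursion by exploiting the backward Markov property of the Linear-Gaussian state-space model and then applying standard Gaussian conditioning. The key identity I would establish first is that, conditioned on $\mathbf{x}_{t+1}$, the state $\mathbf{x}_{t}$ is independent of all measurements strictly after time $t$; equivalently,
\begin{equation}
p(\mathbf{x}_{t}\mid \mathbf{x}_{t+1},\mathbf{z}_{1:T}) \;=\; p(\mathbf{x}_{t}\mid \mathbf{x}_{t+1},\mathbf{z}_{1:t}).
\end{equation}
This follows directly from the d-separation structure of the state-space graphical model in Figure \ref{SSM}: every path from $\mathbf{x}_{t}$ to any $\mathbf{z}_{s}$ with $s>t$ must pass through $\mathbf{x}_{t+1}$, which is in the conditioning set. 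From here, the smoothing density factorizes as
\begin{equation}
p(\mathbf{x}_{t}\mid \mathbf{z}_{1:T}) = \int p(\mathbf{x}_{t}\mid \mathbf{x}_{t+1},\mathbf{z}_{1:t})\, p(\mathbf{x}_{t+1}\mid \mathbf{z}_{1:T})\, d\mathbf{x}_{t+1}.
\end{equation}

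Next I would compute the joint filtered distribution of $(\mathbf{x}_{t},\mathbf{x}_{t+1})$ given $\mathbf{z}_{1:t}$. Using the state equation (\ref{state_equation2}) and the fact that $\mathbf{w}_{t+1}$ is independent of $\mathbf{x}_{t}$ and of the past observations, this pair is jointly Gaussian with mean $(\hat{\mathbf{x}}_{t\mid t},\hat{\mathbf{x}}_{t+1\mid t})$ and cross-covariance $\mathrm{Cov}(\mathbf{x}_{t},\mathbf{x}_{t+1}\mid\mathbf{z}_{1:t})=\mathbf{P}_{t\mid t}\mathbf{F}_{t+1}^{\mathrm{T}}$, with diagonal blocks $\mathbf{P}_{t\mid t}$ and $\mathbf{P}_{t+1\mid t}$. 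Applying the standard Gaussian conditioning formula then yields
\begin{equation}
\mathbb{E}[\mathbf{x}_{t}\mid \mathbf{x}_{t+1},\mathbf{z}_{1:t}] = \hat{\mathbf{x}}_{t\mid t} + \mathbf{L}_{t}(\mathbf{x}_{t+1}-\hat{\mathbf{x}}_{t+1\mid t}),
\end{equation}
with $\mathbf{L}_{t}=\mathbf{P}_{t\mid t}\mathbf{F}_{t+1}^{\mathrm{T}}\mathbf{P}_{t+1\mid t}^{-1}$, and
\begin{equation}
\mathrm{Cov}(\mathbf{x}_{t}\mid \mathbf{x}_{t+1},\mathbf{z}_{1:t}) = \mathbf{P}_{t\mid t} - \mathbf{L}_{t}\mathbf{P}_{t+1\mid t}\mathbf{L}_{t}^{\mathrm{T}}.
\end{equation}
This already isolates the smoother gain $\mathbf{L}_{t}$ in its announced form.

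Finally, I would assemble the smoothed mean and covariance by tower property and the law of total covariance, taking expectations with respect to $p(\mathbf{x}_{t+1}\mid\mathbf{z}_{1:T})$, which by induction has mean $\hat{\mathbf{x}}_{t+1\mid T}$ and covariance $\mathbf{P}_{t+1\mid T}$. The mean recursion is immediate by linearity:
\begin{equation}
\hat{\mathbf{x}}_{t\mid T} = \hat{\mathbf{x}}_{t\mid t} + \mathbf{L}_{t}(\hat{\mathbf{x}}_{t+1\mid T}-\hat{\mathbf{x}}_{t+1\mid t}).
\end{equation}
For the covariance, combining the conditional covariance (constant in $\mathbf{x}_{t+1}$) with the covariance of the conditional mean gives
\begin{equation}
\mathbf{P}_{t\mid T} = \bigl(\mathbf{P}_{t\mid t}-\mathbf{L}_{t}\mathbf{P}_{t+1\mid t}\mathbf{L}_{t}^{\mathrm{T}}\bigr) + \mathbf{L}_{t}\mathbf{P}_{t+1\mid T}\mathbf{L}_{t}^{\mathrm{T}} = \mathbf{P}_{t\mid t} + \mathbf{L}_{t}(\mathbf{P}_{t+1\mid T}-\mathbf{P}_{t+1\mid t})\mathbf{L}_{t}^{\mathrm{T}}.
\end{equation}
The initialization at $t=T$ is trivial since the smoothed and filtered distributions coincide there.

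The main obstacle I expect is the backward Markov / conditional independence step: one must argue carefully that conditioning on $\mathbf{z}_{1:T}$ in $p(\mathbf{x}_{t}\mid\mathbf{x}_{t+1},\mathbf{z}_{1:T})$ can be replaced by conditioning only on $\mathbf{z}_{1:t}$, which is the only nontrivial probabilistic content — everything else is a mechanical application of Gaussian conditioning and the tower/total-covariance identities. Once this reduction is justified via d-separation, the remainder is algebra and the induction on $t$ running from $T$ down to $1$.
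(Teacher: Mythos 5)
Your proposal is correct and follows essentially the same route as the paper's proof in Appendix \ref{RTS_recursive_equation_proof}: form the joint Gaussian of $(\mathbf{x}_{t},\mathbf{x}_{t+1})$ given $\mathbf{z}_{1:t}$ with cross-covariance $\mathbf{P}_{t\mid t}\mathbf{F}_{t+1}^{\mathrm{T}}$, condition on $\mathbf{x}_{t+1}$ via the partitioned-Gaussian lemma, invoke the backward Markov property to replace $\mathbf{z}_{1:t}$ by $\mathbf{z}_{1:T}$ in the conditioning, and finish with the tower and total-variance identities. Your d-separation justification of the conditional-independence step is in fact more explicit than the paper's bare assertion of "the Markov property," and your conditional-covariance expression $\mathbf{P}_{t\mid t}-\mathbf{L}_{t}\mathbf{P}_{t+1\mid t}\mathbf{L}_{t}^{\mathrm{T}}$ is stated correctly where the paper's corresponding line contains a typographical slip.
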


\begin{proof}
The proof consists in writing rigorously the various equations and is given in section \ref{RTS_recursive_equation_proof}
\end{proof}

\subsubsection{Alternative to RTS filter}
It is worth noting that we can develop an alternative approach to the RTS filter that relies on the alpha beta approach without any observation. 
This approach is quite standard for HMM models (see for instance \cite{Rabiner_1986} or \cite{Russell_2009}). Following \cite{Kitagawa_1987}, this approach has been called in the literature the \textit{two-filter algorithm}. It consists in combining the \textit{forward} conditional probability $\mathbb{P}(\mathbf{x}_{t} \mid \mathbf{z}_{1}, \ldots, \mathbf{z}_{t})$ with the \textit{backward} conditional probability $\mathbb{P}(\mathbf{x}_{t} \mid \mathbf{z}_{t+1}, \ldots, \mathbf{z}_{T})$. The intuition is illustrated by the figure \ref{fig_no_observations}.

\Large
\begin{figure}[h]
\centering
\begin{tikzpicture}
 \node[main, draw=white!100] (N1)  {\Huge{\ldots}};
 \node[main, fill=white!100, label={$\mathbf{x}_{t}$}] (N2)  [right=of N1] {};
 \node[main, fill=white!100, label={$\mathbf{x}_{t+1}$}] (N3)  [right=of N2] {};
 \node[main, draw=white!100] (N4)  [right=of N3] {\Huge{\ldots}};
 \node[main,fill=white!100,  label=below:{$\mathbf{z}_{t}$}] (O2) [below=of N2] {};
 \node[main,fill=white!100,  label=below:{$\mathbf{z}_{t+1}$}] (O3) [below=of N3] {};

 \path (N1) edge [connect] (N2)  
         (N2) edge [connect] (N3)
         (N3) edge [connect] (N4);
 \path (N2) edge [connect] (O2);
 \path (N3) edge [connect] (O3);
 \draw[edge] (N2) -- (N3) node[midway, above] {$\mathbf{F}_{t+1}$};
\end{tikzpicture}

\normalsize
\caption{SSM  with no observations} \label{fig_no_observations}
\end{figure}
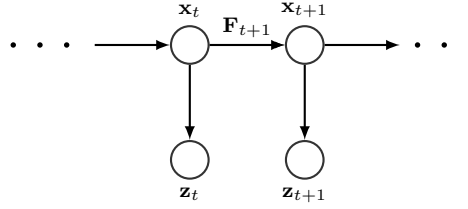
\normalsize

The graphical model provided by figure \ref{fig_no_observations} can be easily characterized. The joint probability distribution of $(\mathbf{x}_{t}, \mathbf{x}_{t+1})$ is a multi variate normal whose Lyapunov equation (equation of the covariance matrix) is given by

\begin{eqnarray}\label{initial_equation_two_filter}
\mathbf{P}_{t+1} = \mathbf{F}_{t+1} \mathbf{P}_t \mathbf{F}_{t+1}^\mathrm{T} + \mathbf{Q}_{t+1}
\end{eqnarray}

Hence the covariance matrix of $(\mathbf{x}_{t}, \mathbf{x}_{t+1})$ is given by

\begin{equation}
\left[ 
\begin{array}{c  c}
\mathbf{P}_t & \mathbf{P}_t \mathbf{F}_{t+1}^\mathrm{T}  \\
\mathbf{F}_{t+1} \mathbf{P}_t  &  \mathbf{F}_{t+1} \mathbf{P}_t  \mathbf{F}_{t+1}^\mathrm{T} +\mathbf{Q}_{t+1}
\end{array}
\right]
\end{equation}

The underlying idea in this approach is to invert the arrows in the graphical model leading to a new graphical model given by
\Large
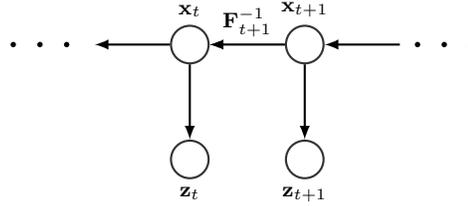
\begin{figure}[h]
\centering
\begin{tikzpicture}
 \node[main, draw=white!100] (N1)  {\Huge{\ldots}};
 \node[main, fill=white!100, label={$\mathbf{x}_{t}$}] (N2)  [right=of N1] {};
 \node[main, fill=white!100, label={$\mathbf{x}_{t+1}$}] (N3)  [right=of N2] {};
 \node[main, draw=white!100] (N4)  [right=of N3] {\Huge{\ldots}};
 \node[main,fill=white!100,  label=below:{$\mathbf{z}_{t}$}] (O2) [below=of N2] {};
 \node[main,fill=white!100,  label=below:{$\mathbf{z}_{t+1}$}] (O3) [below=of N3] {};

 \path (N2) edge [connect] (N1)  
         (N3) edge [connect] (N2)
         (N4) edge [connect] (N3);
 \path (N2) edge [connect] (O2);
 \path (N3) edge [connect] (O3);
 \draw[edge] (N3) -- (N2) node[midway, above] {$\mathbf{F}_{t+1}^{-1}$};
\end{tikzpicture}

\normalsize
\caption{Inverted arrows for the SSM  with no observations} \label{inverted_graph}
\end{figure}
\normalsize

Mathematically, this imply to change the relationship and now solve for $\mathbf{P}_{t}$ in terms of $\mathbf{P}_{t+1}$ using equation (\ref{initial_equation_two_filter}). We get:

\begin{eqnarray}
\mathbf{P}_{t} &= &\mathbf{F}_{t+1}^{-1} ( \mathbf{P}_{t+1}  - \mathbf{Q}_{t+1} ) \mathbf{F}_{t+1}^\mathrm{-T} 
\end{eqnarray}

where we have assumed that $\mathbf{F}_{t+1}$ is invertible. As a matter of fact, if it is not the case, we can rewrite everything with an approaching matrix to $\mathbf{F}_{t+1}$ (in the sense of the Frobenius norm) that is invertible, whose distance is less than $\varepsilon$, derive all the relationship below and then take the limit as $\epsilon$ tends to zero. We will therefore assume in the following that $\mathbf{F}_{t+1}$ is invertible as this is not a strict condition for our derivation. Hence we can rewrite the covariance matrix now as follows:

\begin{equation}
\left[ 
\begin{array}{c  c}
\mathbf{F}_{t+1}^{-1} ( \mathbf{P}_{t+1}  - \mathbf{Q}_{t+1} ) \mathbf{F}_{t+1}^\mathrm{-T}  &  \mathbf{F}_{t+1}^{-1} ( \mathbf{P}_{t+1}  - \mathbf{Q}_{t+1} )  \\
 ( \mathbf{P}_{t+1}  - \mathbf{Q}_{t+1} )  \mathbf{F}_{t+1}^{-T} &  \mathbf{P}_{t+1} 
\end{array}
\right]
\end{equation}

If we define the new matrix $\widetilde{\mathbf{F}}_{t+1}$ as:
\begin{equation}
\widetilde{\mathbf{F}}_{t+1} = \mathbf{F}_{t+1}^{-1} ( \mathbf{I}-   \mathbf{Q}_{t+1} \mathbf{P}_{t+1}^\mathrm{-1} ) 
\end{equation}

we can simplify the covariance matrix as follows:
\begin{equation}
\left[ 
\begin{array}{c  c}
\widetilde{\mathbf{F}}_{t+1} \mathbf{P}_{t+1}  \mathbf{F}_{t+1}^{-T} &  \widetilde{\mathbf{F}}_{t+1} \mathbf{P}_{t+1}  \\
\mathbf{P}_{t+1}  \widetilde{\mathbf{F}}_{t+1}^\mathrm{T} &  \mathbf{P}_{t+1} 
\end{array}
\right]
\end{equation}

As this looks like a forward covariance matrix. Recall that the forward dynamics is defined by:
\begin{eqnarray}\label{forward_dynamics}
\mathbf{x}_{t+1}  & =&  \mathbf{F}_{t+1} \mathbf{x}_{t} + \mathbf{B}_{t+1} \mathbf{u}_{t+1}+ w_{t+1}
\end{eqnarray}

The following proposition gives the inverse dynamics:

\begin{proposition}
The inverse dynamics is given by:
\begin{eqnarray}\label{inverse_dynamics}
\mathbf{x}_{t}  & =& \widetilde{\mathbf{F}}_{t+1} \mathbf{x}_{t+1}  + \widetilde{\mathbf{B}}_{t+1} \mathbf{u}_{t+1}  +\widetilde{w}_{t+1}
\end{eqnarray} 
\begin{eqnarray}
\text{with} \hspace{0.5cm} \widetilde{w}_{t+1} &=&  -  \mathbf{F}_{t+1}^{-1} (w_{t+1} - \mathbf{Q}_{t+1}  \mathbf{P}_{t+1}^{-1} \mathbf{x}_{t+1})  \hspace{3.5cm} \\
\widetilde{\mathbf{B}}_{t+1} &=& - \mathbf{F}_{t+1}^{-1} \mathbf{B}_{t+1}
\end{eqnarray}

and with $\widetilde{w}_{t+1}$ independent of the \textit{past} information $\mathbf{x}_{t+1}, \ldots, \mathbf{x}_{T}$ and with the covariance matrix of $\widetilde{w}_{t+1} $ given by
\begin{eqnarray} \label{inverse_dynamics_cov}
\widetilde{\mathbf{Q}}_{t+1} &\triangleq &\mathbb{E}[ \widetilde{w}_{t+1} \widetilde{w}_{t+1}^\mathrm{T} ] = \mathbf{F}_{t+1}^{-1} \mathbf{Q}_{t+1} (\mathbf{I} - \mathbf{P}_{t+1}^{-1} \mathbf{Q}_{t+1}) \mathbf{F}_{t+1}^{-\mathrm{T}},
\end{eqnarray}
and with the forward Lyapunov equation given by:
\begin{eqnarray}
\mathbf{P}_t &=& \widetilde{\mathbf{F}}_{t} \mathbf{P}_{t+1} \widetilde{\mathbf{F}}_{t}^\mathrm{T} +  \widetilde{\mathbf{Q}}_{t+1} 
\end{eqnarray}
\end{proposition}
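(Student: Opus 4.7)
The starting point will be the naive algebraic inversion of the forward dynamics (\ref{forward_dynamics}): applying $\mathbf{F}_{t+1}^{-1}$ and rearranging gives
$$\mathbf{x}_t = \mathbf{F}_{t+1}^{-1}\mathbf{x}_{t+1} - \mathbf{F}_{t+1}^{-1}\mathbf{B}_{t+1}\mathbf{u}_{t+1} - \mathbf{F}_{t+1}^{-1}w_{t+1}.$$
This already yields $\widetilde{\mathbf{B}}_{t+1} = -\mathbf{F}_{t+1}^{-1}\mathbf{B}_{t+1}$, but the residual $-\mathbf{F}_{t+1}^{-1}w_{t+1}$ is \textbf{not} independent of $\mathbf{x}_{t+1}$, since $\mathbf{x}_{t+1}$ itself contains $w_{t+1}$. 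The main conceptual move is therefore to \emph{orthogonalise} $w_{t+1}$ against $\mathbf{x}_{t+1}$ before reading off a legitimate time-reversed Gauss-Markov model.

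To orthogonalise, I observe that since $w_{t+1}$ is independent of $\mathbf{x}_t$ by the assumptions of subsection \ref{assumptions}, one has $\operatorname{Cov}(w_{t+1},\mathbf{x}_{t+1}) = \mathbf{Q}_{t+1}$ while $\operatorname{Var}(\mathbf{x}_{t+1}) = \mathbf{P}_{t+1}$, so the linear (and, in the Gaussian case, conditional) projection of $w_{t+1}$ onto $\mathbf{x}_{t+1}$ is $\mathbf{Q}_{t+1}\mathbf{P}_{t+1}^{-1}\mathbf{x}_{t+1}$. Writing $w_{t+1}$ as the sum of this projection and its residual and substituting into the naive inversion regroups the $\mathbf{x}_{t+1}$ terms into $\widetilde{\mathbf{F}}_{t+1} = \mathbf{F}_{t+1}^{-1}(\mathbf{I} - \mathbf{Q}_{t+1}\mathbf{P}_{t+1}^{-1})$ and isolates $\widetilde{w}_{t+1} = -\mathbf{F}_{t+1}^{-1}(w_{t+1} - \mathbf{Q}_{t+1}\mathbf{P}_{t+1}^{-1}\mathbf{x}_{t+1})$, matching the stated formulas.

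Next, I would verify that $\widetilde{w}_{t+1}$ is independent of the entire future $\mathbf{x}_{t+1},\ldots,\mathbf{x}_T$. By construction, the residual of a Gaussian projection is uncorrelated with $\mathbf{x}_{t+1}$, and in a jointly Gaussian setting uncorrelated implies independent. For $k \ge 2$, I unfold $\mathbf{x}_{t+k}$ as a linear combination of $\mathbf{x}_{t+1}$ and the innovations $w_{t+2},\ldots,w_{t+k}$; the latter are independent of $w_{t+1}$, and hence of $\widetilde{w}_{t+1}$, by the mutual independence assumption. So $\operatorname{Cov}(\widetilde{w}_{t+1},\mathbf{x}_{t+k})$ collapses to a linear image of $\operatorname{Cov}(\widetilde{w}_{t+1},\mathbf{x}_{t+1}) = 0$, yielding the desired independence.

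Finally, the covariance $\widetilde{\mathbf{Q}}_{t+1}$ and the backward Lyapunov equation follow by direct computation. Expanding $\mathbb{E}[\widetilde{w}_{t+1}\widetilde{w}_{t+1}^\mathrm{T}]$ with $\mathbb{E}[w_{t+1}w_{t+1}^\mathrm{T}] = \mathbf{Q}_{t+1}$, $\mathbb{E}[w_{t+1}\mathbf{x}_{t+1}^\mathrm{T}] = \mathbf{Q}_{t+1}$ and $\mathbb{E}[\mathbf{x}_{t+1}\mathbf{x}_{t+1}^\mathrm{T}] = \mathbf{P}_{t+1}$ telescopes to $\mathbf{F}_{t+1}^{-1}(\mathbf{Q}_{t+1} - \mathbf{Q}_{t+1}\mathbf{P}_{t+1}^{-1}\mathbf{Q}_{t+1})\mathbf{F}_{t+1}^{-\mathrm{T}}$, which factors as claimed. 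For the Lyapunov identity, taking variances of the inverse dynamics (\ref{inverse_dynamics}) and using the independence just established gives $\mathbf{P}_t = \widetilde{\mathbf{F}}_{t+1}\mathbf{P}_{t+1}\widetilde{\mathbf{F}}_{t+1}^\mathrm{T} + \widetilde{\mathbf{Q}}_{t+1}$; a short algebraic check expanding $(\mathbf{I}-\mathbf{Q}\mathbf{P}^{-1})\mathbf{P}(\mathbf{I}-\mathbf{P}^{-1}\mathbf{Q})$ and absorbing $\widetilde{\mathbf{Q}}_{t+1}$ reduces the right-hand side to $\mathbf{F}_{t+1}^{-1}(\mathbf{P}_{t+1} - \mathbf{Q}_{t+1})\mathbf{F}_{t+1}^{-\mathrm{T}}$, which equals $\mathbf{P}_t$ by the forward Lyapunov equation (\ref{initial_equation_two_filter}). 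The hard part is conceptual rather than computational: one must recognise that the backward noise has to be defined through Gaussian projection, not by naive algebraic inversion, and that this single choice simultaneously delivers independence from all future states and the clean Lyapunov recursion.
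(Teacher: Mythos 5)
Your proof is correct and follows essentially the same route as the paper: the paper obtains the same decomposition by adding and subtracting $\widetilde{\mathbf{F}}_{t+1}\mathbf{x}_{t+1}$ and then checking $\mathbb{E}[\widetilde{w}_{t+1}\mathbf{x}_{t+1}^\mathrm{T}]=-\mathbf{F}_{t+1}^{-1}(\mathbf{Q}_{t+1}-\mathbf{Q}_{t+1})=0$, which is exactly the orthogonality of your Gaussian projection residual, and the covariance and Lyapunov computations coincide. Your framing via projection and your explicit unfolding of $\mathbf{x}_{t+k}$ for $k\ge 2$ merely spell out what the paper dismisses as obvious, so this is a presentational rather than a substantive difference.
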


\begin{proof}
The proof is straightforward and given in \ref{proof_inverse_dynamics}.
\end{proof}

The reasoning followed here gives us a new way to generate an information filter but now backward. We should note that the conversion between the state space and the observable variable is unchanged and given by
$$
{\bf{z}}_t = {\bf{H}} {\bf{x}}_t +{\bf{v}}_t
$$

We can now apply the information filter developed previously to get the new filtering equations using canonical instead of moment parameterization. We get the following new information filter :
\begin{proposition}
The modified Bryson–Frazier as presented in \cite{Bierman_2006} is given by:
\begin{eqnarray}
\widetilde{\mathbf{M}}_t & = &  \mathbf{F}_{t}^{T}  \mathbf{Q}_{t}^{-1} (\mathbf{\Lambda}_{t+1 \mid t+1} +\ \mathbf{Q}_{t}^{-1} - \mathbf{P}_{t}^{-1})^{-1} \hspace{1.5cm} \\
\mathbf{\Lambda}_{t \mid t+1} &=&  \mathbf{F}_{t}^{T} (\mathbf{Q}_t - \mathbf{Q}_t \mathbf{P}_{t}^{-1} \mathbf{Q}_{t})^{-1}   \mathbf{F}_{t}  - \widetilde{\mathbf{M}}_t \mathbf{Q}_{t}^{-1} \mathbf{F}_{t} \\
\hat{\eta}_{t \mid t+1} &=& \widetilde{\mathbf{M}}_t  \hat{\eta}_{t+1 \mid t+1} - \mathbf{\Lambda}_{t \mid t+1}  \mathbf{F}_{t} \mathbf{B}_t  \mathbf{u}_t \\
\hat{\eta}_{t \mid t}&=&\hat{\eta}_{t \mid t+1}+\mathbf{H}_{t}^\mathrm{T}\mathbf{R}_{t}^{-1} \mathbf{z}_{t}\\
\mathbf{\Lambda}_{t \mid t} &=&  \mathbf{\Lambda}_{t \mid t+1} +\mathbf{H}_{t}^\mathrm{T} \mathbf{R}_{t}^{-1}\mathbf{H}_{t} 
\end{eqnarray}
\end{proposition}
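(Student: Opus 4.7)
My plan is to treat the backward pass as a forward information filter applied to the inverse dynamics derived in the preceding proposition. The previous proposition rewrote the state chain running \emph{backward} in time as an affine Gauss--Markov recursion
$$
\mathbf{x}_{t} = \widetilde{\mathbf{F}}_{t+1}\mathbf{x}_{t+1} + \widetilde{\mathbf{B}}_{t+1}\mathbf{u}_{t+1} + \widetilde{w}_{t+1},
$$
with $\widetilde{\mathbf{F}}_{t+1}=\mathbf{F}_{t+1}^{-1}(\mathbf{I}-\mathbf{Q}_{t+1}\mathbf{P}_{t+1}^{-1})$, $\widetilde{\mathbf{B}}_{t+1}=-\mathbf{F}_{t+1}^{-1}\mathbf{B}_{t+1}$ and $\widetilde{\mathbf{Q}}_{t+1}=\mathbf{F}_{t+1}^{-1}\mathbf{Q}_{t+1}(\mathbf{I}-\mathbf{P}_{t+1}^{-1}\mathbf{Q}_{t+1})\mathbf{F}_{t+1}^{-\mathrm{T}}$, and with $\widetilde{w}_{t+1}$ independent of future states. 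The measurement equation $\mathbf{z}_t=\mathbf{H}_t\mathbf{x}_t+\mathbf{v}_t$ is untouched, so the reversed graphical model still has the SSM structure to which the information filter of Proposition~\ref{information_filter} applies verbatim, only with the role of ``time $t-1$'' played by the future index.

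With this dictionary in mind I would proceed in two stages. First, the \textbf{measurement update}, in which we incorporate $\mathbf{z}_t$ once we know the ``prior'' $(\hat{\eta}_{t\mid t+1},\mathbf{\Lambda}_{t\mid t+1})$ based on $\mathbf{z}_{t+1},\ldots,\mathbf{z}_T$, is inherited unchanged from the forward information filter because conditioning on a new Gaussian observation only adds $\mathbf{H}_t^\mathrm{T}\mathbf{R}_t^{-1}\mathbf{H}_t$ to the precision and $\mathbf{H}_t^\mathrm{T}\mathbf{R}_t^{-1}\mathbf{z}_t$ to the canonical mean; this gives the last two equations of the proposition with essentially no work. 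Second, the \textbf{time update}, going from $(\hat{\eta}_{t+1\mid t+1},\mathbf{\Lambda}_{t+1\mid t+1})$ to $(\hat{\eta}_{t\mid t+1},\mathbf{\Lambda}_{t\mid t+1})$, is obtained by substituting $(\widetilde{\mathbf{F}}_{t+1},\widetilde{\mathbf{Q}}_{t+1},\widetilde{\mathbf{B}}_{t+1}\mathbf{u}_{t+1})$ into the prediction formulae of Proposition~\ref{information_filter}, i.e.\ the three-line recursion whose compact form involves the auxiliary matrix $\mathbf{M}_t=\mathbf{Q}_t^{-1}\mathbf{F}_t(\mathbf{\Lambda}_{t-1\mid t-1}+\mathbf{F}_t^\mathrm{T}\mathbf{Q}_t^{-1}\mathbf{F}_t)^{-1}$.

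The hard part will be the algebraic simplification that collapses the messy tilde expressions into the stated closed form $\widetilde{\mathbf{M}}_t=\mathbf{F}_t^{\mathrm{T}}\mathbf{Q}_t^{-1}(\mathbf{\Lambda}_{t+1\mid t+1}+\mathbf{Q}_t^{-1}-\mathbf{P}_t^{-1})^{-1}$ and $\mathbf{\Lambda}_{t\mid t+1}=\mathbf{F}_t^{\mathrm{T}}(\mathbf{Q}_t-\mathbf{Q}_t\mathbf{P}_t^{-1}\mathbf{Q}_t)^{-1}\mathbf{F}_t-\widetilde{\mathbf{M}}_t\mathbf{Q}_t^{-1}\mathbf{F}_t$ (with index shifted by one from the inverse-dynamics derivation, which I will absorb by relabeling). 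The key identities I expect to use repeatedly are the telescoping relation $\mathbf{Q}^{-1}(\mathbf{I}-\mathbf{Q}\mathbf{P}^{-1})=\mathbf{Q}^{-1}-\mathbf{P}^{-1}$, which converts $\widetilde{\mathbf{Q}}^{-1}\widetilde{\mathbf{F}}$ into a combination of $\mathbf{Q}^{-1}$ and $\mathbf{P}^{-1}$, together with the Sherman--Morrison--Woodbury identity to rewrite $\widetilde{\mathbf{Q}}^{-1}=\mathbf{F}^{\mathrm{T}}(\mathbf{I}-\mathbf{P}^{-1}\mathbf{Q})^{-1}\mathbf{Q}^{-1}\mathbf{F}$ in a form compatible with the factor $(\mathbf{\Lambda}_{t+1\mid t+1}+\widetilde{\mathbf{F}}^{\mathrm{T}}\widetilde{\mathbf{Q}}^{-1}\widetilde{\mathbf{F}})^{-1}$ appearing in $\mathbf{M}_t$. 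After these substitutions the $\mathbf{F}^{-1}$ and $\mathbf{F}^{-\mathrm{T}}$ factors cancel, leaving precisely the $\mathbf{F}_t^{\mathrm{T}}$ multipliers visible in the statement and the combination $\mathbf{Q}_t^{-1}-\mathbf{P}_t^{-1}$ inside the bracket.

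Finally, for the canonical mean I would track the control term through the same substitution. Using $\widetilde{\mathbf{B}}_{t+1}\mathbf{u}_{t+1}=-\mathbf{F}_{t+1}^{-1}\mathbf{B}_{t+1}\mathbf{u}_{t+1}$ and the identity $\mathbf{\Lambda}_{t\mid t+1}\widetilde{\mathbf{B}}_{t+1}=-\mathbf{\Lambda}_{t\mid t+1}\mathbf{F}_{t+1}^{-1}\mathbf{B}_{t+1}$, the structural formula $\hat{\eta}_{t\mid t+1}=\widetilde{\mathbf{M}}_t\hat{\eta}_{t+1\mid t+1}+\mathbf{\Lambda}_{t\mid t+1}\widetilde{\mathbf{B}}_{t+1}\mathbf{u}_{t+1}$ should match the displayed $\hat{\eta}_{t\mid t+1}=\widetilde{\mathbf{M}}_t\hat{\eta}_{t+1\mid t+1}-\mathbf{\Lambda}_{t\mid t+1}\mathbf{F}_t\mathbf{B}_t\mathbf{u}_t$ once indices are reconciled. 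The remaining obstacle is therefore bookkeeping: a careful, index-by-index reduction of the tilde formulas to the moment quantities $\mathbf{F}_t,\mathbf{Q}_t,\mathbf{P}_t$, for which a self-contained calculation should be deferred to an appendix analogous to the one used for the forward information filter.
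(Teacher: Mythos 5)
Your plan is exactly the paper's proof: apply the forward information filter of Proposition~\ref{information_filter} to the reversed dynamics $(\widetilde{\mathbf{F}},\widetilde{\mathbf{Q}},\widetilde{\mathbf{B}})$, keep the measurement update untouched, and collapse the tilde quantities via the identities $\widetilde{\mathbf{Q}}_t^{-1}\widetilde{\mathbf{F}}_t=\mathbf{F}_t^{\mathrm{T}}\mathbf{Q}_t^{-1}$ and $\widetilde{\mathbf{F}}_t^{\mathrm{T}}\widetilde{\mathbf{Q}}_t^{-1}\widetilde{\mathbf{F}}_t=\mathbf{Q}_t^{-1}-\mathbf{P}_t^{-1}$, which are precisely the simplifications the paper records. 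The index bookkeeping and the control-term sign that you defer are handled (somewhat loosely) the same way in the paper's appendix, so the approach is the same.
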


\begin{proof}
The proof consists in combining all previous results and is given in \ref{proof_Bryson_Frazier}.
\end{proof}

\begin{rem}
If we want to convert to to the moment representation, we can use the inverse transform given by $\hat{\mathbf{x}}_{t\mid t+1}  = \mathbf{S}_{t\mid t+1}^{-1} \hat{\mathbf{\eta}}_{t\mid t+1}$ and 
$\mathbf{P}_{t\mid t+1}  = \mathbf{S}_{t\mid t+1}^{-1}$
\end{rem}

\subsection{Inferring final posterior distribution}
This problem consists in inferring the final posterior distribution $\mathbb{P}(\mathbf{x}_t \mid \mathbf{z}_1,  \ldots, \mathbf{z}_T)$. 
We can easily derive 
\begin{equation}
\hat{\mathbf{x}}_{t\mid T} = \mathbf{P}_{t\mid T} (  \mathbf{P}_{t\mid t} ^{-1} \hat{\mathbf{x}}_{t\mid t} + \mathbf{P}_{t\mid t+1} ^{-1} \hat{\mathbf{x}}_{t\mid t+1})
\end{equation}

and
\begin{equation}
\mathbf{P}_{t\mid T} = \left(\mathbf{P}_{t\mid t}^{-1} + \mathbf{P}_{t\mid t+1}^{-1} - \Sigma_{t}^{-1} \right)^{-1}
\end{equation}

\subsection{Parameter estimation}

\subsubsection{Intuition}
Since the connection between HMM and  Kalman filter, it has been established that the best way to estimate the initial parameters of a Kalman filter and any of its extension is to rely on EM estimation. 

The expectation–maximization (EM) algorithm is an iterative method to find the maximum likelihood or maximum a posteriori (MAP) estimates of the parameters of a statistical model.  
This approach relies on the fact that the model depends on non observable (also called latent) variables.  The EM algorithm alternates between performing an expectation (E) step, which provides the expectation of the conditional log-likelihood evaluated using the current estimate for the parameters, and a maximization (M) step, which computes parameters maximizing the expected log-likelihood found on the E step. These parameter-estimates are then used to determine the distribution of the latent variables in the next E step.

The EM algorithm was explained and given its name in a classic 1977 paper by \cite{Dempster_1977}. The intuition is to find a way to solve the maximum likelihood solution for a model with latent variables. Since variables are hidden, solving the maximum likelihood solution typically requires taking the derivatives of the likelihood function with respect to all the unknown values and then solving the resulting equations. Because of latent variables, this function is unknown and the problem can not be solved. Instead, we compute the expected maximum likelihood with respect to the latent variables. We can then solve this explicit maximum likelihood function with respect to the observable variables.

\subsubsection{EM Algorithm}
We assume that our statistical model has a set $\mathbf{X}$ of observed data and a set of unobserved latent $\mathbf{Z}$ that depend on unknown parameters $\boldsymbol\theta$ that we want to determine thanks to maximum likelihood. We obviously know the likelihood function $L(\boldsymbol\theta; \mathbf{X}, \mathbf{Z}) = p(\mathbf{X}, \mathbf{Z}|\boldsymbol\theta)$ given by

$$
L(\boldsymbol\theta; \mathbf{X}) = p(\mathbf{X}|\boldsymbol\theta) = \int  p(\mathbf{X},\mathbf{Z}|\boldsymbol\theta) d\mathbf{Z}
$$

The later is often intractable as the number of values for $\mathbf{Z}$ is vary large making the computation of the expectation (the integral) extremely difficult. In order to avoid the exact computation, we can work as follows:

First compute the expectation explicitly (called the \textit{E step}) by taking the expected value of the log likelihood function of $\boldsymbol\theta$  with respect to latent variables $\mathbf{Z}$, denoted by $Q(\boldsymbol\theta| \boldsymbol\theta^{(t)}$:

$$Q(\boldsymbol\theta|\boldsymbol\theta^{(t)}) = \operatorname{E}_{\mathbf{Z}|\mathbf{X},\boldsymbol\theta^{(t)}}\left[ \log L (\boldsymbol\theta;\mathbf{X},\mathbf{Z})  \right] \,$$

We then maximize the resulted function with respect to the parameters $\boldsymbol\theta$:
$$
\boldsymbol\theta^{(t+1)} = \underset{\boldsymbol\theta}{\operatorname{arg\,max}} \ Q(\boldsymbol\theta|\boldsymbol\theta^{(t)}) \, 
$$

The EM method works both for discrete or continuous random latent variables. It takes advantage of algoirthm like the Viterbi algorithm for hidden Markov models or recursive equation fo Kalman filter. The various steps are the following:

\begin{algorithm}
\caption{EM algorithm:}
\begin{algorithmic} 
\State \textbf{init:}  Set $\boldsymbol\theta$  to some random values.
\While {Not Converged}
	\State Compute the probability of each possible value of $\mathbf{Z}$, given $\boldsymbol\theta$ (E-step)
	\State  Use the computed values of $\mathbf{Z}$ to find the argmax values for $\boldsymbol\theta$ (M-Step).
\EndWhile
\end{algorithmic}
\end{algorithm}

The convergence of this algorithm is given by the following proposition.

\begin{proposition}
The EM algorithm converges monotonically to a local minimum for the marginal log likelihood.
\end{proposition}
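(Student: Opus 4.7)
The plan is to prove monotonicity first and then deduce convergence from it; the statement as written surely means a local maximum (since we are doing maximum likelihood), so I will establish that the sequence $\log L(\boldsymbol\theta^{(t)};\mathbf{X})$ is non-decreasing and then invoke boundedness to conclude convergence.

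The key tool will be the standard decomposition of the marginal log-likelihood obtained from Bayes' rule $p(\mathbf{X},\mathbf{Z}\mid\boldsymbol\theta) = p(\mathbf{Z}\mid\mathbf{X},\boldsymbol\theta)\,p(\mathbf{X}\mid\boldsymbol\theta)$. Taking logs, multiplying by any density $q(\mathbf{Z})$ over the latent variables, and integrating out $\mathbf{Z}$, I would write
\begin{equation}
\log p(\mathbf{X}\mid\boldsymbol\theta) = \underbrace{\int q(\mathbf{Z})\log\frac{p(\mathbf{X},\mathbf{Z}\mid\boldsymbol\theta)}{q(\mathbf{Z})}\,d\mathbf{Z}}_{\mathcal{F}(q,\boldsymbol\theta)} + \underbrace{\int q(\mathbf{Z})\log\frac{q(\mathbf{Z})}{p(\mathbf{Z}\mid\mathbf{X},\boldsymbol\theta)}\,d\mathbf{Z}}_{\operatorname{KL}(q\,\|\,p(\cdot\mid\mathbf{X},\boldsymbol\theta))}.
\end{equation}
Since the KL divergence is non-negative and vanishes iff $q=p(\cdot\mid\mathbf{X},\boldsymbol\theta)$, the quantity $\mathcal{F}(q,\boldsymbol\theta)$ is a lower bound on $\log p(\mathbf{X}\mid\boldsymbol\theta)$ which is tight precisely when $q$ equals the posterior.

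The proof proper would then proceed in three short steps. First, the E-step chooses $q^{(t)}(\mathbf{Z}) = p(\mathbf{Z}\mid\mathbf{X},\boldsymbol\theta^{(t)})$, making the KL term zero, so $\log p(\mathbf{X}\mid\boldsymbol\theta^{(t)}) = \mathcal{F}(q^{(t)},\boldsymbol\theta^{(t)})$. Second, I observe that $\mathcal{F}(q^{(t)},\boldsymbol\theta) = Q(\boldsymbol\theta\mid\boldsymbol\theta^{(t)}) + H(q^{(t)})$, where $H(q^{(t)})$ is the entropy of $q^{(t)}$ and does not depend on $\boldsymbol\theta$; hence the M-step maximizer $\boldsymbol\theta^{(t+1)}=\arg\max_{\boldsymbol\theta} Q(\boldsymbol\theta\mid\boldsymbol\theta^{(t)})$ also maximizes $\mathcal{F}(q^{(t)},\cdot)$, giving $\mathcal{F}(q^{(t)},\boldsymbol\theta^{(t+1)})\geq\mathcal{F}(q^{(t)},\boldsymbol\theta^{(t)})$. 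Third, applying the decomposition at $\boldsymbol\theta^{(t+1)}$ with the same $q^{(t)}$ and using the non-negativity of KL,
\begin{equation}
\log p(\mathbf{X}\mid\boldsymbol\theta^{(t+1)}) \;\geq\; \mathcal{F}(q^{(t)},\boldsymbol\theta^{(t+1)}) \;\geq\; \mathcal{F}(q^{(t)},\boldsymbol\theta^{(t)}) \;=\; \log p(\mathbf{X}\mid\boldsymbol\theta^{(t)}),
\end{equation}
which is exactly the monotonicity claim.

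To close the argument, I would note that since $\log p(\mathbf{X}\mid\boldsymbol\theta)$ is bounded above (by $0$ for discrete models, or generally by the supremum of the likelihood when it exists under standard regularity assumptions), the monotone sequence $\{\log p(\mathbf{X}\mid\boldsymbol\theta^{(t)})\}$ converges. The main subtlety — and the one step I would flag as non-trivial — is going from convergence of the likelihood values to convergence of the iterates $\boldsymbol\theta^{(t)}$ to a stationary point of the likelihood; this requires continuity of $Q(\cdot\mid\cdot)$ and a Zangwill-type global convergence argument (as in Wu 1983), which I would cite rather than rederive, since the proposition as stated only asks for monotonic convergence of the likelihood itself.
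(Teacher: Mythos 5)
Your proof is correct and follows essentially the same route as the paper, which gives two arguments: your ELBO-plus-KL decomposition with the sandwich inequality is exactly the paper's second proof (coordinate ascent on $F(q,\theta) = \operatorname{E}_q[\log L] + H(q)$), and your use of non-negativity of the KL divergence is equivalent to the Gibbs'-inequality step in the paper's first proof. If anything you are more careful than the paper: you correctly note that the stated ``local minimum'' should read maximum, and you flag that passing from a monotone bounded likelihood sequence to stationarity of the iterates needs a Zangwill/Wu-type argument, a gap the paper's own proofs leave unaddressed.
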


\begin{proof} There are at least two ways of proving this results that are provided in \ref{EMProof1} and \ref{EMProof2}.
\end{proof}

\subsection{Application to Kalman filter}
The EM algorithm was initially developed for mixture models in particular Gaussian mixtures but also other natural laws from the exponential family such as Poisson, binomial, multinomial and exponential distributions as early as in \cite{Hartley_1958}. It was only once the link between latent variable and Kalman filter models was made that it became obvious that this could also be applied to Kalman and extended Kalman filter (see \cite{Cappe_2010} or \cite{Einicke_2010}). The EM method works as folllows:

\begin{algorithm}[H]
\caption{EM algorithm for Kalman filter}
\begin{algorithmic} 
\State \textbf{init:}  Set the Kalman filter parameters to some random values.
\While{Not Converged}
	\State Use Kalman filter recursive equations to obtained updated state estimates (E-step)
	\State Use the filtered or smoothed state estimates within maximum-likelihood calculations to obtain updated parameter estimates.
	For instance for the standard Kalman filter, the distributions are assumed to be normal. The corresponding maximum likehood estimate for the updated measurement noise variance 	estimate leads to a variance given by
\begin{equation*}
\hat{\sigma}^{2}_v = \frac{1}{N} \sum_{k=1}^N {(z_k-\hat{x}_{k})}^{2}
\end{equation*}
where $\hat{x}_k$ are the output estimates calculated by the Kalman filter for the  measurements ${z_k}$. 
Likewise, the the updated process noise variance estimate is calculated as
\begin{equation*}
\hat{\sigma}^{2}_w =   \frac{1}{N} \sum_{k=1}^N {(\hat{x}_{k+1}-\hat{F}\hat{{x}}_{k})}^{2}
\end{equation*}

where $\hat{x}_k$ and $\hat{x}_{k+1}$ are scalar state estimates calculated by a filter or a smoother. 
The updated model coefficient estimate is obtained via
\begin{equation*}\hat{F} = \frac{\sum_{k=1}^N (\hat{x}_{k+1}-\hat{F} \hat{x}_k)}{\sum_{k=1}^N \hat{x}_k^{2}} 
\end{equation*}.
\EndWhile
\end{algorithmic}
\end{algorithm}

\subsubsection{CMA-ES estimation}
Another radically difference approach is to minimize some cost function depending on the Kalman filter parameters. As opposed to the maximum likelihood approach that tries to find the best suitable distribution that fits the data, this approach can somehow factor in some noise and directly target a cost function that is our final result. Because our model is an approximation of the reality, this noise introduction may leads to a better overall cost function but a worse distribution in terms of fit to the data. 

Let us first introduce the CMA-ES algorithm. Its name stands for covariance matrix adaptation evolution strategy. As it points out, it is an evolution strategy optimization method, meaning that it is a derivative free method that can accommodate non convex optimization problem. The terminology covariance matrix alludes to the fact that the exploration of new points is based on a multinomial distribution whose covariance matrix is progressively determined at each iteration. Hence the covariance matrix adapts in a sense to the sampling space, contracts in dimension that are useless and expands in dimension where natural gradient is steep. This algorithm has led to a large number of papers and articles and we refer to \cite{Hansen_2018}, \cite{Ollivier_2017}, \cite{Auger_2016}, \cite{Auger_2015}, \cite{Hansen_2014}, \cite{Auger_2012}, \cite{Hansen_2011}, \cite{Auger_2009}, \cite{Igel_2007}, \cite{Auger_2004} to cite a few of the numerous articles around CMA-ES. We also refer the reader to the excellent Wikipedia page \cite{wiki:CMAES}.

CMA-ES relies on two main principles in the exploration of admissible solution for our optimization problem.
First, it relies on a multi variate normal distribution as this is the maximum entropy distribution given the first two moments.
The mean of the multi variate distribution is updated at each step in order to maximize the likelihood of finding a successful candidate. The second moment, the covariance matrix of the distribution is also updated at each step to increase the likelihood of successful search steps. These updates can be interpreted as a natural gradient descent. Intuitively, the CMA ES algorithm conducts an iterated principal components analysis of successful search steps while retaining all principal axes. 

Second, we retain two paths of the successive distribution mean, called search or evolution paths. The underlying idea is keep significant information about the correlation between consecutive steps. If consecutive steps are taken in a similar direction, the evolution paths become long. The evolution paths are exploited in two ways. We use the first path is to compute the covariance matrix to increase variance in favorable directions and hence increase convergence speed. The second path is used to control step size and to make consecutive movements of the distribution mean orthogonal in expectation. The goal of this step-size control is to prevent premature convergence yet obtaining fast convergence to a local optimum.

In order to make it practical, we assume that we have a general cost function that depends on our Bayesian graphical model denoted by $\Phi( \theta)$ where $\theta$ are the parameters of our Kalman filter. Our cost function is for instance the Sharpe ratio corresponding to a generic trend detection strategy whose signal is generated by our Bayesian graphical model that is underneath a Kalman filter. This approach is more developed in a companion paper \cite{Benhamou_2018_CMAES} but we will give here the general idea. Instead of computing the parameter of our Bayesian graphical model using the EM approach, we would like to find the parameters $\theta_{\max}$ that maximize our cost function $\Phi( \theta)$. Because our cost function is to enter a long trade with a predetermined target level and a given stop loss whenever our Bayesian graphical model anticipates a price risen and similarly to enter a short trade whenever our prediction based on Bayesian graphical model is a downside movement, our trading strategy is not convex neither smooth. It is a full binary function and generates spike whenever there is a trade. Moreover, our final criterium is to use the Sharpe ratio of the resulting trading strategy to compare the efficiency of our parameters. This is way too complicated for traditional optimization method, and we need to rely on Black box optimization techniques like CMA-ES. Before presenting results, we will also discuss in the following section computational issues and the choice of the State space model dynamics.

We will describe below the most commonly used CMA -ES algorithm referred to as the $\mu / \mu_{w}, \lambda$ version. It is the version in which at each iteration step, we take a weighted combination of the $\mu$ best out of $\lambda$  new candidate solutions to update the distribution parameters. The algorithm has three main loops: first, it samples new solutions, second, it reorders the sampled solutions based on their fitness and third it updates the state variables. A pseudo code of the algorithm is provided just below:

\begin{algorithm}[H]
\caption{CMA ES algorithm}
	\begin{algorithmic} 
	\State \textbf{Set} $\lambda$								\Comment{number of samples /  iteration}
	\State Initialize $m, \sigma, C=\mathbf{I}_n, p_\sigma=0$, $p_c=0$  \Comment{initialize state variables}
	\\
	\While{(not terminated)}  								
     		\For{$i = 1 \text{ to } \lambda$} 					\Comment{samples $\lambda$ new solutions and evaluate them}
        		\State $x_i \sim \mathcal{N}(m, \sigma^2 C )$ 	\Comment{samples multi variate normal}
	        	\State $f_i =f(x_i)$								\Comment{evaluates}
		\EndFor
		\\
	      	\State $x_{1...\lambda} = $  $x_{s(1)...s(\lambda)}$ with $s(i)$ = argsort($f_{1...\lambda}$, $i$)  \Comment{reorders samples}
     		\State $m' = m$  									\Comment{stores current value of $m$}
		\\
	     	\State $m = $ update mean$(x_1, ... ,$ $x_\lambda)$  \Comment{udpates mean to better solutions}
     		\State $p_\sigma = $ update ps$(p_\sigma,$ $\sigma^{-1} C^{-1/2} (m - m'))$  \Comment{updates isotropic evolution path}
	     	\State $p_c = $ update pc$(p_c,$ $\sigma^{-1}(m - m'),$ $||p_\sigma||)$  \Comment{updates anisotropic evolution path}
	     	\State $C = $ update C$(C,$ $p_c,$ ${(x_1 - m')}/{\sigma},... ,$ ${(x_\lambda - m')}/{\sigma})$  \Comment{updates covariance matrix}
	     	\State $\sigma = $ update sigma$(\sigma,$ $||p_\sigma||)$  \Comment{updates step-size using isotropic path length}
		\\
		\State not terminated $=$ iteration $\leq$ iteration max and $||m-m'|| \geq \varepsilon$ \Comment{stop condition}
	\EndWhile
	\\ \\
	\Return $m$ or $x_1$			\Comment{returns solution}
	\end{algorithmic}
\end{algorithm}

In this algorithm, for an $n$ dimensional optimization program, the five state variables at iteration step $k$ are:
\begin{enumerate}
\item $m_k\in\mathbb{R}^n$, the distribution mean and current best solution,

\item $\sigma_k>0$, the variance step-size,

\item $ C_k$, a symmetric positive-definite $n\times n$ covariance matrix initialized to $\mathbf{I}_n$,

\item $ p_\sigma\in\mathbb{R}^n$, the isotropic evolution path, initially set to null,

\item $p_c\in\mathbb{R}^n$, the anisotropic evolution path, initially set to null.
\end{enumerate}

It is worth noting that the order of the five update is important. The iteration starts with sampling $\lambda>1$ candidate solutions $x_i \in\mathbb{R}^n $ 
from a multivariate normal distribution $\textstyle \mathcal{N}(m_k,\sigma_k^2 C_k)$, that is $x_i  \sim\ m_k + \sigma_k \mathcal{N}(0,C_k) $ for  $i=1,...,\lambda$.

We then evaluate candidate solutions $ x_i$ for the objective function $f:\mathbb{R}^n\to\mathbb{R}$ of our optimization.  

We then sort candidate solution according to their objective function value: 
$ \{x_{i:\lambda}\;|\;i=1\dots\lambda\} \text{ with } f(x_{1:\lambda})\le \ldots \le f(x_{\mu:\lambda}) \le \ldots \le f(x_{\lambda:\lambda})$. It is worth noticing that we do not even need to know the value. Only the ranking is important for this algorithm. 

The new mean value is updated as a weighted mean of our new candidate solutions
\begin{align}
  m_{k+1} &= \sum_{i=1}^{\mu} w_i\, x_{i:\lambda}  = m_k + \sum_{i=1}^{\mu} w_i\, (x_{i:\lambda} - m_k) 
\end{align}

where the positive (recombination) weights $ w_1 \ge w_2 \ge \dots \ge w_\mu > 0$ sum to one. Typically, $\mu \le \lambda/2$ and the weights are chosen such that $\textstyle \mu_w := 1 / \sum_{i=1}^\mu w_i^2 \approx \lambda/4$. 

The step-size $\sigma_k$ is updated using cumulative step-size adaptation (CSA), sometimes also denoted as path length control. The evolution path $p_\sigma$ is updated first as it is used in the update of the step-size $\sigma_k$:

\begin{align}
  p_\sigma & = \underbrace{(1-c_\sigma)}_{\!\!\!\!\!\text{discount factor}\!\!\!\!\!}\, p_\sigma 
    + \overbrace{\sqrt{1 - (1-c_\sigma)^2}}^{
     \!\!\!\!\!\!\!\!\!\!\!\!\!\!\!\!\!\!\!\!\!\!\!\!\!\!\!\!\!\!\!\text{complements for discounted variance}
     \!\!\!\!\!\!\!\!\!\!\!\!\!\!\!\!\!\!\!\!\!\!\!\!\!\!\!\!\!\!\!} \underbrace{\sqrt{\mu_w} 
     \,C_k^{\;-1/2} \, \frac{\overbrace{m_{k+1} - m_k}^{\!\!\!\text{displacement of}\; m\!\!\!}}{\sigma_k}}_{\!\!\!\!\!\!\!\!\!\!\!\!\!\!\!\!\!\!
                      \text{distributed as}\; \mathcal{N}(0,I)\;\text{under neutral selection}
                      \!\!\!\!\!\!\!\!\!\!\!\!\!\!\!\!\!\!} \\
  \sigma_{k+1} & = \sigma_k \times \exp\bigg(\frac{c_\sigma}{d_\sigma}
                          \underbrace{\left(\frac{\|p_\sigma\|}{E\|\mathcal{N}(0,I)\|} - 1\right)}_{\!\!\!\!\!\!\!\!\!\!\!\!\!\!\!\!\!\!\!\!\!\!\!\!\!\!\!\!\!\!\!\!\!\!\!\!
    \text{unbiased about 0 under neutral selection}
    \!\!\!\!\!\!\!\!\!\!\!\!\!\!\!\!\!\!\!\!\!\!\!\!\!\!\!\!\!\!\!\!\!\!\!\!
}\bigg)
\end{align}

where 
\begin{itemize}
\item $c_\sigma^{-1}\approx n/3$ is the backward time horizon for the evolution path $p_\sigma$ and larger than one ($c_\sigma \ll 1$ is reminiscent of an exponential decay constant as $(1-c_\sigma)^k\approx\exp(-c_\sigma k)$ where $c_\sigma^{-1}$ is the associated lifetime and $c_\sigma^{-1}\ln(2)\approx0.7c_\sigma^{-1}$ the half-life),

\item $\mu_w=\left(\sum_{i=1}^\mu w_i^2\right)^{-1}$ is the variance effective selection mass and $1 \le \mu_w \le \mu$ by definition of $w_i$,

\item $C_k^{\;-1/2} = \sqrt{C_k}^{\;-1} = \sqrt{C_k^{\;-1}}$ is the unique symmetric square root of the inverse of $C_k$, and

\item $d_\sigma$ is the damping parameter usually close to one. For $d_\sigma=\infty$ or $c_\sigma=0$ the step-size remains unchanged.
\end{itemize}

The step-size $\sigma_k$ is increased if and only if $\|p_\sigma\|$ is larger than $ \sqrt{n}\,(1-1/(4\,n)+1/(21\,n^2)) $
 and decreased if it is smaller (see \cite{Hansen_2006} for more details).

Finally, the covariance matrix is updated, where again the respective evolution path is updated first.

\begin{equation}
  p_c \gets \underbrace{(1-c_c)}_{\!\!\!\!\!\text{discount factor}\!\!\!\!\!}\, 
            p_c + 
     \underbrace{\mathbf{1}_{[0,\alpha\sqrt{n}]}(\|p_\sigma\|)}_{\text{indicator function}} 
     \overbrace{\sqrt{1 - (1-c_c)^2}}^{
     \!\!\!\!\!\!\!\!\!\!\!\!\!\!\!\!\!\!\!\!\!\!\!\!\!\!\!\!\!\!\!\text{complements for discounted variance}
     \!\!\!\!\!\!\!\!\!\!\!\!\!\!\!\!\!\!\!\!\!\!\!\!\!\!\!\!\!\!\!}
     \underbrace{\sqrt{\mu_w} 
     \, \frac{m_{k+1} - m_k}{\sigma_k}}_{\!\!\!\!\!\!\!\!\!\!\!\!\!\!\!\!\!\!\!\!\!\!\!\!\!\!\!\!\!\!\!\!\!\!\!\!
                      \text{distributed as}\; \mathcal{N}(0,C_k)\;\text{under neutral selection}
                      \!\!\!\!\!\!\!\!\!\!\!\!\!\!\!\!\!\!\!\!\!\!\!\!\!\!\!\!\!\!\!\!\!\!\!\!}
\end{equation}

\begin{equation}
  C_{k+1} = \underbrace{(1 - c_1 - c_\mu + c_s)}_{\!\!\!\!\!\text{discount factor}\!\!\!\!\!}
               \, C_k + c_1 \underbrace{p_c p_c^T}_{
   \!\!\!\!\!\!\!\!\!\!\!\!\!\!\!\!
   \text{rank one matrix}
   \!\!\!\!\!\!\!\!\!\!\!\!\!\!\!\!} 
         + \,c_\mu \underbrace{\sum_{i=1}^\mu w_i \frac{x_{i:\lambda} - m_k}{\sigma_k} 
             \left( \frac{x_{i:\lambda} - m_k}{\sigma_k} \right)^T}_{
                     \text{rank} \;\min(\mu,n)\; \text{matrix}}
\end{equation}

where $T$ denotes the transpose and

$c_c^{-1}\approx n/4$ is the backward time horizon for the evolution path $p_c$ and larger than one,

$\alpha\approx 1.5$ and the indicator function $\mathbf{1}_{[0,\alpha\sqrt{n}]}(\|p_\sigma\|)$ evaluates to one if and only if $\|p_\sigma\|\in[0,\alpha\sqrt{n}]$ or, in other words, $\|p_\sigma\|\le\alpha\sqrt{n}$, which is usually the case,

$c_s = (1 - \mathbf{1}_{[0,\alpha\sqrt{n}]}(\|p_\sigma\|)^2) \,c_1 c_c (2-c_c) $ makes partly up for the small variance loss in case the indicator is zero,

$c_1 \approx 2 / n^2$ is the learning rate for the rank-one update of the covariance matrix and

$c_\mu \approx \mu_w / n^2 $ is the learning rate for the rank-$\mu$ update of the covariance matrix and must not exceed $1 - c_1$.

The covariance matrix update tends to increase the likelihood function for $p_c$ and for $(x_{i:\lambda} - m_k)/\sigma_k$ to be sampled from $\mathcal{N}(0,C_{k+1})$. This completes the iteration step. The stop condition is quite standard and makes sure we stop if the best solution does not move or if we reached the maximum iterations number.

\begin{rem}
The number of candidate samples per iteration, $\lambda$, is an important and influential parameter. It highly depends on the objective function and can be tricky to be determined. Smaller values, like 10, tends to do more local search. Larger values, like 10 times the dimension $n$ makes the search more global. A potential solution to the determination of the number of candidate samples per iteration is to repeatedly restart the algorithm with increasing $\lambda$ by a factor of two at each restart (see \cite{Auger_2005})

Besides of setting $\lambda$ (or possibly $\mu$ instead, if for example $\lambda$ is predetermined by the number of available processors), the above introduced parameters are not specific to the given objective function. This makes this algorithm quite powerful as the end user has relatively few parameters to set to do an efficient optimization search.
\end{rem}

\subsection{Computational issues}
Although the computational cost of Kalman filter is very small and its complexity dammed simple, there are room for improvements.
The two dominant costs in the Kalman filter are the matrix inversion to compute the Kalman gain matrix, $\mathbf{K}_{t}$, which takes $O(| \mathbf{x}_{t} |^3 )$ time where $| \mathbf{x}_{t} |$ stands for the dimension of $\mathbf{x}_{t}$; and the matrix-matrix multiplication to compute the estimated covariance matrix which takes $O(|\mathbf{z}_{t} |^2)$ time.  \\

In some applications (e.g., robotic mapping), we have  $|\mathbf{z}_{t} \gg | \mathbf{x}_{t} |$,
 so the latter cost dominates. However, in such cases, we can sometimes use sparse approximations (see for instance \cite{Thrun_2005}).
In the opposite cases where $| \mathbf{x}_{t} | \gg |\mathbf{z}_{t}$, 
we can precompute the Kalman filter gain  $\mathbf{K}_{t}$, since, it does not depend on
the actual observations $(\mathbf{z}_{1}, \ldots, \mathbf{z}_{t})$. This is a an unusual property that is specific to linear Gaussian systems. \\

The iterative equations for updating the estimate covariance in equation (\ref{reduced_form}) or in (\ref{Joseph_form}) are Riccati equations as they are first order ordinary differential recursive equations quadratic in Kalman gain. Standard theory state that for time invariant systems, they converge to a fixed point. Hence, it may be wised to use this steady state solution instead of using a time-specific gain matrix in our filtering. In practice however, this does not provide better results. More sophisticated implementations of the Kalman filter should be used, for reasons of numerical stability. One approach is the information filter also called particle filter, which recursively updates the canonical parameters of the underlying Gaussian distribution, namely the precision matrix defined as the inverse of the covariance matrix $\Lambda_{t\mid t} = \mathbf{P}_{t\mid t}$ and the first order canonical parameter given by $\eta_{t \mid t} = \Lambda_{t\mid t} \mathbf{x}_{t\mid t}$. These parameters instead of the traditional moment parameters makes the inference stronger.

Another approach is the square root filter, which works with the Cholesky decomposition or the $UDU$ factorization of the estimated error covariance: $\mathbf{P}_{t\mid t}$. This is much more numerically stable than directly updating the estimated error covariance $\mathbf{P}_{t\mid t}$. Further details can be found at the reference site \href{http://www.cs.unc.edu/~welch/kalman/}{http://www.cs.unc.edu/~welch/kalman/} and the refernce book for Kalman filter \cite{Simon_2006}.

\section{In practice}\label{InPractice}
Financial markets are complex dynamics models. Notoriously, one does not know the exact dynamics as opposed to a noise filtering problem for a GPS. We are doomed to make assumptions about the dynamics of financial systems. This can seriously degrade the filter performance as we are left with some un-modeled dynamics. As opposed to noise that the filter can accommodate (it has been designed for it), non modeled dynamics depends on the input. They can bring the estimation algorithm to instability and divergence. It is a tricky and more experimental issue to distinguish between measurement noise and non modeled dynamics.

Kalman filter has been applied by many authors (see \cite{Lautier_2003}, \cite{lautier_2004}, \cite{Bierman_2006}, \cite{Roncalli_2011}, \cite{Dao_2011} \cite{Chan_2013} and \cite{Benhamou_2017}).

The following linear dynamics have been suggested:
\begin{align}
   \mathbf{x}_{t+1} & = \Phi    \mathbf{x}_t + c_t + w_t \\
   \mathbf{z}_t        & = \mathbf{H} \mathbf{x}_t + d_t + v_t
\end{align}   

with the following choice for the various parameters:

\begin{table}[H]
  \centering
    \begin{tabular}{| c | c| c | c | l | c | c |}
    \toprule
    \multicolumn{1}{|l|}{Model} 
	& \multicolumn{1}{l|}{ $\Phi$ } 
	& \multicolumn{1}{l|}{ $\mathbf{H}$ } 
	& \multicolumn{1}{l|}{ $\mathbf{Q}$ }
	& $\mathbf{R}$     
	& \multicolumn{1}{l|}{ $\mathbf{P}_{t=0}$ } 
	& \multicolumn{1}{l|}{ $c_t$ } \\
    \midrule
    	$1$  
	& $  \left[ {\begin{array}{cc} 1 & dt \!  \\  0 &  1 \! \end{array} } \right] $ 
        & $  \left[ {\begin{array}{c} \! 1 \!  \\  \!  0 \!  \end{array} } \right]  $ 
	& $   \left[ {\begin{array}{l}  \!  p_1^2  \; p_1 p_2 \!  \\ \!  p_1p_2 \;   p_3^2  \!  \end{array} } \right] $ 
        & $ \left[ {\begin{array}{c} \! p_4 \!  \end{array} } \right]  $ 
	& $ \left[ {\begin{array}{l}  \!  p_5  \; 0 \!  \\ \!  0 \;   p_5 \!  \end{array} } \right] $ 
	& \multicolumn{1}{c|}{$0$} \\
    \midrule
    	$2$
	& $  \left[ {\begin{array}{cc} 1 & dt \!  \\  0 &  1 \! \end{array} } \right] $ 
        & $  \left[ {\begin{array}{c} \! 1 \!  \\  \!  0 \!  \end{array} } \right]  $ 
	& $  \left[ {\begin{array}{l}  \!  p_1^2  \; p_1p_2 \!  \\ \!  p_1p_2 \;   p_3^2  \!  \end{array} } \right] $ 
        & $ \left[ {\begin{array}{c} \! p_4 \!  \end{array} } \right]  $ 
	& $ \left[ {\begin{array}{l}  \!  p_5  \; 0 \!  \\ \!  0 \;   p_6 \!  \end{array} } \right] $ 
        & \multicolumn{1}{c|}{$0$} \\
    \midrule
	$3$
	& $  \left[ {\begin{array}{cc} \!\!  p_1 & p_2 \! \!  \\ \!\!   0 &  p_3 \!\!   \end{array} } \right] $ 
        & $  \left[ {\begin{array}{c} \!\! p_4 \!\!  \\  \!\!  p_5 \! \! \end{array} } \right]  $ 
	& $   \left[ {\begin{array}{l}  \!  p_6^2  \; p_6 p_7 \!  \\ \!  p_7 p_6 \;   p_8^2  \!  \end{array} } \right] $ 
        & $ \left[ {\begin{array}{c} \! p_9 \!  \end{array} } \right]  $ 
	& $ \left[ {\begin{array}{l}  \!\!  p_{10}  \; 0 \!\!  \\ \!\! 0 \;   p_{11}  \! \! \end{array} } \right] $ 
        & \multicolumn{1}{c|}{$0$} \\
    \midrule
    	$4$     
	& $  \left[ {\begin{array}{cc} \!\!  p_1 & p_2 \! \!  \\ \!\!   0 &  p_3 \!\!   \end{array} } \right] $ 
        & $  \left[ {\begin{array}{c} \!\! p_4 \!\!  \\  \!\!  p_5 \! \! \end{array} } \right]  $ 
	& $   \left[ {\begin{array}{l}  \!  p_6^2  \; p_6p_7 \!  \\ \!  p_7 p_6 \;   p_8^2  \!  \end{array} } \right] $ 
        & $ \left[ {\begin{array}{c} \! p_9 \!  \end{array} } \right]  $ 
	& $ \left[ {\begin{array}{l}  \!\!  p_{10}  \; 0 \!\!  \\ \!\! 0 \;   p_{11}  \! \! \end{array} } \right] $ 
	& $  \left[ {\begin{array}{c} \!\!  p_{12} ( p_{13} - K_t  )\! \!  \\ \!\!  p_{14}  ( p_{15} - K_t  )\!\!  \end{array} } \right] $  \\
    \bottomrule
    \end{tabular}%
  \label{tab:addlabel}%
   \bigskip
\caption{Various Kalman filter model specifications}\label{KFModels}
\end{table}

\section{Numerical experiments}
In order to test our results, we look at the following trend following algorithm based on our Kalman filter algorithm where we enter a long trade if the prediction of our kalman filter is above the close of the previous day and a short trade if the  the prediction of our kalman filter is below the close of the previous day. For each comparison, we add an offset $\mu$ to avoid triggering false alarm signals. We set for each trade a pre-determined profit and stop loss target in ticks. These parameters are optimized in order to provide the best sharpe ratio over the train period together with the Kalman filter parameters. 
The pseudo code of our algorithm is listed below

\begin{algorithm}[H]
\caption{Kalman filter Trend following algorithm}
	\begin{algorithmic} 
	\State \textbf{Initialize common trade details}
	\State SetProfitTarget( target)							\Comment{fixed profit target in ticks}
	\State SetStopLoss( stop\_loss )							\Comment{fixed stop loss in ticks}
	\\
	\While{ Not In Position}											\Comment{look for new trade}
		\If{ KF( $p_1, \ldots, p_n$).Predict[0] $\ge$ Close[0] + $\mu$} 	\Comment{up trend signal}
			\State EnterLong()										\Comment{market order for the open}
		\ElsIf{ KF( $p_1, \ldots, p_n$).Predict[0] $\le$ Close[0] + $\mu$} 	\Comment{down trend signal}
			\State EnterShort()										\Comment{market order for the open}
		\EndIf
	\EndWhile
	\end{algorithmic}
\end{algorithm}

Our resulting algorithm depends on the following parameters $p_1, \ldots, p_n$ the Kalman filter algorithm, the profit target, the stop loss and the signal offset $\mu$. We could estimate the Kalman filter parameters with the EM procedure, then optimize the profit target, the stop loss and the signal offset $\mu$. However, if by any chance the dynamics of the Kalman filter is incorrectly specified, the noise generated by this wrong specification will only be factored in the three parameters:  the profit target, the stop loss and the signal offset $\mu$. We prefer to do a combined optimization of all the parameters. We use daily data of the S\&P 500 index futures (whose CQG code is \textit{EP}) from 01Jan2017 to 01Jan2018. We train our model on the first 6 months and test it on the next six months. Deliberately, our algorithm is unsophisticated to keep thing simple and concentrate on the parameter estimation method. The overall idea is for a given set of parameter to compute the resulting sharpe ratio over the train period and find the optimal parameters combination. For a model like the fourth one in the table \ref{KFModels}, the optimization encompasses 18 parameters: $p_1, \ldots, p_{15}$, the profit target, the stop loss and the signal offset $\mu$, making it non trivial. We use the CMA-ES algorithm to find the optimal solution. In our optimization, we add some penalty condition to force non meaningful Kalman filter parameters to be zero, namely, we add a L1 penalty on this parameters.

Results are given below

\hspace{-1.5cm}
\begin{minipage}{1.2\textwidth}

\begin{table}[H]
  \centering
  \caption{Optimal parameters}
\resizebox{\textwidth}{!}{
    \begin{tabular}{|c|r|r|r|r|r|r|r|r|r|r|r|r|r|r|r|r|r|r|}
    \toprule
    \newline{}\newline{}\newline{}\newline{}\newline{}Parameters & \multicolumn{1}{l|}{$p_{01}$} & \multicolumn{1}{l|}{$p_{02}$} & \multicolumn{1}{l|}{$p_{03}$} & \multicolumn{1}{l|}{$p_{04}$} & \multicolumn{1}{l|}{$p_{05}$} & \multicolumn{1}{l|}{$p_{06}$} & \multicolumn{1}{l|}{$p_{07}$} & \multicolumn{1}{l|}{$p_{08}$} & \multicolumn{1}{l|}{$p_{09}$} & \multicolumn{1}{l|}{$p_{10}$} & \multicolumn{1}{l|}{$p_{11}$} & \multicolumn{1}{l|}{$p_{12}$} & \multicolumn{1}{l|}{$p_{13}$} & \multicolumn{1}{l|}{$p_{14}$} & \multicolumn{1}{l|}{$p_{15}$} & \multicolumn{1}{l|}{offset} & \multicolumn{1}{l|}{stop} & \multicolumn{1}{l|}{target} \\
    \midrule
    Value & 24.8  & 0     & 11.8  & 46.2  & 77.5  & 67    & 100   & 0     & 0     & 0     & 0     & 100   & 0     & 0     & 0     & 5     & 80    & 150 \\
    \bottomrule
    \end{tabular}}%
  \label{tab:param}%
\end{table}%

\begin{table}[H]
  \centering
  \caption{Train test statistics 1/4}
\resizebox{\textwidth}{!}{
    \begin{tabular}{|c|c|c|c|c|c|c|c|c|}
    \toprule
    Performance & Net Profit & Gross Profit & Gross Loss & \# of Trades & \# of Contracts & Avg. Trade & Tot. Net Profit (\%) & Ann. Net Profit (\%) \\
    \midrule
    Train & \textcolor[rgb]{ 0,  .502,  0}{5,086 \euro} & \textcolor[rgb]{ 0,  .502,  0}{11,845 \euro} & \textcolor[rgb]{ 1,  0,  0}{-6,759 \euro} & 15    & 15    & \textcolor[rgb]{ 0,  .502,  0}{339.05 \euro} & \textcolor[rgb]{ 0,  .502,  0}{5.09\%} & \textcolor[rgb]{ 0,  .502,  0}{10.59\%} \\
    \midrule
    Test  & \textcolor[rgb]{ 0,  .502,  0}{4,266 \euro} & \textcolor[rgb]{ 0,  .502,  0}{11,122 \euro} & \textcolor[rgb]{ 1,  0,  0}{-6,857 \euro} & 15    & 15    & \textcolor[rgb]{ 0,  .502,  0}{284.38 \euro} & \textcolor[rgb]{ 0,  .502,  0}{4.27\%} & \textcolor[rgb]{ 0,  .502,  0}{8.69\%} \\
    \bottomrule
    \end{tabular}}
  \label{tab:stat1}%
\end{table}%

\begin{table}[H]
  \centering
  \caption{Train test statistics 2/4}
\resizebox{\textwidth}{!}{
    \begin{tabular}{|c|c|c|c|c|c|c|c|c|}
    \toprule
    Performance & Vol   & Sharpe Ratio & Trades per Day & Avg. Time in Market & Max. Drawdown & Recovery Factor & Daily Ann. Vol & Monthly Ann. Vol \\
    \midrule
    Train & \textcolor[rgb]{ 0,  .502,  0}{6.54\%} & \textcolor[rgb]{ 0,  .502,  0}{1.62} & 0.10  & 8d14h & \textcolor[rgb]{ 1,  0,  0}{-2,941 \euro} & \textcolor[rgb]{ 0,  .502,  0}{3.510} & \textcolor[rgb]{ 0,  .502,  0}{6.54\%} & \textcolor[rgb]{ 0,  .502,  0}{5.72\%} \\
    \midrule
    Test  & \textcolor[rgb]{ 0,  .502,  0}{6.20\%} & \textcolor[rgb]{ 0,  .502,  0}{1.40} & 0.10  & 8d19h & \textcolor[rgb]{ 1,  0,  0}{-1,721 \euro} & \textcolor[rgb]{ 0,  .502,  0}{4.948} & \textcolor[rgb]{ 0,  .502,  0}{6.20\%} & \textcolor[rgb]{ 0,  .502,  0}{5.32\%} \\
    \bottomrule
    \end{tabular}}%
  \label{tab:stat2}%
\end{table}%

\begin{table}[H]
  \centering
  \caption{Train test statistics 3/4}
\resizebox{\textwidth}{!}{
    \begin{tabular}{|c|c|c|c|c|c|c|c|c|}
    \toprule
    Performance & Daily Sharpe Ratio & Daily Sortino Ratio & Commission & Percent Profitable & Profit Factor & \# of Winning Trades & Avg. Winning Trade & Max. conseq. Winners \\
    \midrule
    Train & \textcolor[rgb]{ 0,  .502,  0}{1.62} & \textcolor[rgb]{ 0,  .502,  0}{2.35} & \textcolor[rgb]{ 1,  0,  0}{49 \euro} & 46.67\% & 1.75 \euro & 7     & \textcolor[rgb]{ 0,  .502,  0}{1,692.09 \euro} & 3 \\
    \midrule
    Test  & \textcolor[rgb]{ 0,  .502,  0}{1.40} & \textcolor[rgb]{ 0,  .502,  0}{2.05} & \textcolor[rgb]{ 1,  0,  0}{46 \euro} & 46.67\% & 1.62 \euro & 7     & \textcolor[rgb]{ 0,  .502,  0}{1,588.92 \euro} & 2 \\
    \bottomrule
    \end{tabular}}%
  \label{tab:stat3}%
\end{table}%

\begin{table}[H]
  \centering
  \caption{Train test statistics 4/4}
\resizebox{\textwidth}{!}{
    \begin{tabular}{|c|c|c|c|c|c|c|c|c|}
    \toprule
    Performance & Largest Winning Trade & \# of Losing Trades & Avg. Losing Trade & Max. conseq. Losers & Largest Losing Trade & Avg. Win/Avg. Loss & Avg. Bars in Trade & Time to Recover \\
    \midrule
    Train & \textcolor[rgb]{ 0,  .502,  0}{1,776.11 \euro} & 8     & \textcolor[rgb]{ 1,  0,  0}{-844.85 \euro} & 3     & \textcolor[rgb]{ 1,  0,  0}{-1,011.82 \euro} & 2.00  & 6.1   & 77.00 days \\
    \midrule
    Test  & \textcolor[rgb]{ 0,  .502,  0}{1,609.32 \euro} & 8     & \textcolor[rgb]{ 1,  0,  0}{-857.1 \euro} & 2     & \textcolor[rgb]{ 1,  0,  0}{-860.26 \euro} & 1.85  & 6.2   & 70.00 days \\
    \bottomrule
    \end{tabular}}%
  \label{tab:stat4}%
\end{table}%

\end{minipage}

\newpage

\begin{center}
\captionof{figure}{Kalman filter algorithm on train data set}
\includegraphics[width=13cm]{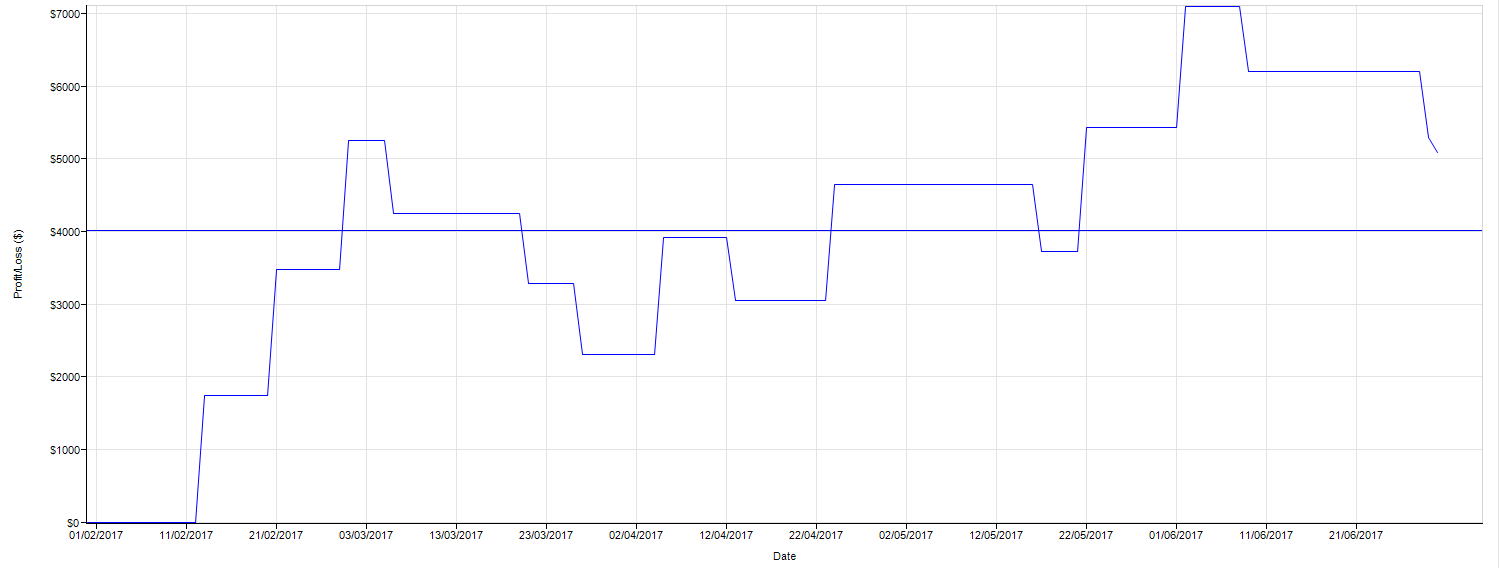}
\label{fig:train}
\end{center}

\begin{center}
\captionof{figure}{Kalman filter algorithm on test data set}
\includegraphics[width=13cm]{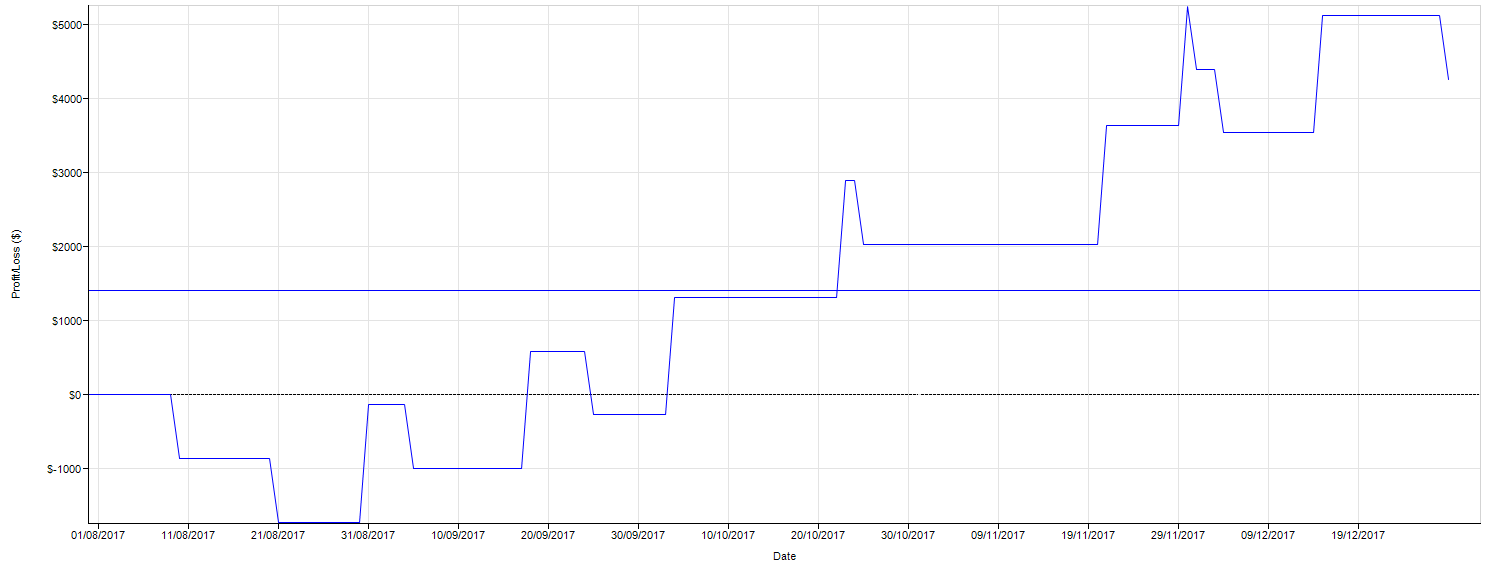}
\label{fig:test}
\end{center}

We compare our algorithm with a traditional moving average crossover algorithm to test the efficiency of Kalman filter for trend detection. The moving average cross over algorithm generates a buy signal when the fast moving average crosses over the long moving average and a sell signal when the former crosses below the latter. A $d$ period moving average is defined as the arithmetic average of the daily close over a $d$ period, denoted by $\mathrm{SMA}(d)$. Our algorithm is given by the following pseudo code

\begin{algorithm}[H]
\caption{Moving Average Trend following algorithm}
	\begin{algorithmic} 
	\State \textbf{Initialize common trade details}
	\State SetProfitTarget( target)							\Comment{fixed profit target in ticks}
	\State SetStopLoss( stop\_loss )							\Comment{fixed stop loss in ticks}
	\\
	\While{ Not In Position}											\Comment{look for new trade}
		\If{ $\mathrm{SMA}\text{(Short)}[0] > \mathrm{SMA}\text{(Long)}[0] + \text{offset}$ } 	\Comment{up trend signal}
			\State EnterLong()										\Comment{market order for the open}
		\ElsIf{  $\mathrm{SMA}\text{(Short)}[0] < \mathrm{SMA}\text{(Long)}[0] + \text{offset}$ } 	\Comment{down trend signal}
			\State EnterShort()										\Comment{market order for the open}
		\EndIf
	\EndWhile
	\end{algorithmic}
\end{algorithm}

We can now compare moving average cross over versus Kalman filter algorith. The table \ref{tab:mavskf} compares the two algorithms. We can see that on the train period, the two algorithms have similar performances : $ 5,260$ vs $5,086$. However on the test period, moving average performs very badly with a net profit of $935$ versus $4,266$ for the bayesian graphical model (the kalman filter) algorithm.

\hspace{-1.5cm}
\begin{minipage}{1.2\textwidth}

\begin{table}[H]
  \centering
  \caption{Moving average cross over versus Kalman filter}
\resizebox{\textwidth}{!}{
    \begin{tabular}{|c|c|c|c|c|c|c|c|c|}
    \toprule
    Algo  & Total Net Profit & Recovery Factor & Profit Factor & Max. Drawdown & Sharpe Ratio & Total \# of Trades & Percent Profitable & Train: Total Net Profit \\
    \midrule
    \textcolor[rgb]{ .125,  .122,  .208}{MA Cross over} & \textcolor[rgb]{ 0,  .502,  0}{935 \euro} &                    0.32  &             1.13  & \textcolor[rgb]{ 1,  0,  0}{-\euro 2,889} &              0.41  &                         26  &                      0.54  &                     5,260.00  \\
    \midrule
    \textcolor[rgb]{ .125,  .122,  .208}{Kalman filter} & \textcolor[rgb]{ 0,  .502,  0}{4,266 \euro} &                    2.48  &             1.62  & \textcolor[rgb]{ 1,  0,  0}{-\euro 1,721} &              1.40  &                         30  &                      0.47  &                     5,085.79  \\
    \bottomrule
    \end{tabular}}%
  \label{tab:mavskf}%
\end{table}
\end{minipage}

\begin{center}
\captionof{figure}{Moving Average Crossover algorithm on train data set}
\includegraphics[width=13cm]{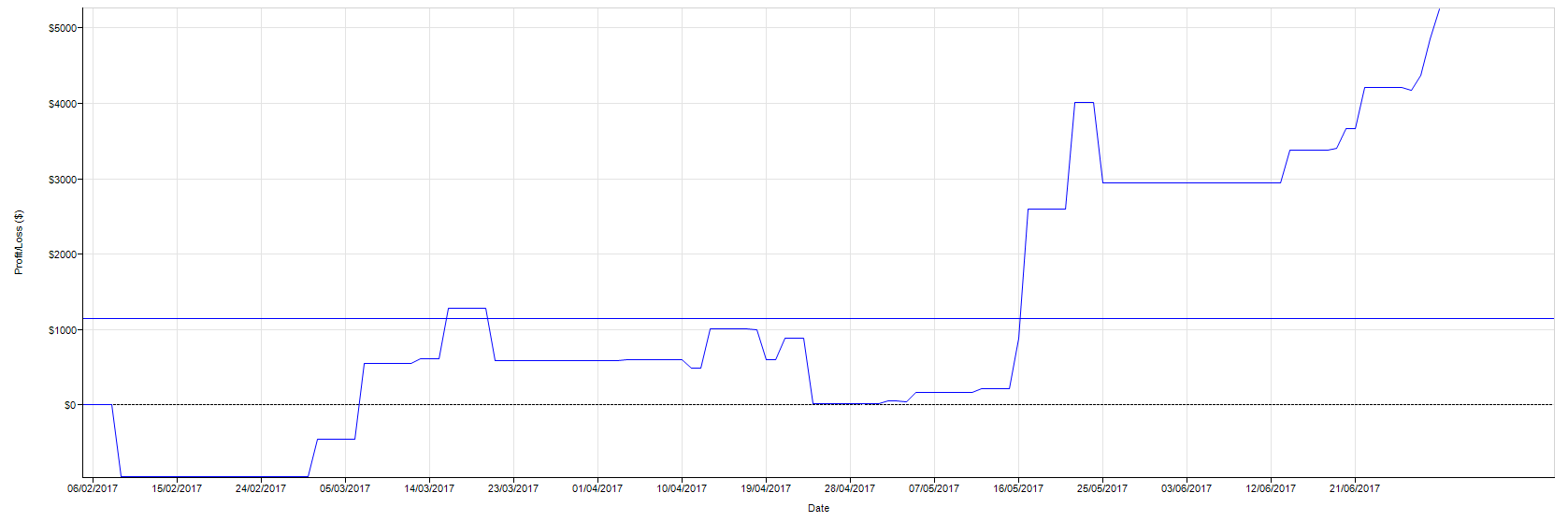}
\label{fig:ma_train}
\end{center}

\newpage
\begin{center}
\captionof{figure}{Moving Average Crossover algorithm on test data set}
\includegraphics[width=13cm]{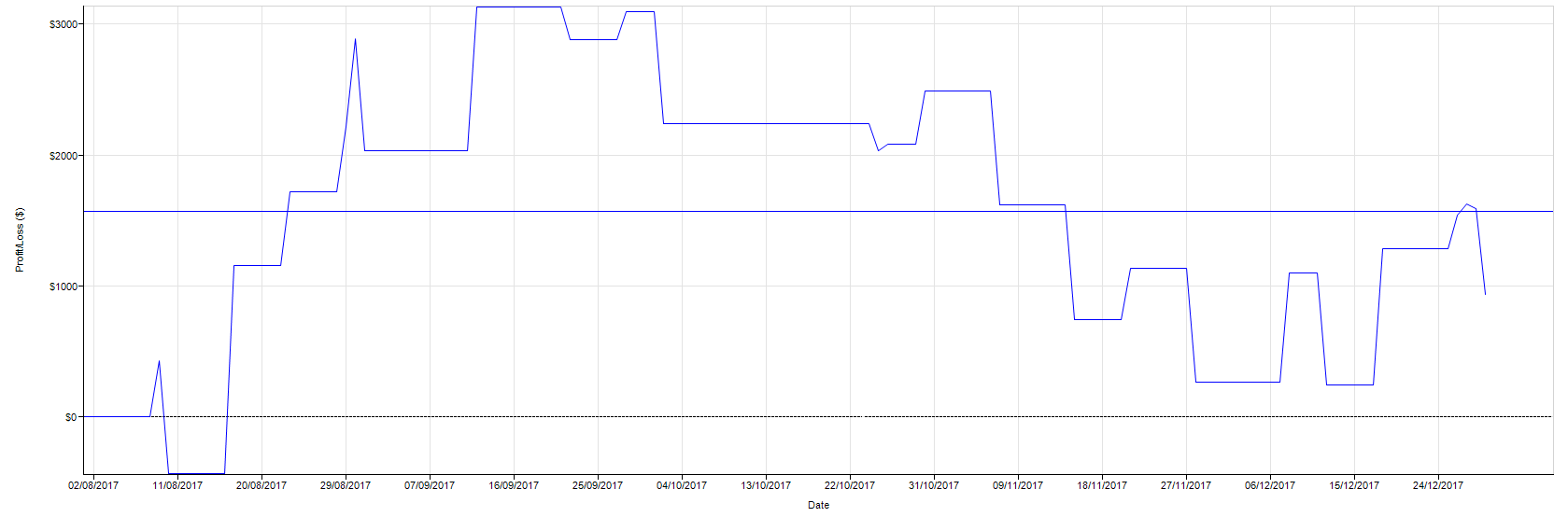}
\label{fig:ma_test}
\end{center}

\section{Conclusion}
In this paper, we have revisited the Kalman filter theory and showed the strong connection between Kalman filter and Hidden Markov Models. After discussing various recursive scheme for the underlying Bayesian graphical model, we provided various insight on the parameter estimation. The traditional method is to use the expectation maximization (EM) method. This method aims to find the best fit distribution according to our modeling assumptions. This may work well when the state space model is ruled by physical law. However, in finance, there is no such a thing as the physical law of markets. One makes assumptions about the price dynamics that may be completely wrong. Hence, we present a new method based on CMA-ES optimization that does not decouple the Bayesian graphical model parameters from the other parameters of the financial strategy. We show on a numerical example on the S\&P 500 index futures over one year of data (from Jan 1st 2017 to Jan 1st 2018) that the combined estimation of the Bayesian graphical model works well and that there is no much performance deterioration between train and test. This suggests little over fitting. We also show that the traditional moving average cross over method does not perform that well compared to our method and show strong over fitting bias as the net performance over the test data set is dramatically lower than the one on the train, suggesting over fitting.

\clearpage

\appendix

\section{Kalman filter proof}\label{Kalman_filter_proof}
The proof is widely known and presented in various books (see for instance \cite{Gelb_1986} or \cite{Brown_1997}). We present it here for the sake of completeness. It consists in three steps:
\begin{itemize}
\item Derive the posteriori estimate covariance matrix
\item Compute the Kalman gain
\item Simplify the posteriori error covariance formula
\end{itemize}

\subsection{Step 1: A posteriori estimate covariance matrix} \label{step1}
\begin{proof}
The error covariance $\mathbf{P}_{t\mid t}$ is defined as the covariance between $\mathbf{x}_t$ and $\mathbf{x}_{t+1}$: $\mathbf{P}_{t\mid t} = \operatorname{Cov}\left(\mathbf{x}_t - \hat{\mathbf{x}}_{t\mid t}\right)$. 
We can substitute in this equation the definition of $\hat{\mathbf{x}}_{t\mid t}$  given by equation (\ref{correct_eq4}) to get:
\begin{equation}
\mathbf{P}_{t\mid t} = \operatorname{Cov}\left[\mathbf{x}_t - \left(\hat{\mathbf{x}}_{t\mid t-1} + \mathbf{K}_t\tilde{\mathbf{y}}_t\right)\right]
\end{equation}

Using the definition of $\tilde{\mathbf{y}}_t$ from equation (\ref{correct_eq1}), we get:

\begin{equation}
\mathbf{P}_{t\mid t} = \operatorname{Cov}\left(\mathbf{x}_t - \left[\hat{\mathbf{x}}_{t\mid t-1} + \mathbf{K}_t\left(\mathbf{z}_t - \mathbf{H}_t\hat{\mathbf{x}}_{t\mid t-1}\right)\right]\right)
\end{equation}

Using the measurement equation for $\mathbf{z}_t$ (equation  (\ref{measurement_equation})), this transforms into:
\begin{equation}
\mathbf{P}_{t\mid t} = \operatorname{Cov}\left(\mathbf{x}_t - \left[\hat{\mathbf{x}}_{t\mid t-1} + \mathbf{K}_t\left(\mathbf{H}_t\mathbf{x}_t + \mathbf{v}_t - \mathbf{H}_t\hat{\mathbf{x}}_{t\mid t-1}\right)\right]\right)
\end{equation}

By collecting the error vectors, this results in:
\begin{equation}
\mathbf{P}_{t\mid t} = \operatorname{Cov}\left[\left(\mathbf{I} - \mathbf{K}_t \mathbf{H}_t)(\mathbf{x}_t - \hat{\mathbf{x}}_{t\mid t-1}\right) - \mathbf{K}_t \mathbf{v}_t \right]
\end{equation}

Since the measurement error $v_{k}$ is not correlated with the other terms, this simplifies into:
\begin{equation}
\mathbf{P}_{t\mid t} = \operatorname{Cov}\left[\left(\mathbf{I} - \mathbf{K}_t \mathbf{H}_{k}\right)\left(\mathbf{x}_t - \hat{\mathbf{x}}_{t\mid t-1}\right)\right] + \operatorname{Cov}\left[\mathbf{K}_t \mathbf{v}_t\right]
\end{equation}

By property of covariance for matrix: $\operatorname{Cov}(A \mathbf{x}_t ) = A \operatorname{Cov}(\mathbf{x}_t ) A^\mathrm{T}$ and using the definition for 
$\mathbf{P}_{t\mid t-1}$ and $\mathbf{R}_t = \operatorname{Cov}\left(\mathbf{v}_t\right)$, we get the final result:
\begin{equation}\label{Joseph_form}
\mathbf{P}_{t\mid t} = (\mathbf{I} - \mathbf{K}_t \mathbf{H}_t) \mathbf{P}_{t\mid t-1} (\mathbf{I} - \mathbf{K}_t \mathbf{H}_t)^\mathrm{T} + \mathbf{K}_t \mathbf{R}_t \mathbf{K}_t^\mathrm{T} 
\end{equation}

This last equation is referred to as the \textit{Joseph form} of the covariance update equation. It is true for any value of Kalman gain $\mathbf{K}_t$ no matter if it is optimal or not. In the special case of optimal Kalman gain, this further reduces to equation (\ref{reduced_form}) which we will prove in \ref{step3}
\end{proof}

\subsection{Step 2: Kalman gain}\label{step2}
\begin{proof}
The goal of the Kalman filter is to find the minimum mean-square error estimator. It is obtained by minimizing the error of the \textit{a posteriori} state given by 
$\mathbf{x}_t - \hat{\mathbf{x}}_{t\mid t}$. This error is stochastic, hence we are left with minimizing the expected value of the square of the $L_2$ norm 
of this vector given by $\operatorname{E}\left[\left\|\mathbf{x}_{t} - \hat{\mathbf{x}}_{t|t}\right\|^2\right]$. 
Since the error of the \textit{a posteriori} state follows a normal distribution with zero mean and covariance given by $\mathbf{P}_{t|t}$, the optimization program is equivalent to minimizing the matrix trace of the a posteriori estimate covariance matrix $\mathbf{P}_{t|t}$. \\

Expanding out and collecting the terms in equation (\ref{Joseph_form}), we find:
\begin{align}\label{expandedequation}
 \mathbf{P}_{t\mid t} & = \mathbf{P}_{t\mid t-1} - \mathbf{K}_t \mathbf{H}_t \mathbf{P}_{t\mid t-1} - \mathbf{P}_{t\mid t-1} \mathbf{H}_t^\mathrm{T} \mathbf{K}_t^\mathrm{T} + \mathbf{K}_t \mathbf{S}_t\mathbf{K}_t^\mathrm{T}
\end{align}

This minimization program is very simple. It is quadratic in the Kalman gain, $\mathbf{K}_t$. The optimum is obtained when the derivative with respect to $\mathbf{K}_t$ is null. The equation for the critical point writes as:
\begin{equation}
\frac{\partial \; \mathrm{tr}(\mathbf{P}_{t\mid t})}{\partial \;\mathbf{K}_t} = -2 (\mathbf{H}_t \mathbf{P}_{t\mid t-1})^\mathrm{T} + 2 \mathbf{K}_t \mathbf{S}_t = 0.
\end{equation}

We can solving for $\mathbf{K}_t$ to find the \textit{optimal} Kalman gain as follows:

\begin{align} \label{KalmanGain}
\mathbf{K}_t &= \mathbf{P}_{t\mid t-1} \mathbf{H}_t^\mathrm{T} \mathbf{S}_t^{-1}
\end{align}
\end{proof}

\subsection{Step 3: Simplification of the posteriori error covariance formula}\label{step3}
\begin{proof}
We can work on our formula further to simplify equation (\ref{Joseph_form}) as follows. A trick is to remark that in equation (\ref{KalmanGain}), multiplying on the right both sides by $\mathbf{S}_t \mathbf{K}_t^\mathrm{T} $, we find that:
\begin{equation}
\mathbf{K}_t \mathbf{S}_t \mathbf{K}_t^\mathrm{T} = \mathbf{P}_{t\mid t-1} \mathbf{H}_t^\mathrm{T} \mathbf{K}_t^\mathrm{T}
\end{equation}

Injecting this equality in equation (\ref{expandedequation}) and noticing that the last two terms cancel out, we get:
\begin{equation}\label{simpleequation}
\mathbf{P}_{t\mid t} =  (\mathbf{I} - \mathbf{K}_t \mathbf{H}_t) \mathbf{P}_{t\mid t-1}.
\end{equation}

This last formula appeals some remarks. It makes Kalman filter dammed simple. It is computationally  cheap and much cheaper than the equation (\ref{expandedequation}). Thus it is nearly always used in practice. However, this equation is only correct for the optimal gain. If by back luck, arithmetic precision is unusually low due to numerical instability, or if the Kalman filter gain is non-optimal (intentionally or not), equation \ref{simpleequation} does not hold any more. Using it would lead to inaccurate results and can be troublesome. In this particular case, the complete formula given by equation (\ref{expandedequation}) must be used.
\end{proof}

\section{Factor Analysis model}\label{factor_analysis_proof}
\subsection{Model Presentation and proof}
We are interested in the following very simple graphical model called the Gaussian Factor analysis model. It is represented by a two nodes graphical models in figure \ref{factor_analysis}

\begin{figure}[h]
\centering

\begin{tikzpicture}
 \node[main, label= {$X$}] (L1) []  {};
 \node[main,fill=black!35,  label=below:{$Z$}] (O1) [below=of L1] {}; 
 \path (L1) edge [connect] (O1);
\end{tikzpicture}
\caption{Factor analysis model as a graphical model} \label{factor_analysis}
\end{figure}
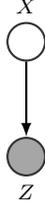

We assume that $X$ follows a multi dimensional normal distribution $X \sim \mathcal{N}(0, \Sigma)$. We assume that $X$  is a latent variable and that only $Z$ is observed. We also assume that there is a direct linear relationship between $X$ and $Z$ given by

$$
Z = \mu + \Lambda X + W
$$
where $W$ is also distributed as a a multi dimensional normal distribution $W \sim \mathcal{N}(0, \Psi)$ independent of $X$. This is obviously a simplified version of our Linear Gaussian State Space Model and can provide building blocks for deriving the Kalman filter. Trivially, we can compute the unconditional distribution of $Z$ as a normal distribution whose mean is $\mathbb{E}[Z]=\mathbb{E}[ \mu + \Lambda X + W] = \mu$. Likewise, it is immediate to compute the unconditional covariance matrix of $Z$ as 
\begin{eqnarray}
\operatorname{Var}(Z) &=& \mathbb{E}[ (\mu + \Lambda X + W)(\mu + \Lambda X + W)^\mathrm{T} ] \\
&=& \Lambda \Sigma \Lambda^\mathrm{T} + \Psi
\end{eqnarray}

The covariance between $X$ and $Z$ is also easy to compute and given by

\begin{eqnarray}
\operatorname{Cov}(X,Z) &=& \mathbb{E}[ X(\mu + \Lambda X + W)^\mathrm{T} ] \\
&=& \Sigma \Lambda^\mathrm{T}
\end{eqnarray}

Hence, collecting all our results, we have shown that the joint distribution of $X$ and $Z$ is a Gaussian given by
$$
 \mathcal{N}\left( \left[ \begin{array}{c} 0 \\ \mu \end{array} \right], 
\left[
\begin{array}{l l }
\Sigma 			&  \Sigma \Lambda^\mathrm{T} \\
\Lambda  \Sigma 	&  \Lambda \Sigma \Lambda^\mathrm{T} + \Psi
\end{array}
\right] \right).
$$

We shall be interested in computing the conditional distribution of $X$  given $Z$. Using lemma provided in \ref{conditionalGaussian}, we obtained that $X \mid Z$ is a multi dimensional Gaussian whose mean is:
\begin{eqnarray}
\mathbb{E}(X \mid z ) &= & \Sigma \Lambda^\mathrm{T} (\Lambda \Sigma \Lambda^\mathrm{T} + \Psi)^{-1} (z-\mu)
\end{eqnarray}

Using the Woodbury matrix inversion formula (see formula (158) in \cite{Petersen_2012}), this is also
\begin{eqnarray}
\mathbb{E}(X \mid z ) &= &(\Sigma^{-1} + \Lambda^\mathrm{T} \Psi^{-1} \Lambda^{-1})  \Lambda^\mathrm{T} \Psi^{-1} (z-\mu)
\end{eqnarray}

 The two formula are equivalent and one should use the one that is easier to compute depending on the dimension of $X$ and $Z$. Our partitioned Gaussian lemma \ref{conditionalGaussian} gives us also the conditional variance of $X$:

\begin{eqnarray}
\operatorname{Var}(X \mid z) &=& \Sigma - \Sigma \Lambda^\mathrm{T} ( \Lambda \Sigma \Lambda^\mathrm{T} + \Psi)^{-1} \Lambda  \Sigma \\
&=& (\Sigma^{-1}  + \Lambda^\mathrm{T} \Psi^{-1} \Lambda)^{-1}
\end{eqnarray}

where in the last equation, we have used again the Woodbury matrix inversion formula (Woodbury variant formula (157) in \cite{Petersen_2012}).

\subsection{A quick lemma about partitioned Gaussian}\label{conditionalGaussian}
\begin{lemma}
Let $Y$ be a multivariate normal vector ${\boldsymbol Y} \sim \mathcal{N}(\boldsymbol\mu, \Sigma)$. Consider partitioning ${\boldsymbol Y}$, $\boldsymbol\mu$, $\Sigma$ into 

$$
\boldsymbol\mu =
\begin{bmatrix}
 \boldsymbol\mu_1 \\
 \boldsymbol\mu_2
\end{bmatrix} 
\qquad \qquad 
{\boldsymbol Y}=
\begin{bmatrix}
{\boldsymbol y}_1 \\ 
{\boldsymbol y}_2 
\end{bmatrix}
\qquad \qquad 
\Sigma=
\begin{bmatrix}
\Sigma_{11} & \Sigma_{12}\\
\Sigma_{21} &\Sigma_{22}
\end{bmatrix}
$$

Then, $({\boldsymbol y}_1|{\boldsymbol y}_2={\boldsymbol a})$, the conditional distribution of the first partition given the second, is 
$\mathcal{N}(\overline{\boldsymbol\mu},\overline{\Sigma})$, with mean
$$
\boldsymbol\mu_{1\mid2}=\boldsymbol\mu_1+\Sigma_{12}{\Sigma_{22}}^{-1}({\boldsymbol a}-\boldsymbol\mu_2)
$$
and covariance matrix
$$
 \Sigma_{1\mid2}=\Sigma_{11}-\Sigma_{12}{\Sigma_{22}}^{-1}\Sigma_{21}$$
\end{lemma}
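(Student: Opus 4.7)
The plan is to avoid inverting the full block covariance matrix directly and instead use an orthogonal decomposition trick. I would introduce the auxiliary random vector
\begin{equation*}
\boldsymbol{z}_1 \;=\; \boldsymbol{y}_1 - \Sigma_{12}\Sigma_{22}^{-1}(\boldsymbol{y}_2 - \boldsymbol{\mu}_2),
\end{equation*}
which is an affine transformation of $(\boldsymbol{y}_1,\boldsymbol{y}_2)$, chosen precisely so that the ``$\boldsymbol{y}_2$ part'' is subtracted off in the sense of least-squares projection. The key observation is that $\boldsymbol{z}_1$ and $\boldsymbol{y}_2$ are uncorrelated: a one-line computation gives $\operatorname{Cov}(\boldsymbol{z}_1,\boldsymbol{y}_2) = \Sigma_{12} - \Sigma_{12}\Sigma_{22}^{-1}\Sigma_{22} = 0$.

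Next I would promote this uncorrelatedness to full independence. Since $(\boldsymbol{z}_1,\boldsymbol{y}_2)$ is a linear transformation of the jointly Gaussian vector $(\boldsymbol{y}_1,\boldsymbol{y}_2)$, it is itself jointly Gaussian, and for jointly Gaussian vectors zero covariance implies independence. I would then compute the unconditional moments of $\boldsymbol{z}_1$: by linearity $\mathbb{E}[\boldsymbol{z}_1] = \boldsymbol{\mu}_1$, and expanding the variance gives
\begin{equation*}
\operatorname{Var}(\boldsymbol{z}_1) = \Sigma_{11} - \Sigma_{12}\Sigma_{22}^{-1}\Sigma_{21} - \Sigma_{12}\Sigma_{22}^{-1}\Sigma_{21} + \Sigma_{12}\Sigma_{22}^{-1}\Sigma_{22}\Sigma_{22}^{-1}\Sigma_{21} = \Sigma_{11} - \Sigma_{12}\Sigma_{22}^{-1}\Sigma_{21},
\end{equation*}
so $\boldsymbol{z}_1 \sim \mathcal{N}\bigl(\boldsymbol{\mu}_1,\;\Sigma_{11} - \Sigma_{12}\Sigma_{22}^{-1}\Sigma_{21}\bigr)$.

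Finally I would reassemble the result. By construction $\boldsymbol{y}_1 = \boldsymbol{z}_1 + \Sigma_{12}\Sigma_{22}^{-1}(\boldsymbol{y}_2 - \boldsymbol{\mu}_2)$. Conditioning on $\boldsymbol{y}_2 = \boldsymbol{a}$, the second summand becomes a deterministic shift equal to $\Sigma_{12}\Sigma_{22}^{-1}(\boldsymbol{a}-\boldsymbol{\mu}_2)$, while by the independence established above the distribution of $\boldsymbol{z}_1$ is unaffected by the conditioning. Hence $(\boldsymbol{y}_1 \mid \boldsymbol{y}_2 = \boldsymbol{a})$ is Gaussian with mean $\boldsymbol{\mu}_1 + \Sigma_{12}\Sigma_{22}^{-1}(\boldsymbol{a}-\boldsymbol{\mu}_2)$ and covariance $\Sigma_{11} - \Sigma_{12}\Sigma_{22}^{-1}\Sigma_{21}$, as claimed.

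There is no real obstacle beyond the routine algebra; the main conceptual step is recognizing that the Schur-complement formula can be obtained for free from this decomposition rather than from brute-force block inversion. The only implicit hypothesis worth flagging is that $\Sigma_{22}$ must be invertible (if it is only positive semidefinite, the same argument goes through with the Moore--Penrose pseudoinverse, provided $\boldsymbol{a}$ lies in the appropriate affine subspace).
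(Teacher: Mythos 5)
Your proof is correct and follows essentially the same route as the paper: define the auxiliary variable $\boldsymbol{z}_1 = \boldsymbol{y}_1 - \Sigma_{12}\Sigma_{22}^{-1}(\boldsymbol{y}_2-\boldsymbol{\mu}_2)$ (the paper uses the uncentered version $\boldsymbol{y}_1 - \Sigma_{12}\Sigma_{22}^{-1}\boldsymbol{y}_2$, a deterministic shift), check it is uncorrelated with and hence independent of $\boldsymbol{y}_2$ by joint Gaussianity, and read off the conditional mean and Schur-complement covariance. The remark about $\Sigma_{22}$ needing to be invertible (or replaced by a pseudoinverse) is a sensible addition the paper omits.
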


\begin{proof}
Let ${\bf y}_{1}$ be the first partition and ${\bf y}_2$ the second. Now define ${\bf z} = {\bf y}_1 + {\bf A} {\bf y}_2 $ where ${\bf A} = -\Sigma_{12} \Sigma^{-1}_{22}$. We can easily compute the covariance between $z$ and $y$ as follows:

\begin{align*} {\rm cov}({\bf z}, {\bf y}_2) &= {\rm cov}( {\bf y}_{1}, {\bf y}_2 ) + 
{\rm cov}({\bf A}{\bf y}_2, {\bf y}_2) \\
&= \Sigma_{12} + {\bf A} \operatorname{Var}({\bf y}_2) \\
&= 0
\end{align*}

Since ${\bf z}$ and ${\bf y}_2$ are jointly normal and uncorrelated, they are independent. 
The expectation  of ${\bf z}$ is trivially computed as 
$E({\bf z}) = {\boldsymbol \mu}_1 + {\bf A}  {\boldsymbol \mu}_2$. 
Using this, we compute the conditional expectation of ${\bf y}_1$ given ${\bf y}_2$ as follows:

\begin{align*}
E({\bf y}_1 | {\bf y}_2) &= E( {\bf z} - {\bf A} {\bf y}_2 | {\bf y}_2) \\
& = E({\bf z}|{\bf y}_2) -  E({\bf A}{\bf y}_2|{\bf y}_2) \\
& = E({\bf z}) - {\bf A}{\bf y}_2 \\
& = {\boldsymbol \mu}_1 + \Sigma_{12} \Sigma^{-1}_{22} ({\bf y}_2- {\boldsymbol \mu}_2)
\end{align*}

which proves the first part. The second part is as simple. Note that 

\begin{align*}
\operatorname{Var}({\bf y}_1|{\bf y}_2) &= \operatorname{Var}({\bf z} - {\bf A} {\bf y}_2 | {\bf y}_2) \\
&= \operatorname{Var}({\bf z}|{\bf y}_2) \\
&= \operatorname{Var}({\bf z})
\end{align*}
since ${\bf z}$ and ${\bf y}_2$ are independent. We can trivially conclude using the following last computation:
\begin{align*}
\operatorname{Var}({\bf y}_1|{\bf y}_2) = \operatorname{Var}( {\bf z} ) &= \operatorname{Var}( {\bf y}_1 + {\bf A} {\bf y}_2 ) \\
&= \operatorname{Var}( {\bf y}_1 ) + {\bf A} \operatorname{Var}( {\bf y}_2 ) {\bf A}^\mathrm{T}
+ {\bf A} {\rm cov}({\bf y}_1,{\bf y}_2) + {\rm cov}({\bf y}_2,{\bf y}_1) {\bf A}^\mathrm{T} \\
&= \Sigma_{11} +\Sigma_{12} \Sigma^{-1}_{22} \Sigma_{22}\Sigma^{-1}_{22}\Sigma_{21}
- 2 \Sigma_{12} \Sigma_{22}^{-1} \Sigma_{21} \\
&= \Sigma_{11} -\Sigma_{12} \Sigma^{-1}_{22}\Sigma_{21}
\end{align*}
\end{proof}

\subsection{Derivation of the filter equations}\label{filter_equation_proof}
\begin{proof}
It is easy to start with the precision matrix (the inverse of the covariance matrix). We have
\begin{eqnarray}
\mathbf{\Lambda}_{t \mid t-1} &=&  \mathbf{P}_{t \mid t-1}^{-1} \\
&=& (\mathbf{F}_{t} \mathbf{P}_{t-1 \mid t-1} \mathbf{F}_{t}^\mathrm{T} + \mathbf{Q}_t)^{-1} \\
&=& \mathbf{Q}_t^{-1}-\mathbf{Q}_t^{-1} \mathbf{F}_{t} (\mathbf{P}_{t-1 \mid t-1}^{-1} + \mathbf{F}_{t}^\mathrm{T} \mathbf{Q}_t^{-1} \mathbf{F}_{t})^{-1}  \mathbf{F}_{t}^\mathrm{T} \mathbf{Q}_t^{-1} \\
&=& \mathbf{Q}_t^{-1}-\mathbf{Q}_t^{-1} \mathbf{F}_{t} (\mathbf{\Lambda}_{t-1 \mid t-1} + \mathbf{F}_{t}^\mathrm{T} \mathbf{Q}_t^{-1} \mathbf{F}_{t})^{-1}  \mathbf{F}_{t}^\mathrm{T} \mathbf{Q}_t^{-1} 
\end{eqnarray}

where in the previous equation, we have used extensively matrix inversion formula. 
Similarly, applying matrix inversion but for $\mathbf{\Lambda}_{t \mid t}$ (conditioned on $t$ now) 
\begin{eqnarray}
\mathbf{\Lambda}_{t \mid t} &=&  \mathbf{P}_{t \mid t}^{-1} \\
&=&(  \mathbf{P}_{t \mid t-1}- \mathbf{P}_{t \mid t-1}  \mathbf{H}_{t}^\mathrm{T} (\mathbf{H}_{t}  \mathbf{P}_{t \mid t-1}  \mathbf{H}_{t}^\mathrm{T} + \mathbf{R}_{t})^{-1} \mathbf{H}_{t}  \mathbf{P}_{t \mid t-1} )^{-1} \\
&=&\mathbf{P}_{t \mid t-1}^{-1}+\mathbf{H}_{t}^\mathrm{T} \mathbf{R}_{t}^{-1}\mathbf{H}_{t}\\
&=&\mathbf{\Lambda}_{t \mid t-1} +\mathbf{H}_{t}^\mathrm{T} \mathbf{R}_{t}^{-1}\mathbf{H}_{t}
\end{eqnarray}

We can now handle the parameter $\eta$. We have
\begin{eqnarray}
\hat{\eta}_{t \mid t-1} &=& \mathbf{P}_{t \mid t-1}^{-1} \hat{\mathbf{x}}_{t \mid t-1} \\
&=& \mathbf{P}_{t \mid t-1}^{-1}  \left( \mathbf{F}_{t} \hat{\mathbf{x}}_{t-1 \mid t-1} + \mathbf{B}_t   \mathbf{u}_t \right) \\
&=&\mathbf{P}_{t \mid t-1}^{-1}   \left( \mathbf{F}_{t} \mathbf{P}_{t-1 \mid t-1} \hat{\eta}_{t-1 \mid t-1} + \mathbf{B}_t   \mathbf{u}_t \right)\\
&=&( \mathbf{F}_{t} \mathbf{P}_{t-1 \mid t-1} \mathbf{F}_{t}^\mathrm{T} + \mathbf{Q}_{t})^{-1} \left(\mathbf{F}_{t} \mathbf{P}_{t-1 \mid t-1} \hat{\eta}_{t-1 \mid t-1} + \mathbf{B}_t   \mathbf{u}_t \right) \nonumber \\
&=& \mathbf{Q}_{t}^{-1} \mathbf{F}_{t} (\mathbf{P}_{t-1 \mid t-1}^{-1}+\mathbf{F}_{t}^\mathrm{T}  \mathbf{Q}_{t}^{-1} \mathbf{F}_{t})^{-1}\hat{\eta}_{t-1 \mid t-1} \\
& & \hspace{3cm}+ ( \mathbf{F}_{t} \mathbf{P}_{t-1 \mid t-1} \mathbf{F}_{t}^\mathrm{T} + \mathbf{Q}_{t})^{-1} \mathbf{B}_t   \mathbf{u}_t  \\
&=& \mathbf{Q}_{t}^{-1} \mathbf{F}_{t} (\mathbf{\Lambda}_{t-1 \mid t-1} +\mathbf{F}_{t}^\mathrm{T}  \mathbf{Q}_{t}^{-1} \mathbf{F}_{t})^{-1}\hat{\eta}_{t-1 \mid t-1} \nonumber \\
& & \!\!\!\!\!  + (\mathbf{Q}_{t}^{-1} \! \!- \! \mathbf{Q}_{t}^{-1} \mathbf{F}_{t} ( \mathbf{\Lambda}_{t-1 \mid t-1} \!+ \! \mathbf{F}_{t}^\mathrm{T} \mathbf{Q}_{t}^{-1} \mathbf{F}_{t})^{-1} \mathbf{F}_{t}^\mathrm{T} \mathbf{Q}_{t}^{-1} ) \mathbf{B}_t  \mathbf{u}_t 
\end{eqnarray}

Likewise, we derive the same type of equations for $\eta$ but conditioned on $t$ as follows:
\begin{eqnarray}
\hat{\eta}_{t \mid t} &=& \mathbf{P}_{t \mid t}^{-1} \hat{\mathbf{x}}_{t \mid t}  \\
&=& \mathbf{P}_{t \mid t}^{-1} (\hat{\mathbf{x}}_{t \mid t-1} + \mathbf{P}_{t \mid t}\mathbf{H}_{t}^\mathrm{T}\mathbf{R}_{t}^{-1}(\mathbf{z}_{t}-\mathbf{H}_{t}\hat{\mathbf{x}}_{t \mid t-1}))\\
&=& (\mathbf{P}_{t \mid t}^{-1} -\mathbf{H}_{t}^\mathrm{T}\mathbf{R}_{t}^{-1} \mathbf{H}_{t})\mathbf{P}_{t \mid t-1} \hat{\eta}_{t \mid t-1}+\mathbf{H}_{t}^\mathrm{T}\mathbf{R}_{t}^{-1}\mathbf{z}_{t}\\
&=& (\mathbf{P}_{t \mid t-1}^{-1} +\mathbf{H}_{t}^\mathrm{T}\mathbf{R}_{t}^{-1} \mathbf{H}_{t} -\mathbf{H}_{t}^\mathrm{T}\mathbf{R}_{t}^{-1} \mathbf{H}_{t})\mathbf{P}_{t \mid t-1} \hat{\eta}_{t \mid t-1}  \nonumber \\
& & \hspace{5cm} +\mathbf{H}_{t}^\mathrm{T}\mathbf{R}_{t}^{-1}\mathbf{z}_{t}\\
&=& \hat{\eta}_{t \mid t-1}+\mathbf{H}_{t}^\mathrm{T}\mathbf{R}_{t}^{-1} \mathbf{z}_{t}
\end{eqnarray}

Summarizing all these equation leads to the so called filter equations:
\begin{eqnarray}
\hat{\eta}_{t \mid t-1} &=& \mathbf{Q}_{t}^{-1} \mathbf{F}_{t} (\mathbf{\Lambda}_{t-1 \mid t-1} +\mathbf{F}_{t}^\mathrm{T}  \mathbf{Q}_{t}^{-1} \mathbf{F}_{t})^{-1}\hat{\eta}_{t-1 \mid t-1} \nonumber \\
& &\!\!\!\!\! + (\mathbf{Q}_{t}^{-1} \! \!- \! \mathbf{Q}_{t}^{-1} \mathbf{F}_{t} ( \mathbf{\Lambda}_{t-1 \mid t-1} \!+ \! \mathbf{F}_{t}^\mathrm{T} \mathbf{Q}_{t}^{-1} \mathbf{F}_{t})^{-1} \mathbf{F}_{t}^\mathrm{T} \mathbf{Q}_{t}^{-1} ) \mathbf{B}_t  \mathbf{u}_t \\
\hat{\eta}_{t \mid t}&=&\hat{\eta}_{t \mid t-1}+\mathbf{H}_{t}^\mathrm{T}\mathbf{R}_{t}^{-1} \mathbf{z}_{t}\\
\mathbf{\Lambda}_{t \mid t-1} &=&  \mathbf{Q}_t^{-1}-\mathbf{Q}_t^{-1} \mathbf{F}_{t} (\mathbf{\Lambda}_{t-1 \mid t-1} + \mathbf{F}_{t}^\mathrm{T} \mathbf{Q}_t^{-1} \mathbf{F}_{t})^{-1}  \mathbf{F}_{t}^\mathrm{T} \mathbf{Q}_t^{-1} \\
\mathbf{\Lambda}_{t \mid t} &=&  \mathbf{\Lambda}_{t \mid t-1} +\mathbf{H}_{t}^\mathrm{T} \mathbf{R}_{t}^{-1}\mathbf{H}_{t} 
\end{eqnarray}
which concludes the proof.
\end{proof}

\subsection{Proof of RTS recursive equations} \label{RTS_recursive_equation_proof}
\begin{proof}

We start by writing down the distribution of $\mathbf{x}_{t}$  and $\mathbf{x}_{t+1}$ conditioned on $\mathbf{z}_{1}, \ldots, \mathbf{z}_{t}$. 
As $ \hat{\mathbf{x}}_{t+1 \mid t} =  \mathbf{F}_{t+1} \mathbf{x}_{t} + \mathbf{B}_{t+1}  \mathbf{u}_{t+1}$ and using the fact that the control term $\mathbf{B}_{t+1} \mathbf{u}_{t+1}$ is deterministic, we have

\begin{eqnarray}
\mathbb{E}[(\mathbf{x}_{t}-\hat{\mathbf{x}}_{t\mid t} ) (\mathbf{x}_{t+1}- \hat{\mathbf{x}}_{t+1 \mid t})^\mathrm{T}  \mid \mathbf{z}_{1}, \ldots, \mathbf{z}_{t} ] = \mathbf{P}_{t \mid t}  \mathbf{F}_{t+1} ^\mathrm{T} 
\end{eqnarray}

Thus the vector distribution of $\left[ \begin{array}{c} \mathbf{x}_{t\mid t} \\ \mathbf{x}_{t+1\mid t} \end{array} \right]$ has its mean given by
$\left[ \begin{array}{c} \hat{\mathbf{x}}_{t\mid t} \\ \hat{\mathbf{x}}_{t+1\mid t} \end{array} \right]$
and its covariance matrix  given by
$
\left[ \begin{array}{l l}
\mathbf{P}_{t \mid t} &  \mathbf{P}_{t \mid t}  \mathbf{F}_{t+1} ^\mathrm{T}  \\
 \mathbf{F}_{t+1} \mathbf{P}_{t \mid t}& \mathbf{P}_{t+1 \mid t}
\end{array} \right]
$

Let us now work on the backward recursion. We are interested in computing the distribution of $\mathbf{x}_{t}$, conditioned on  $\mathbf{x}_{t+1}$  and $\mathbf{z}_{1}, \ldots, \mathbf{z}_{t}$. Using the Factor analysis model, it is easy to see that 

\begin{eqnarray}
\mathbb{E}[\mathbf{x}_{t}  \mid \mathbf{x}_{t+1}, \mathbf{z}_{1}, \ldots, \mathbf{z}_{t} ] &=& \hat{\mathbf{x}}_{t\mid t}  +  \mathbf{P}_{t \mid t} \mathbf{F}_{t+1} ^\mathrm{T}
 \mathbf{P}_{t+1 \mid t}^{-1} (\mathbf{x}_{t+1} - \hat{\mathbf{x}}_{t+1\mid t} )
\end{eqnarray}

Similarly, we have
\begin{eqnarray}
\operatorname{Var}\left[ \mathbf{x}_{t}  \mid \mathbf{x}_{t+1}, \mathbf{z}_{1}, \ldots, \mathbf{z}_{t} \right] &=& \mathbf{P}_{t \mid t} - \mathbf{P}_{t \mid t}  \mathbf{F}_{t+1}^\mathrm{T} 
 \mathbf{P}_{t+1 \mid t}^{-1} \mathbf{F}_{t+1}  \mathbf{P}_{t \mid t}
\end{eqnarray}

But using the Markov property that states that 
\begin{eqnarray}
\mathbb{E}[\mathbf{x}_{t}  \mid \mathbf{x}_{t+1}, \mathbf{z}_{1}, \ldots, \mathbf{z}_{T} ] &=& \mathbb{E}[\mathbf{x}_{t}  \mid \mathbf{x}_{t+1}, \mathbf{z}_{1}, \ldots, \mathbf{z}_{t} ]
\end{eqnarray}

and
\begin{eqnarray}
\operatorname{Var}[\mathbf{x}_{t}  \mid \mathbf{x}_{t+1}, \mathbf{z}_{1}, \ldots, \mathbf{z}_{t} ] &=& \operatorname{Var}(\mathbf{x}_{t}  \mid \mathbf{x}_{t+1}, \mathbf{z}_{1}, \ldots, \mathbf{z}_{T} ] 
\end{eqnarray}

We have 

\begin{eqnarray}
\mathbb{E}[\mathbf{x}_{t}  \mid \mathbf{x}_{t+1}, \mathbf{z}_{1}, \ldots, \mathbf{z}_{T} ] &=& \hat{\mathbf{x}}_{t\mid t}  +  \mathbf{P}_{t \mid t} \mathbf{F}_{t+1} ^\mathrm{T}
 \mathbf{P}_{t+1 \mid t}^{-1} (\mathbf{x}_{t+1} - \hat{\mathbf{x}}_{t+1\mid t} )
\end{eqnarray}

Similarly, we have
\begin{eqnarray}
\operatorname{Var}[ \mathbf{x}_{t}  \mid \mathbf{x}_{t+1}, \mathbf{z}_{1}, \ldots, \mathbf{z}_{T} ] &=& \mathbf{P}_{t \mid t} - \mathbf{P}_{t \mid t}  \mathbf{F}_{t+1}^\mathrm{T} 
 \mathbf{P}_{t+1 \mid t}^{-1} \mathbf{F}_{t+1}  \mathbf{P}_{t \mid t}^{-1}
\end{eqnarray}

Using conditional properties states in appendix section \ref{conditional_formula}, we can work the recursion as follows:

\begin{eqnarray}
\hat{\mathbf{x}}_{t\mid T} & \triangleq & \mathbb{E}[\mathbf{x}_{t}  \mid \mathbf{z}_{1}, \ldots, \mathbf{z}_{T} ] \\
&= & \mathbb{E}[ \mathbb{E}[ \mathbf{x}_{t}  \mid \mathbf{x}_{t+1}, \mathbf{z}_{1}, \ldots, \mathbf{z}_{T} ] \mid \mathbf{z}_{1}, \ldots, \mathbf{z}_{T}  ]\\
&= & \mathbb{E}[ \hat{\mathbf{x}}_{t\mid t}  +  \mathbf{P}_{t \mid t} \mathbf{F}_{t+1} ^\mathrm{T}
 \mathbf{P}_{t+1 \mid t}^{-1} (\mathbf{x}_{t+1} - \hat{\mathbf{x}}_{t+1\mid t} ) \mid \mathbf{z}_{1}, \ldots, \mathbf{z}_{T}  ]\\
&= & \hat{\mathbf{x}}_{t\mid t}  +  \mathbf{P}_{t \mid t} \mathbf{F}_{t+1} ^\mathrm{T}
 \mathbf{P}_{t+1 \mid t}^{-1} (\mathbf{x}_{t+1 \mid T} - \hat{\mathbf{x}}_{t+1\mid t} ) \\
& = & \hat{\mathbf{x}}_{t\mid t}  +  \mathbf{L}_{t}(\mathbf{x}_{t+1 \mid T} - \hat{\mathbf{x}}_{t+1\mid t} )
\end{eqnarray}

where 
\begin{eqnarray}
\mathbf{L}_{t} & =\mathbf{P}_{t \mid t} \mathbf{F}_{t+1} ^\mathrm{T} \mathbf{P}_{t+1 \mid t}^{-1}
\end{eqnarray}

The latter equation is the basic update equation in the RTS smoothing algorithm. This equation provides an estimate of $\mathbf{x}_{t}$ based on the filtered estimate $\hat{\mathbf{x}}_{t\mid t}$ corrected by the convolution of $\mathbf{L}_{t}$ with the error term $\mathbf{x}_{t+1 \mid T} - \hat{\mathbf{x}}_{t+1\mid t}$ that represents the difference between the smoothed estimate of $\mathbf{x}_{T}$  and the filtered estimate $\hat{\mathbf{x}}_{t+1\mid t}$.  The matrix $\mathbf{L}_{t}$ can be interpreted as a gain matrix that depends only on forward information. and can be computed in the forward pass.

As for the conditional variance, we have
\begin{eqnarray}
\hat{\mathbf{P}}_{t\mid T} & \triangleq & \operatorname{Var}[ \mathbf{x}_{t}  \mid \mathbf{z}_{1}, \ldots, \mathbf{z}_{T} ] \\
& = & \operatorname{Var}[ \mathbb{E}[ \mathbf{x}_{t}  \mid  \mathbf{x}_{t+1}, \mathbf{z}_{1}, \ldots, \mathbf{z}_{T} ]  \mid \mathbf{z}_{1}, \ldots, \mathbf{z}_{T} ]  \\
& & \qquad \quad  +
\mathbb{E}[ \operatorname{Var}[ \mathbf{x}_{t}  \mid  \mathbf{x}_{t+1}, \mathbf{z}_{1}, \ldots, \mathbf{z}_{T} ]  \mid \mathbf{z}_{1}, \ldots, \mathbf{z}_{T} ] \\
& = & \operatorname{Var}[ \mathbf{x}_{t \mid t}  +  \mathbf{L}_{t}(\mathbf{x}_{t+1 \mid T} - \hat{\mathbf{x}}_{t+1\mid t} )\mid \mathbf{z}_{1}, \ldots, \mathbf{z}_{T} ] \\
& & \qquad \quad + \mathbb{E}[  \mathbf{P}_{t \mid t} - \mathbf{L}_{t}  \mathbf{P}_{t+1 \mid t} \mathbf{L}_{t}^\mathrm{T} \mid \mathbf{z}_{1}, \ldots, \mathbf{z}_{T} ] \\
& = & \mathbf{L}_{t} \operatorname{Var}[ \mathbf{x}_{t \mid t}\mid \mathbf{z}_{1}, \ldots, \mathbf{z}_{T} ] \mathbf{L}_{t}^\mathrm{T} + \mathbf{P}_{t \mid t} - \mathbf{L}_{t}  \mathbf{P}_{t+1 \mid t} \mathbf{L}_{t}^\mathrm{T} \\
& = & \mathbf{L}_{t} \mathbf{P}_{t+1 \mid T} \mathbf{L}_{t}^\mathrm{T} + \mathbf{P}_{t \mid t} - \mathbf{L}_{t}  \mathbf{P}_{t+1 \mid t} \mathbf{L}_{t}^\mathrm{T} \\
& = & \mathbf{P}_{t \mid t} + \mathbf{L}_{t} (\mathbf{P}_{t+1 \mid T}-\mathbf{P}_{t+1 \mid t}) \mathbf{L}_{t}^\mathrm{T} 
\end{eqnarray}

We can summarize the RTS smoothing algorithm as follows:
\begin{eqnarray}
\hat{\mathbf{x}}_{t\mid T} & = & \hat{\mathbf{x}}_{t\mid t}  +  \mathbf{L}_{t}(\mathbf{x}_{t+1 \mid T} - \hat{\mathbf{x}}_{t+1\mid t} ) \\
\hat{\mathbf{P}}_{t\mid T} & = & \mathbf{P}_{t \mid t} + \mathbf{L}_{t} (\mathbf{P}_{t+1 \mid T}-\mathbf{P}_{t+1 \mid t}) \mathbf{L}_{t}^\mathrm{T} 
\end{eqnarray}

with an initial condition given by
\begin{eqnarray}
\hat{\mathbf{x}}_{T \mid T} & = & \hat{\mathbf{x}}_{T} \\
\hat{\mathbf{P}}_{T \mid T} & = & \hat{\mathbf{P}}_{T}
\end{eqnarray}
\end{proof}

\subsection{Proof of the inverse dynamics}\label{proof_inverse_dynamics}
\begin{proof}
We have 
\begin{eqnarray}
\mathbf{x}_{t}  & =&  \mathbf{F}_{t+1}^{-1} \left( \mathbf{x}_{t+1} - \mathbf{B}_{t+1} \mathbf{u}_{t+1} - w_{t+1} \right) \\
& =&  \widetilde{\mathbf{F}}_{t+1} \mathbf{x}_{t+1} + \mathbf{F}_{t+1}^{-1} \left(  (\mathbf{I} - \mathbf{F}_{t+1} \widetilde{\mathbf{F}}_{t+1}) \mathbf{x}_{t+1}  - \mathbf{B}_{t+1} \mathbf{u}_{t+1} - w_{t+1} \right) \\
& =&  \widetilde{\mathbf{F}}_{t+1} \mathbf{x}_{t+1} + \mathbf{F}_{t+1}^{-1} \left( \mathbf{Q}_{t+1} \mathbf{P}_{t+1}^\mathrm{-1}  \mathbf{x}_{t+1}  - \mathbf{B}_{t+1} \mathbf{u}_{t+1} - w_{t+1} \right) \\
& =&  \widetilde{\mathbf{F}}_{t+1} \mathbf{x}_{t+1} - \mathbf{F}_{t+1}^{-1} \mathbf{B}_{t+1} \mathbf{u}_{t+1} - \mathbf{F}_{t+1}^{-1} \left( w_{t+1}  - \mathbf{Q}_{t+1} \mathbf{P}_{t+1}^\mathrm{-1}  \mathbf{x}_{t+1}  \right)
\end{eqnarray}
which provides the first result \ref{inverse_dynamics}.

Using the forward dynamics (equation (\ref{forward_dynamics})) that relates $\mathbf{x}_{t+1}$ and ${w}_{t+1}$, we get that 
\begin{eqnarray}
\mathbb{E}[ \widetilde{w}_{t+1} \widetilde{w}_{t+1}^\mathrm{T} ] &=& \mathbf{F}_{t+1}^{-1} \mathbf{Q}_{t+1} (\mathbf{I} - \mathbf{P}_{t+1}^{-1} \mathbf{Q}_{t+1}) \mathbf{F}_{t+1}^{-\mathrm{T}},
\end{eqnarray}
which proves  \ref{inverse_dynamics_cov}.

The independence between $\widetilde{w}_{t+1}$ and the past information $\mathbf{x}_{t+1}, \ldots, \mathbf{x}_{T}$ is obvious for $\mathbf{x}_{t+2}, \ldots, \mathbf{x}_{T}$ and inferred from the fact that the two processes are Gaussian with zero correlation: $ \mathbb{E}[ \widetilde{w}_{t+1} \mathbf{x}_{t+1}] = -  \mathbf{F}_{t+1}^{-1} (\mathbf{Q}_{t+1} - \mathbf{Q}_{t+1}) = 0 $. 

Last but not least, using the independence, the forward Lyapunov equation is trivially derived, which concludes the proof.
\end{proof}

\subsection{Proof of modified Bryson–Frazier equations}
\label{proof_Bryson_Frazier}
\begin{proof}
Applying the information filter given in proposition \ref{information_filter} with the forward dynamics equation \ref{inverse_dynamics} gives:
\begin{eqnarray}
\widetilde{\mathbf{M}}_t & = & \widetilde{\mathbf{Q}}_{t}^{-1} \widetilde{\mathbf{F}}_{t} (\mathbf{\Lambda}_{t+1 \mid t+1} +\widetilde{\mathbf{F}}_{t}^\mathrm{T}  \widetilde{\mathbf{Q}}_{t}^{-1} \widetilde{\mathbf{F}}_{t})^{-1} \\
\mathbf{\Lambda}_{t \mid t+1} &=&  \widetilde{\mathbf{Q}}_t^{-1} - \widetilde{\mathbf{M}}_t \widetilde{\mathbf{F}}_{t}^\mathrm{T} \widetilde{\mathbf{Q}}_t^{-1} \\
\hat{\eta}_{t \mid t+1} &=& \widetilde{\mathbf{M}}_t  \hat{\eta}_{t+1 \mid t+1} + \mathbf{\Lambda}_{t \mid t+1} \widetilde{\mathbf{B}}_t  \mathbf{u}_t \\
\hat{\eta}_{t \mid t}&=&\hat{\eta}_{t \mid t+1}+\mathbf{H}_{t}^\mathrm{T}\mathbf{R}_{t}^{-1} \mathbf{z}_{t}\\
\mathbf{\Lambda}_{t \mid t} &=&  \mathbf{\Lambda}_{t \mid t+1} +\mathbf{H}_{t}^\mathrm{T} \mathbf{R}_{t}^{-1}\mathbf{H}_{t} 
\end{eqnarray}

We can notice that 
\begin{eqnarray}
\widetilde{\mathbf{Q}}_{t}^{-1} \widetilde{\mathbf{F}}_{t}  & = &  \mathbf{F}_{t}^{T}  \mathbf{Q}_{t}^{-1} \\
\widetilde{\mathbf{F}}_{t}^\mathrm{T}  \widetilde{\mathbf{Q}}_{t}^{-1} \widetilde{\mathbf{F}}_{t} &=&  \mathbf{Q}_{t}^{-1} - \mathbf{P}_{t}^{-1}  \\
\widetilde{\mathbf{F}}_{t}^\mathrm{T} \widetilde{\mathbf{Q}}_{t}^{-1}   & = &  \mathbf{Q}_{t}^{-1} \mathbf{F}_{t}  \\
\end{eqnarray}

Hence, if we rewrite everything in terms of the initial variables, we get
\begin{eqnarray}
\widetilde{\mathbf{M}}_t & = &  \mathbf{F}_{t}^{T}  \mathbf{Q}_{t}^{-1} (\mathbf{\Lambda}_{t+1 \mid t+1} +\ \mathbf{Q}_{t}^{-1} - \mathbf{P}_{t}^{-1})^{-1} \hspace{1.5cm} \\
\mathbf{\Lambda}_{t \mid t+1} &=&  \mathbf{F}_{t}^{T} (\mathbf{Q}_t - \mathbf{Q}_t \mathbf{P}_{t}^{-1} \mathbf{Q}_{t})^{-1}   \mathbf{F}_{t}  - \widetilde{\mathbf{M}}_t \mathbf{Q}_{t}^{-1} \mathbf{F}_{t} \\
\hat{\eta}_{t \mid t+1} &=& \widetilde{\mathbf{M}}_t  \hat{\eta}_{t+1 \mid t+1} - \mathbf{\Lambda}_{t \mid t+1}  \mathbf{F}_{t} \mathbf{B}_t  \mathbf{u}_t \\
\hat{\eta}_{t \mid t}&=&\hat{\eta}_{t \mid t+1}+\mathbf{H}_{t}^\mathrm{T}\mathbf{R}_{t}^{-1} \mathbf{z}_{t}\\
\mathbf{\Lambda}_{t \mid t} &=&  \mathbf{\Lambda}_{t \mid t+1} +\mathbf{H}_{t}^\mathrm{T} \mathbf{R}_{t}^{-1}\mathbf{H}_{t} 
\end{eqnarray}

which concludes the proof.
\end{proof}

\section{A few formula}
\subsection{Inversion matrix}
We state here without proof the various variant of the Woodbury identity as its proof is ubiquous in for instance \cite{Petersen_2012} or in \cite{wiki:Woodbury}

\begin{eqnarray}
\mkern-14mu ( \mathbf{A} + \mathbf{C} \mathbf{B} \mathbf{C}^\mathrm{T} ) ^{-1} &=&  \mathbf{A} ^{-1} -  \mathbf{A} ^{-1}\mathbf{C}( \mathbf{B}^{-1} + \mathbf{C}^\mathrm{T}    \mathbf{A} ^{-1} \mathbf{C})^{-1} \mathbf{C}^\mathrm{T}  \mathbf{A} ^{-1} \qquad \quad
\end{eqnarray}

If $ \mathbf{P},  \mathbf{R}$ are positive definite, we have
\begin{eqnarray}
( \mathbf{P}^{-1} + \mathbf{B}^\mathrm{T}  \mathbf{R}^{-1} \mathbf{B})^{-1}\mathbf{B}^\mathrm{T}  \mathbf{R}^{-1}  & =&  \mathbf{P} \mathbf{B}^\mathrm{T} (\mathbf{B}  \mathbf{P} \mathbf{B}^\mathrm{T} + \mathbf{R})^{-1} \qquad \quad
\end{eqnarray}

\subsection{Conditional Formula}\label{conditional_formula}
We have the following easy formulae for conditional expectation and variance (referred to as the tower equalities)
\begin{equation}
\mathbb{E}\left[X \mid Z \right] = \mathbb{E}\left[\mathbb{E}\left[ X \mid Y, Z \right] \mid Z \right] 
\end{equation}

\begin{equation}
\operatorname{Var}\left[X \mid Z \right] = \operatorname{Var}\left[\mathbb{E}\left[ X \mid Y, Z \right] \mid Z \right] + \mathbb{E}\left[\operatorname{Var}\left[X \mid Y, Z \right] \mid Z \right]
\end{equation}

\begin{proof}
The conditional expectation $\mathbb E[X \mid W = w, Y = y]$ is a function of $w$ and $y$ and can be denoted by $g(w,y)$. 
We can now calculate as follows:

\begin{align*}
\mathbb E [ \mathbb E [X \mid W,Y] \mid Y= y ] & = \mathbb E [g(W,Y) \mid Y = y] \\
&= \sum_{w,y^\prime} g\left(w,y^\prime \right) \Pr \left[\left.W = w, Y = y^\prime \right\vert Y = y\right] \\
&= \sum_w g(w,y) \Pr [W = w \mid Y = y] \\
&= \sum_w \mathbb E [X \mid W=w,Y=y] \Pr [W = w \mid Y = y] \\
&= \sum_w \left[ \sum_x x \Pr [X=x \mid W=w,Y=y] \right] \\
& \qquad \qquad \qquad \Pr [W = w \mid Y = y] \\
&= \sum_{w,x} x \Pr[X=x, W=w \mid Y = y] \\
&= \sum_x x \Pr[X = x \mid Y = y] \\
&= \mathbb E[X \mid Y=y]
\end{align*}

where in our series of equation, the third equality follows from the fact that the conditional probability given by 
$\Pr \left[ \left. W = w, Y = y^\prime  \right\vert Y = y \right]$ is $\Pr[W = w \mid Y= y]$ for $y^\prime = y$ 
and $0$ otherwise and the sixth equality comes from Bayes' rule.

The second equality is trivial and is a consequence of the first and the law of total variance.
\end{proof}

\section{EM Algorithm convergence proofs}
\subsection{First Proof}\label{EMProof1}
\begin{proof}
Expectation-maximization improves the expected complete log likelihood: $Q(\boldsymbol\theta|\boldsymbol\theta^{(t)})$ rather than the complete log likehood: $\log p(\mathbf{X}|\boldsymbol\theta)$. We have for any $\mathbf{Z}$ with non-zero probability $p(\mathbf{Z}|\mathbf{X},\boldsymbol\theta)$ that the log likehood conditional to the parameters $\boldsymbol\theta$ can be split into two parts:

\begin{equation*}
\log p(\mathbf{X}|\boldsymbol\theta) = \log p(\mathbf{X},\mathbf{Z}|\boldsymbol\theta) - \log p(\mathbf{Z}|\mathbf{X},\boldsymbol\theta) \,.
\end{equation*}

Taking the expectation over possible values for the latent variables $\mathbf{Z}$ under the current parameter estimate $\theta^{(t)}$, 
multiplying both sides by $p(\mathbf{Z}|\mathbf{X},\boldsymbol\theta^{(t)})$ and summing (or integrating) over $\mathbf{Z}$, we get:

\begin{equation}\label{EM_eq1}
\begin{aligned}
\log p(\mathbf{X}|\boldsymbol\theta) &
= \sum_{\mathbf{Z}} p(\mathbf{Z}|\mathbf{X},\boldsymbol\theta^{(t)}) \log p(\mathbf{X},\mathbf{Z}|\boldsymbol\theta)
- \sum_{\mathbf{Z}} p(\mathbf{Z}|\mathbf{X},\boldsymbol\theta^{(t)}) \log p(\mathbf{Z}|\mathbf{X},\boldsymbol\theta) \\
& = Q(\boldsymbol\theta|\boldsymbol\theta^{(t)}) + H(\boldsymbol\theta|\boldsymbol\theta^{(t)}) \,,
\end{aligned}
\end{equation}

where $H(\boldsymbol\theta|\boldsymbol\theta^{(t)})$ denotes the negated sum it is replacing. As this last equation holds for any value of $\boldsymbol\theta$ including $\boldsymbol\theta = \boldsymbol\theta^{(t)}$, we have:
\begin{equation}\label{EM_eq2}
\log p(\mathbf{X}|\boldsymbol\theta^{(t)})
= Q(\boldsymbol\theta^{(t)}|\boldsymbol\theta^{(t)}) + H(\boldsymbol\theta^{(t)}|\boldsymbol\theta^{(t)}) \,,
\end{equation}

We can now subtract \ref{EM_eq2} to \ref{EM_eq1} to get:
\begin{equation*}
\log p(\mathbf{X}|\boldsymbol\theta) - \log p(\mathbf{X}|\boldsymbol\theta^{(t)})
= Q(\boldsymbol\theta|\boldsymbol\theta^{(t)}) - Q(\boldsymbol\theta^{(t)}|\boldsymbol\theta^{(t)})
 + H(\boldsymbol\theta|\boldsymbol\theta^{(t)}) - H(\boldsymbol\theta^{(t)}|\boldsymbol\theta^{(t)}) \,,
\end{equation*}

It is easy to conclude using Gibbs' inequality that tells us that $H(\boldsymbol\theta|\boldsymbol\theta^{(t)}) \ge H(\boldsymbol\theta^{(t)}|\boldsymbol\theta^{(t)})$ and get:
\begin{equation*}
\log p(\mathbf{X}|\boldsymbol\theta) - \log p(\mathbf{X}|\boldsymbol\theta^{(t)})
\ge Q(\boldsymbol\theta|\boldsymbol\theta^{(t)}) - Q(\boldsymbol\theta^{(t)}|\boldsymbol\theta^{(t)}) \,.
\end{equation*}
which states that the marginal likelihood at each step is non-decreasing, and hence concludes the proof.
\end{proof}

\subsection{Second Proof}\label{EMProof2}
\begin{proof}
Another proof relies on the fact that the EM algorithm can be viewed as a coordinate ascent method, which is proved to converges monotonically to a local minimum of the function we maximize (see for instance \cite{Hastie_2009} or \cite{Neal_1999}). This works as follows. Let us consider the function:
\begin{equation*}
F(q,\theta) := \operatorname{E}_q [ \log L (\theta ; x,Z) ] + H(q), 
\end{equation*}
where $q$ is an arbitrary probability distribution over latent data $\mathbf{Z}$ and $H(q)$ is the Entropy of the distribution $q$. This can also be written as:

\begin{equation*}
F(q,\theta) = -\mathrm{Div}_{\mathrm{KL}}\big(q \big\| p_{Z|X}(\cdot|x;\theta ) \big) + \log L(\theta;x), 
\end{equation*}
where  $p_{Z|X}(\cdot|x;\theta )$ is the conditional distribution of the latent data given the observed data $\mathbf{X}$ and $\mathrm{Div}_{\mathrm{KL}}$ is the Kullback–Leibler divergence.

In the EM algorithm, the expectation step is to choose $q$ to maximize $F$:
\begin{equation*}
q^{(t)} = \operatorname{arg\,max}_q \ F(q,\theta^{(t)}) 
\end{equation*}
The maximization step is to choose the parameter $\theta$ to maximize $F$:
\begin{equation*}
\theta^{(t+1)} = \operatorname{arg\,max}_\theta \ F(q^{(t)},\theta) 
\end{equation*}

which concludes the proof by showing that the EM algorithm is a coordinate ascent method.
\end{proof}

\newpage
\bibliography{mybib}

\begin{thebibliography}{56}
\providecommand{\natexlab}[1]{#1}
\providecommand{\url}[1]{\texttt{#1}}
\expandafter\ifx\csname urlstyle\endcsname\relax
  \providecommand{\doi}[1]{doi: #1}\else
  \providecommand{\doi}{doi: \begingroup \urlstyle{rm}\Url}\fi

\bibitem[Akimoto et~al.(2015)Akimoto, Auger, and Hansen]{Auger_2015}
Y.~Akimoto, A.~Auger, and N.~Hansen.
\newblock Continuous optimization and {CMA-ES}.
\newblock In \emph{Genetic and Evolutionary Computation Conference, {GECCO}
  2015, Madrid, Spain, July 11-15, 2015, Companion Material Proceedings}, pages
  313--344, 2015.

\bibitem[Akimoto et~al.(2016)Akimoto, Auger, and Hansen]{Auger_2016}
Y.~Akimoto, A.~Auger, and N.~Hansen.
\newblock {CMA-ES} and advanced adaptation mechanisms.
\newblock In \emph{Genetic and Evolutionary Computation Conference, {GECCO}
  2016, Denver, CO, USA, July 20-24, 2016, Companion Material Proceedings},
  pages 533--562, 2016.

\bibitem[Anderson and Moore(1979)]{Anderson_1979}
B.~Anderson and J.~Moore.
\newblock \emph{{Optimal Filtering}}.
\newblock Prentice-Hall, Englewood Cliffs, NJ, 1979.

\bibitem[Auger and Hansen(2005)]{Auger_2005}
A.~Auger and N.~Hansen.
\newblock A restart cma evolution strategy with increasing population size.
\newblock \emph{IEEE Congress on Evolutionary Computation}, pages 1769--1776,
  2005.

\bibitem[Auger and Hansen(2009)]{Auger_2009}
A.~Auger and N.~Hansen.
\newblock Benchmarking the {(1+1)-CMA-ES} on the {BBOB-2009} noisy testbed.
\newblock In \emph{Genetic and Evolutionary Computation Conference, {GECCO}
  2009, Proceedings, Montreal, Qu{\'{e}}bec, Canada, July 8-12, 2009, Companion
  Material}, pages 2467--2472, 2009.

\bibitem[Auger and Hansen(2012)]{Auger_2012}
A.~Auger and N.~Hansen.
\newblock Tutorial {CMA-ES:} evolution strategies and covariance matrix
  adaptation.
\newblock In \emph{Genetic and Evolutionary Computation Conference, {GECCO}
  '12, Philadelphia, PA, USA, July 7-11, 2012, Companion Material Proceedings},
  pages 827--848, 2012.

\bibitem[Auger et~al.(2004)Auger, Schoenauer, and Vanhaecke]{Auger_2004}
A.~Auger, M.~Schoenauer, and N.~Vanhaecke.
\newblock {LS-CMA-ES:} {A} second-order algorithm for covariance matrix
  adaptation.
\newblock In \emph{Parallel Problem Solving from Nature - {PPSN} VIII, 8th
  International Conference, Birmingham, UK, September 18-22, 2004,
  Proceedings}, pages 182--191, 2004.

\bibitem[Bar-Shalom et~al.(2002)Bar-Shalom, Kirubarajan, and
  Li]{Bar-Shalom_2002}
Y.~Bar-Shalom, T.~Kirubarajan, and X.-R. Li.
\newblock \emph{Estimation with Applications to Tracking and Navigation}.
\newblock John Wiley \& Sons, Inc., New York, NY, USA, 2002.
\newblock ISBN 0471221279.

\bibitem[Benhamou(2017)]{Benhamou_2017}
E.~Benhamou.
\newblock Trend without hiccups: a kalman filter approach.
\newblock \emph{IFTA Journal}, pages 38--46, 2017.

\bibitem[Benhamou(2018)]{Benhamou_2018_CMAES}
E.~Benhamou.
\newblock Optimal parameter inference for bayesian graphical models.
\newblock \emph{ArXiv}, November 2018.

\bibitem[Bierman(2006)]{Bierman_2006}
G.~Bierman.
\newblock \emph{Factorization Methods for Discrete Sequential Estimation}.
\newblock Dover Books on Mathematics Series. Dover Publications, 2006.
\newblock ISBN 9780486449814.

\bibitem[Brown and Hwang(1997)]{Brown_1997}
R.~G. Brown and P.~Y.~C. Hwang.
\newblock \emph{{Introduction to random signals and applied kalman filtering:
  with MATLAB exercises and solutions; 3rd ed.}}
\newblock Wiley, New York, NY, 1997.

\bibitem[Bruder et~al.(2011)Bruder, Dao, Richard, and Roncalli]{Roncalli_2011}
B.~Bruder, T.~Dao, J.~Richard, and T.~Roncalli.
\newblock Trend filtering methods for momentum strategies.
\newblock \emph{SSRN paper}, 2011.
\newblock URL \url{http://papers.ssrn.com/sol3/papers.cfm?abstract_id=2289097}.

\bibitem[Bui et~al.(2002)Bui, Venkatesh, and West]{Bui_2001}
H.~H. Bui, S.~Venkatesh, and G.~West.
\newblock Hidden markov models.
\newblock chapter Tracking and Surveillance in Wide-area Spatial Environments
  Using the Abstract Hidden Markov Model, pages 177--196. World Scientific
  Publishing Co., Inc., River Edge, NJ, USA, 2002.
\newblock ISBN 981-02-4564-5.
\newblock URL \url{http://dl.acm.org/citation.cfm?id=505741.505750}.

\bibitem[Cappe et~al.(2010)Cappe, Moulines, and Ryden]{Cappe_2010}
O.~Cappe, E.~Moulines, and T.~Ryden.
\newblock \emph{Inference in Hidden Markov Models}.
\newblock Springer Publishing Company, Incorporated, 2010.
\newblock ISBN 1441923195, 9781441923196.

\bibitem[Chan(2013)]{Chan_2013}
C.~Chan.
\newblock \emph{Algorithmic Trading: Winning Strategies and Their Rationale}.
\newblock Wiley Finance, 2013.

\bibitem[Dao.(2011)]{Dao_2011}
T.~Dao.
\newblock Momentum strategies: From novel estimation techniques to financial
  applications.
\newblock \emph{SSRN paper}, 2011.
\newblock URL \url{http://papers.ssrn.com/sol3/papers.cfm?abstract_id=2358988}.

\bibitem[Dean and Kanazawa(1989)]{Dean_1989}
T.~Dean and K.~Kanazawa.
\newblock A model for reasoning about persistence and causation.
\newblock \emph{Comput. Intell.}, 5\penalty0 (3):\penalty0 142--150, Dec. 1989.
\newblock ISSN 0824-7935.
\newblock \doi{10.1111/j.1467-8640.1989.tb00324.x}.
\newblock URL \url{http://dx.doi.org/10.1111/j.1467-8640.1989.tb00324.x}.

\bibitem[Dempster et~al.(1977)Dempster, Laird, and Rubin]{Dempster_1977}
A.~P. Dempster, N.~M. Laird, and D.~B. Rubin.
\newblock Maximum likelihood from incomplete data via the {EM} algorithm.
\newblock \emph{JOURNAL OF THE ROYAL STATISTICAL SOCIETY, SERIES B},
  39\penalty0 (1):\penalty0 1--38, 1977.

\bibitem[Durbin and Koopman(2012)]{Durbin_2012}
J.~Durbin and S.~Koopman.
\newblock \emph{Time Series Analysis by State Space Methods}.
\newblock Oxford Statistical Science Series. OUP Oxford, 2012.
\newblock ISBN 9780191627194.
\newblock URL \url{https://books.google.fr/books?id=lGyshsfkLrIC}.

\bibitem[Einicke et~al.(2010)Einicke, Falco, and J.T]{Einicke_2010}
G.~Einicke, G.~Falco, and M.~J.T.
\newblock Em algorithm state matrix estimation for navigation.
\newblock \emph{IEEE SIGNAL PROCESSING LETTERS}, 17:\penalty0
  437{\textendash}440, 05/2010 2010.
\newblock URL \url{http://porto.polito.it/2363342/}.

\bibitem[Fine et~al.(1998)Fine, Singer, and Tishby]{Fine_1998}
S.~Fine, Y.~Singer, and N.~Tishby.
\newblock The hierarchical hidden markov model: Analysis and applications.
\newblock \emph{Mach. Learn.}, 32\penalty0 (1):\penalty0 41--62, July 1998.
\newblock ISSN 0885-6125.
\newblock \doi{10.1023/A:1007469218079}.
\newblock URL \url{http://dx.doi.org/10.1023/A:1007469218079}.

\bibitem[Gelb(1974)]{Gelb_1986}
A.~Gelb.
\newblock \emph{Applied Optimal Estimation}.
\newblock The MIT Press, 1974.
\newblock ISBN 0262570483, 9780262570480.

\bibitem[Ghahramani and Jordan(1994)]{Ghahramani_Jordan_1994}
Z.~Ghahramani and M.~I. Jordan.
\newblock Supervised learning from incomplete data via an em approach.
\newblock In J.~D. Cowan, G.~Tesauro, and J.~Alspector, editors, \emph{Advances
  in Neural Information Processing Systems 6}, pages 120--127. Morgan-Kaufmann,
  1994.
\newblock URL
  \url{http://papers.nips.cc/paper/767-supervised-learning-from-incomplete-data-via-an-em-approach.pdf}.

\bibitem[Hansen(2006)]{Hansen_2006}
N.~Hansen.
\newblock The cma evolution strategy: a comparing review.
\newblock \emph{Towards a new evolutionary computation. Advances on estimation
  of distribution algorithms}, 2006.

\bibitem[Hansen and Auger(2011)]{Hansen_2011}
N.~Hansen and A.~Auger.
\newblock {CMA-ES:} evolution strategies and covariance matrix adaptation.
\newblock In \emph{13th Annual Genetic and Evolutionary Computation Conference,
  {GECCO} 2011, Companion Material Proceedings, Dublin, Ireland, July 12-16,
  2011}, pages 991--1010, 2011.

\bibitem[Hansen and Auger(2014)]{Hansen_2014}
N.~Hansen and A.~Auger.
\newblock Evolution strategies and {CMA-ES} (covariance matrix adaptation).
\newblock In \emph{Genetic and Evolutionary Computation Conference, {GECCO}
  '14, Vancouver, BC, Canada, July 12-16, 2014, Companion Material
  Proceedings}, pages 513--534, 2014.

\bibitem[Hartley(958)]{Hartley_1958}
H.~. Hartley.
\newblock Maximum likelihood estimation from incomplete data.
\newblock \emph{Biometrics}, \penalty0 (14):\penalty0 174--194, 958.

\bibitem[Hastie et~al.(2009)Hastie, Tibshirani, and Friedman]{Hastie_2009}
T.~Hastie, R.~Tibshirani, and J.~H. Friedman.
\newblock \emph{The elements of statistical learning: data mining, inference,
  and prediction, 2nd Edition}.
\newblock Springer series in statistics. Springer, 2009.

\bibitem[Igel et~al.(2007)Igel, Hansen, and Roth]{Igel_2007}
C.~Igel, N.~Hansen, and S.~Roth.
\newblock Covariance matrix adaptation for multi-objective optimization.
\newblock \emph{Evol. Comput.}, 15\penalty0 (1):\penalty0 1--28, Mar. 2007.
\newblock ISSN 1063-6560.

\bibitem[Javaheri et~al.(2003)Javaheri, Galli, and Lautier]{Lautier_2003}
A.~Javaheri, A.~Galli, and D.~Lautier.
\newblock Filtering in finance.
\newblock \emph{Wilmott}, \penalty0 (5):\penalty0 67--83, 2003.

\bibitem[Jordan(2016)]{Jordan_2016}
M.~I. Jordan.
\newblock \emph{An introduction to probabilistic graphical models}.
\newblock Berkeley, 2016.
\newblock URL \url{http://people.eecs.berkeley.edu/~jordan/prelims/}.

\bibitem[Kalman(1960)]{Kalman_1960}
R.~E. Kalman.
\newblock {A new approach to linear filtering and prediction problems}.
\newblock \emph{Trans. ASME, D}, 82:\penalty0 35--44, 1960.

\bibitem[Kalman and Bucy(1961)]{Kalman_1961}
R.~E. Kalman and R.~S. Bucy.
\newblock {New results in linear filtering and prediction theory}.
\newblock \emph{Trans. ASME, D}, 83:\penalty0 95--108, 1961.

\bibitem[Kitagawa(1987)]{Kitagawa_1987}
G.~Kitagawa.
\newblock Non-gaussian state-space modeling of nonstationary time series.
\newblock \emph{Journal of the American Statistical Association}, 82:\penalty0
  1032--1063, 1987.

\bibitem[Koller and Friedman(2009)]{Koller_2009}
D.~Koller and N.~Friedman.
\newblock \emph{Probabilistic Graphical Models: Principles and Techniques -
  Adaptive Computation and Machine Learning}.
\newblock The MIT Press, 2009.
\newblock ISBN 0262013193, 9780262013192.

\bibitem[Lautier(2004)]{lautier_2004}
D.~Lautier.
\newblock {Simple and extended Kalman filters : an application to term
  structures of commodity prices}.
\newblock \emph{{Applied Financial Economics}}, 14\penalty0 (13):\penalty0
  963--974, Sept. 2004.
\newblock URL \url{https://halshs.archives-ouvertes.fr/halshs-00152998}.

\bibitem[Murphy(2002)]{Murphy_2002}
K.~P. Murphy.
\newblock \emph{Dynamic Bayesian Networks: Representation, Inference and
  Learning}.
\newblock PhD thesis, 2002.
\newblock AAI3082340.

\bibitem[Murphy(2013)]{Murphy_2013}
K.~P. Murphy.
\newblock \emph{Machine learning : a probabilistic perspective}.
\newblock MIT Press, Aug. 2013.
\newblock ISBN 0262018020.
\newblock URL
  \url{http://www.amazon.com/exec/obidos/redirect?tag=citeulike07-20\&path=ASIN/0262018020}.

\bibitem[Murphy and Paskin(2001)]{Murphy_2001}
K.~P. Murphy and M.~A. Paskin.
\newblock Linear time inference in hierarchical hmms.
\newblock In \emph{Proceedings of the 14th International Conference on Neural
  Information Processing Systems: Natural and Synthetic}, NIPS'01, pages
  833--840, Cambridge, MA, USA, 2001. MIT Press.
\newblock URL \url{http://dl.acm.org/citation.cfm?id=2980539.2980647}.

\bibitem[Neal and Hinton(1999)]{Neal_1999}
R.~M. Neal and G.~E. Hinton.
\newblock Learning in graphical models.
\newblock chapter A View of the EM Algorithm That Justifies Incremental,
  Sparse, and Other Variants, pages 355--368. MIT Press, Cambridge, MA, USA,
  1999.
\newblock ISBN 0-262-60032-3.
\newblock URL \url{http://dl.acm.org/citation.cfm?id=308574.308679}.

\bibitem[Ollivier et~al.(2017)Ollivier, Arnold, Auger, and
  Hansen]{Ollivier_2017}
Y.~Ollivier, L.~Arnold, A.~Auger, and N.~Hansen.
\newblock Information-geometric optimization algorithms: A unifying picture via
  invariance principles.
\newblock \emph{J. Mach. Learn. Res.}, 18\penalty0 (1):\penalty0 564--628, Jan.
  2017.
\newblock ISSN 1532-4435.

\bibitem[Ostendorf et~al.(1996)Ostendorf, Digalakis, and
  Kimball]{Ostendorf_1996}
M.~Ostendorf, V.~V. Digalakis, and O.~A. Kimball.
\newblock From hmm's to segment models: a unified view of stochastic modeling
  for speech recognition.
\newblock \emph{IEEE Transactions on Speech and Audio Processing}, 4\penalty0
  (5):\penalty0 360--378, Sept 1996.
\newblock ISSN 1063-6676.
\newblock \doi{10.1109/89.536930}.

\bibitem[Petersen and Pedersen(2012)]{Petersen_2012}
K.~B. Petersen and M.~S. Pedersen.
\newblock \emph{The Matrix Cookbook}.
\newblock Technical University of Denmark, nov 2012.
\newblock URL \url{http://www2.imm.dtu.dk/pubdb/p.php?3274}.
\newblock Version 20121115.

\bibitem[Rabiner(1989)]{Rabiner_1989}
L.~R. Rabiner.
\newblock A tutorial on hidden markov models and selected applications in
  speech recognition.
\newblock pages 257--286, 1989.

\bibitem[Rabiner and Juang(1986)]{Rabiner_1986}
L.~R. Rabiner and B.~H. Juang.
\newblock An introduction to hidden markov models.
\newblock \emph{IEEE ASSp Magazine}, 1986.

\bibitem[Rauch et~al.(1965)Rauch, Tung, and Striebel]{Rauch_1965}
H.~E. Rauch, F.~Tung, and C.~T. Striebel.
\newblock {Maximum likelihood estimates of linear dynamic systems}.
\newblock \emph{AIAA Journal}, 3\penalty0 (8):\penalty0 1445--1450, Aug. 1965.

\bibitem[Russell and Norvig(2009)]{Russell_2009}
S.~Russell and P.~Norvig.
\newblock \emph{Artificial Intelligence: A Modern Approach}.
\newblock Prentice Hall Press, Upper Saddle River, NJ, USA, 3rd edition, 2009.
\newblock ISBN 0136042597, 9780136042594.

\bibitem[Saul and Jordan(1999)]{Saul_1999}
L.~K. Saul and M.~I. Jordan.
\newblock Mixed memory markov models: Decomposing complex stochastic processes
  as mixtures of simpler ones.
\newblock \emph{Machine Learning}, 37\penalty0 (1), Oct 1999.
\newblock ISSN 1573-0565.
\newblock \doi{10.1023/A:1007649326333}.
\newblock URL \url{https://doi.org/10.1023/A:1007649326333}.

\bibitem[Simon(2006)]{Simon_2006}
D.~Simon.
\newblock \emph{Optimal State Estimation: Kalman, H Infinity, and Nonlinear
  Approaches}.
\newblock Wiley-Interscience, New York, NY, USA, 2006.
\newblock ISBN 0471708585.

\bibitem[Smyth et~al.(1997)Smyth, Heckerman, and Jordan]{Smyth_1997}
P.~Smyth, D.~Heckerman, and M.~I. Jordan.
\newblock Probabilistic independence networks for hidden markov probability
  models.
\newblock \emph{Neural Comput.}, 9\penalty0 (2):\penalty0 227--269, Feb. 1997.
\newblock ISSN 0899-7667.
\newblock \doi{10.1162/neco.1997.9.2.227}.
\newblock URL \url{http://dx.doi.org/10.1162/neco.1997.9.2.227}.

\bibitem[Thornton(1976)]{Thornton_1979}
C.~Thornton.
\newblock \emph{Triangular Covariance Factorizations for Kalman Filtering}.
\newblock NASA-CR-149 147. University of California, Los Angeles--Engineering,
  1976.

\bibitem[Thrun et~al.(2005)Thrun, Burgard, and Fox]{Thrun_2005}
S.~Thrun, W.~Burgard, and D.~Fox.
\newblock \emph{Probabilistic Robotics (Intelligent Robotics and Autonomous
  Agents)}.
\newblock The MIT Press, 2005.
\newblock ISBN 0262201623.

\bibitem[Varelas et~al.(2018)Varelas, Auger, Brockhoff, Hansen, ElHara, Semet,
  Kassab, and Barbaresco]{Hansen_2018}
K.~Varelas, A.~Auger, D.~Brockhoff, N.~Hansen, O.~A. ElHara, Y.~Semet,
  R.~Kassab, and F.~Barbaresco.
\newblock A comparative study of large-scale variants of {CMA-ES}.
\newblock In \emph{Parallel Problem Solving from Nature - {PPSN} {XV} - 15th
  International Conference, Coimbra, Portugal, September 8-12, 2018,
  Proceedings, Part {I}}, pages 3--15, 2018.

\bibitem[Wikipedia(2016)]{wiki:Woodbury}
Wikipedia.
\newblock Woodbury matrix identity, 2016.
\newblock URL \url{https://en.wikipedia.org/wiki/Woodbury_matrix_identity}.

\bibitem[Wikipedia(2018)]{wiki:CMAES}
Wikipedia.
\newblock Cma-es, 2018.
\newblock URL \url{https://en.wikipedia.org/wiki/CMA-ES}.

\end{thebibliography}

\end{document}